\documentclass[11pt,a4paper]{article}
\pdftrailerid{}
\usepackage[
  margin=2.5cm,
  top=2.5cm,
  bottom=2.5cm
]{geometry}

\usepackage{amsmath,amssymb,amsthm}
\usepackage{mathtools}
\usepackage{bm}

\usepackage{graphicx}
\usepackage{booktabs}
\usepackage{longtable}
\usepackage{tabularx}
\usepackage{array}
\usepackage{multirow}
\usepackage{colortbl}

\usepackage{tikz}
\usetikzlibrary{arrows.meta,positioning,fit,calc}
\usepackage{quantikz}

\usepackage{algorithm}
\usepackage{algpseudocode}

\usepackage{braket}

\usepackage[section]{placeins}
\usepackage{flafter}
\usepackage{needspace}

\usepackage[numbers,sort&compress]{natbib}

\usepackage{xurl}
\usepackage[colorlinks=true,linkcolor=blue!60!black,citecolor=green!50!black,urlcolor=blue!60!black]{hyperref}
\hypersetup{pdfcreator={},pdfproducer={}}
\usepackage{orcidlink}

\usepackage{xcolor}
\newcommand{\TableRowsSetup}{\rowcolors{1}{white}{black!5}}
\newcommand{\TableHideRows}{\hiderowcolors}
\newcommand{\TableBodyRows}{\showrowcolors}
\newcommand{\TableGroupRow}{\rowcolor{black!10}}

\usepackage[nopatch=footnote]{microtype}
\expandafter\newcommand\csname JointSeqNoisePercent0\endcsname{0}
\expandafter\newcommand\csname JointSeqSeqFOne0\endcsname{0.833}
\expandafter\newcommand\csname JointSeqJointFOne0\endcsname{0.833}
\expandafter\newcommand\csname JointSeqDeltaFOne0\endcsname{+0.000}
\expandafter\newcommand\csname JointSeqRelGain0\endcsname{+0\%}
\expandafter\newcommand\csname JointSeqNoisePercent20\endcsname{20}
\expandafter\newcommand\csname JointSeqSeqFOne20\endcsname{0.854}
\expandafter\newcommand\csname JointSeqJointFOne20\endcsname{0.885}
\expandafter\newcommand\csname JointSeqDeltaFOne20\endcsname{+0.031}
\expandafter\newcommand\csname JointSeqRelGain20\endcsname{+4\%}
\expandafter\newcommand\csname JointSeqNoisePercent40\endcsname{40}
\expandafter\newcommand\csname JointSeqSeqFOne40\endcsname{0.821}
\expandafter\newcommand\csname JointSeqJointFOne40\endcsname{0.955}
\expandafter\newcommand\csname JointSeqDeltaFOne40\endcsname{+0.134}
\expandafter\newcommand\csname JointSeqRelGain40\endcsname{+16\%}
\expandafter\newcommand\csname JointSeqNoisePercent50\endcsname{50}
\expandafter\newcommand\csname JointSeqSeqFOne50\endcsname{0.791}
\expandafter\newcommand\csname JointSeqJointFOne50\endcsname{1.000}
\expandafter\newcommand\csname JointSeqDeltaFOne50\endcsname{+0.209}
\expandafter\newcommand\csname JointSeqRelGain50\endcsname{+26\%}
\expandafter\newcommand\csname JointSeqNoisePercent60\endcsname{60}
\expandafter\newcommand\csname JointSeqSeqFOne60\endcsname{0.752}
\expandafter\newcommand\csname JointSeqJointFOne60\endcsname{0.944}
\expandafter\newcommand\csname JointSeqDeltaFOne60\endcsname{+0.192}
\expandafter\newcommand\csname JointSeqRelGain60\endcsname{+26\%}

\newcommand{\JointSeqMinDeltaFOnePct}{3.1}
\newcommand{\JointSeqNSamples}{20}
\newcommand{\JointSeqNFeatures}{10}
\newcommand{\JointSeqKAnomalies}{3}
\newcommand{\JointSeqMFeatures}{5}
\newcommand{\JointSeqNSeeds}{200}

\newcommand{\JointSeqPracticalMaxDeltaFOnePct}{20.9}

\expandafter\newcommand\csname QAOADepthP1MeanAlpha\endcsname{0.688}
\expandafter\newcommand\csname QAOADepthP1PGeqZeroPointSevenFivePct\endcsname{20\%}
\expandafter\newcommand\csname QAOADepthP2MeanAlpha\endcsname{0.722}
\expandafter\newcommand\csname QAOADepthP2PGeqZeroPointSevenFivePct\endcsname{28\%}
\expandafter\newcommand\csname QAOADepthP3MeanAlpha\endcsname{0.754}
\expandafter\newcommand\csname QAOADepthP3PGeqZeroPointSevenFivePct\endcsname{56\%}
\expandafter\newcommand\csname QAOADepthP4MeanAlpha\endcsname{0.774}
\expandafter\newcommand\csname QAOADepthP4PGeqZeroPointSevenFivePct\endcsname{64\%}

\newcommand{\PenaltyXCandidateEvals}{8}
\newcommand{\PenaltyXShotsPerEval}{512}
\newcommand{\PenaltyXFinalShots}{2{,}048}

\newcommand{\CGXYGroupedComparisonCount}{264}
\newcommand{\CGXYGroupedSameDepthWinCount}{264}
\newcommand{\CGXYGroupedMatchedDepthWinCount}{3}
\newcommand{\CGXYGroupedRecoveryMedianRangePct}{30--36\%}
\newcommand{\CGXYGroupingNullCaseCount}{3{,}072}
\newcommand{\CGXYGroupingNullTypePreservingMedianPercentilePct}{97.0\%}
\newcommand{\CGXYGroupingNullTypePreservingAboveQNinetyFivePct}{54.8\%}
\newcommand{\CGXYGroupingNullTypePreservingGainOverMean}{2.94}
\newcommand{\CGXYGroupingNullSameSizeMedianPercentilePct}{30.5\%}
\newcommand{\CGXYGroupingNullSameSizeAboveQNinetyFivePct}{8.1\%}
\newcommand{\CGXYGroupingNullSameSizeGainOverMean}{0.724}
\newcommand{\CreditWarmStartGapClosedRangePct}{78.1\%--93.3\%}
\newcommand{\CreditLargeEndpointSixtyFourQPTwoFeasPct}{25.5\%}
\newcommand{\CreditLargeEndpointSixtyFourQPTwoChanceRatio}{1{,}875$\times$}
\newcommand{\CreditLargeEndpointThirtySixQPThreeFeasPct}{47.4\%}
\newcommand{\CreditLargeEndpointThirtySixQPThreeChanceRatio}{218$\times$}

\newcommand{\StageOneTurnonLogregRecallRangePct}{36.0\%--79.5\%}
\newcommand{\StageOneTurnonLogregEnrichRange}{7.95--14.39}

\expandafter\newcommand\csname SparsWinRateThreshold50Couplingweighted\endcsname{40\%}
\expandafter\newcommand\csname SparsWinRateThreshold50Standard\endcsname{40\%}

\expandafter\newcommand\csname SparsWinRateTopk50Couplingweighted\endcsname{0\%}

\expandafter\newcommand\csname SparsWinRateTopk50Standard\endcsname{50\%}

\newcommand{\HardwareAlignedCompileCellCount}{40}
\newcommand{\HardwareAlignedTwoQCountReductionRangePct}{42.3\%--82.6\%}
\newcommand{\HardwareAlignedTwoQDepthReductionRangePct}{54.5\%--84.8\%}
\newcommand{\CreditWideSparseDenseGainRetainedRangePct}{47.1\%--67.7\%}
\newcommand{\CompilePathDepthReductionRangePct}{35.1\%--54.8\%}
\newcommand{\CompilePathDepthReductionMedianPct}{47.4\%}

\newcommand{\CompilePathNativeTwoQCountReductionRangePct}{1.4\%--3.6\%}

\newcommand{\CompilePathDurationReductionRangePct}{12.5\%--29.2\%}
\newcommand{\CompilePathDurationReductionMedianPct}{20.0\%}

\newcommand{\CompilePathMatchedPointCount}{26}
\newcommand{\CompilePathTranspilerSeedCount}{10}

\newcommand{\HWCompBlockxyQubits}{12}

\newcommand{\HWCompBlockxyFeasComp}{100\%}
\newcommand{\HWCompBlockxyFeasCompRaw}{41\%}

\expandafter\newcommand\csname HWReadyAphroditeMedianDeltaDepthBipartiteDenseP2XXYYBlockXY\endcsname{242}
\expandafter\newcommand\csname HWReadyAphroditeMedianDeltaTwoQBipartiteDenseP2XXYYBlockXY\endcsname{118}
\expandafter\newcommand\csname HWReadyAphroditeMedianDeltaDepthBipartiteDenseP2XXYYRingXY\endcsname{192}
\expandafter\newcommand\csname HWReadyAphroditeMedianDeltaTwoQBipartiteDenseP2XXYYRingXY\endcsname{118}
\expandafter\newcommand\csname HWReadyIBMMiamiSnapshotMedianDeltaDepthBipartiteDenseP2CZNativeBlockXY\endcsname{279}
\expandafter\newcommand\csname HWReadyIBMMiamiSnapshotMedianDeltaTwoQBipartiteDenseP2CZNativeBlockXY\endcsname{130}
\expandafter\newcommand\csname HWReadyIBMMiamiSnapshotMedianDeltaDepthBipartiteDenseP2CZNativeRingXY\endcsname{270}
\expandafter\newcommand\csname HWReadyIBMMiamiSnapshotMedianDeltaTwoQBipartiteDenseP2CZNativeRingXY\endcsname{135}
\expandafter\newcommand\csname HWReadyRepDeltaDepthBipartiteDenseAphroditeXXYYN6D6P2BlockXY\endcsname{342}
\expandafter\newcommand\csname HWReadyRepDeltaTwoQBipartiteDenseAphroditeXXYYN6D6P2BlockXY\endcsname{180}
\expandafter\newcommand\csname HWReadyRepDeltaDepthBipartiteDenseAphroditeXXYYN6D6P2RingXY\endcsname{283}
\expandafter\newcommand\csname HWReadyRepDeltaTwoQBipartiteDenseAphroditeXXYYN6D6P2RingXY\endcsname{181}
\expandafter\newcommand\csname HWReadyRepDeltaDepthBipartiteDenseIBMMiamiSnapshotCZNativeN6D6P2BlockXY\endcsname{355}
\expandafter\newcommand\csname HWReadyRepDeltaTwoQBipartiteDenseIBMMiamiSnapshotCZNativeN6D6P2BlockXY\endcsname{193}
\expandafter\newcommand\csname HWReadyRepDeltaDepthBipartiteDenseIBMMiamiSnapshotCZNativeN6D6P2RingXY\endcsname{369}
\expandafter\newcommand\csname HWReadyRepDeltaTwoQBipartiteDenseIBMMiamiSnapshotCZNativeN6D6P2RingXY\endcsname{196}

\newcommand{\CalibPerturbNSeeds}{64}

\expandafter\newcommand\csname CalibPerturbJointCombinedNoise50EtaOne\endcsname{0.993}
\expandafter\newcommand\csname CalibPerturbSeqCombinedNoise50EtaOne\endcsname{0.710}
\expandafter\newcommand\csname CalibPerturbJointFeatureNoise50EtaOne\endcsname{0.987}
\expandafter\newcommand\csname CalibPerturbSeqFeatureNoise50EtaOne\endcsname{0.686}
\expandafter\newcommand\csname CalibPerturbJointExactNoise50EtaOnePct\endcsname{93.4\%}
\expandafter\newcommand\csname CalibPerturbSeqExactNoise50EtaOnePct\endcsname{5.4\%}
\newcommand{\StructuredCalibModesNSeeds}{64}

\newcommand{\RealCalibCreditNSeeds}{25}

\newcommand{\CreditWideRetainedCrossMassRangePct}{2.1\%--4.6\%}

\newcommand{\CreditTwentyQPFiveTiedBKRatePct}{0.14\%}
\newcommand{\CreditTwentyQPFiveFullyGroupedBKRatePct}{3.7\%}
\newcommand{\CreditTwentyQPFiveFullyGroupedBKRateFold}{25-fold}
\newcommand{\CreditTwentyQPFiveFullyGroupedFeasPct}{64.7\%}
\newcommand{\CreditWideFixedChanceRatioMin}{161}
\newcommand{\CreditWideFixedChanceRatioMax}{5{,}238}
\newcommand{\CGXYFullyGroupedQPUHitInstances}{8/8}
\newcommand{\CGXYFullyGroupedQPUThresholdHits}{823}
\newcommand{\CGXYFullyGroupedQPUTotalShots}{327{,}680}
\newcommand{\CGXYFullyGroupedQPUMeanFeasPct}{67.4\%}

\newcommand{\CalibPerturbJointCombinedNoiseFiftyEtaOne}{\csname CalibPerturbJointCombinedNoise50EtaOne\endcsname}
\newcommand{\CalibPerturbSeqCombinedNoiseFiftyEtaOne}{\csname CalibPerturbSeqCombinedNoise50EtaOne\endcsname}
\newcommand{\CalibPerturbJointExactNoiseFiftyEtaOnePct}{\csname CalibPerturbJointExactNoise50EtaOnePct\endcsname}
\newcommand{\CalibPerturbSeqExactNoiseFiftyEtaOnePct}{\csname CalibPerturbSeqExactNoise50EtaOnePct\endcsname}

\DeclareMathOperator*{\argmax}{arg\,max}

\DeclareMathOperator{\HW}{HW}

\newcommand{\vecs}{\bm{s}}           
\newcommand{\vecf}{\bm{f}}           
\newcommand{\matX}{\bm{X}}           
\newcommand{\matZ}{\bm{Z}}           
\newcommand{\matQ}{\bm{Q}}           
\newcommand{\matW}{\bm{W}}           
\newcommand{\Ham}{H}                  
\newcommand{\Hmix}{H_{\mathrm{mix}}}  
\newcommand{\Hxy}{H_{\mathrm{XY}}}    
\newcommand{\qubo}{QUBO}
\newcommand{\qaoa}{QAOA}
\newcommand{\scaps}[1]{\textnormal{\textsc{#1}}}
\newcommand{\mainref}[1]{\ref{#1}}

\newtheorem{theorem}{Theorem}
\newtheorem{lemma}[theorem]{Lemma}
\newtheorem{proposition}[theorem]{Proposition}
\newtheorem{corollary}[theorem]{Corollary}

\title{Coupling-Grouped XY-\qaoa{} for\\[0.3em]
Joint Anomaly-Feature Selection}

\author{%
  Pauli Taipale\,\orcidlink{0009-0000-0280-3243}\\
  OP Lab, OP Pohjola\\
  \texttt{pauli.taipale@op.fi}
}

\date{}

\begin{document}

\maketitle

\begin{abstract}
Selecting anomalous samples and explanatory features under fixed cardinalities defines
a coupled optimization problem: feature-first selection can overlook features whose
usefulness depends on the selected samples. In the
analyzed model, joint calibration-error sensitivity is sample-count independent,
while the sum-column feature-first rule's sensitivity grows linearly. To optimize this formulation, we
introduce Coupling-Grouped XY Quantum Approximate Optimization Algorithm
(CG-XY-QAOA), with constraint-preserving mixers and block-structured phase angles. On
matched sparse IBM Heron R3 circuits, our implementation reduces circuit depth by
\CompilePathDepthReductionRangePct{} against Qiskit's optimization-level~3
preset pass manager on the CZ-basis target and enables runs at 64 decision qubits (\(p=2\)) and 36
(\(p=3\)). At 20 decision qubits (\(p=5\)), fully grouped angles raise the
fixed low-energy hit probability \CreditTwentyQPFiveFullyGroupedBKRateFold{} to
\CreditTwentyQPFiveFullyGroupedBKRatePct{}, with
\CreditTwentyQPFiveFullyGroupedFeasPct{} of measured samples satisfying both
constraints.
Benchmarks favor joint selection when feature usefulness varies across anomalies.
Noiseless simulations show problem-structured grouping improves over
same-depth XY-\qaoa{} and parameter-matched type-preserving randomizations.
\end{abstract}

\noindent\textbf{Keywords:} constrained combinatorial optimization, quantum approximate optimization algorithm, quantum circuit compilation, anomaly detection

\section{Introduction}
\label{sec:intro}

Many applications require exact-cardinality selection from two interacting sets.
An exact budget means choosing exactly \(k\) items on one side and \(m\) items
on the other. Examples include anomaly-feature selection, drug-target screening,
protein-module selection, sensor-event coverage, facility-customer assignment,
and enzyme-substrate pathway selection. We call this structure
\emph{constrained bipartite selection}. Its quadratic objective can be written in
Quadratic Unconstrained Binary Optimization (QUBO) or Ising
form~\cite{hammer1969pseudoboolean,lucas2014ising}.

The primary motivating application is joint anomaly-feature selection. In
high-dimensional data one must identify \emph{both} which samples are anomalous
\emph{and} which features are informative. Feature-first approaches separate
these decisions by selecting features before detecting anomalies.
This can fail when feature utility is case-dependent. A feature may be
uninformative on average yet decisive for a particular anomaly subtype.
Ranking features before selecting samples can therefore discard informative
sample--feature interactions.
The joint formulation is most relevant when multiple anomalies and features
must be selected and feature utility varies across samples. In the tested
boundary cases, its advantage vanishes at zero feature noise and can reverse
when \(k=1\).

We propose a \emph{joint} formulation that optimizes the sample and feature sets
simultaneously. We use $[N]=\{1,\ldots,N\}$ and $[D]=\{1,\ldots,D\}$. Let
$\matW \in \mathbb{R}^{N \times D}$ encode bipartite interaction weights between
samples and features, and let $\{a_i\}_{i\in[N]}$ and $\{b_j\}_{j\in[D]}$ denote
marginal scores. We optimize
\begin{equation}
  \argmax_{\substack{S \subseteq [N],\, \lvert S \rvert = k \\[2pt]
                     F \subseteq [D],\, \lvert F \rvert = m}}
  \rho \left( \sum_{i \in S} a_i + \sum_{j \in F} b_j \right)
  + (1-\rho) \sum_{i \in S} \sum_{j \in F} W_{ij}
  \label{eq:objective}
\end{equation}

The weight \(\rho\) balances marginal scores against the pairwise interaction
term. The bilinear term is the only term whose value depends jointly on the
selected samples and selected features.
The selected-set indicators give a quadratic binary cost with
$N + D$ variables, whose cross-coupling terms encode the interactions. Following
standard \qubo{} terminology, the quadratic form itself is unconstrained. The
hard budgets enter through the indicator-vector domain
$\{(\vecs,\vecf)\in\{0,1\}^{N+D}:\lvert\vecs\rvert=k,\lvert\vecf\rvert=m\}$,
where \(\vecs\) and \(\vecf\) encode \(S\) and \(F\). For anomaly-feature detection,
we start from a standardized residual matrix $\matZ$ and derive cross-term
weights $\matW$ through a feature-wise calibration map
(Section~\ref{sec:formulation}). We choose $a_i$ and $b_j$ to encode sample and
feature marginal signal derived from the same data.
The calibration map also sets the quantum cost coefficients: residuals define
bounded weights, and the resulting QUBO coefficients scale the phase-separator
rotations.

\paragraph{Constraint-preserving QAOA and sparse-connectivity overhead.}
The quantum algorithm implements the same exact-budget restriction inside the
ansatz. To enforce strict $(k,m)$ feasibility without penalty tuning, we use
feasible initializations together with a Block~XY mixer that preserves the
Hamming weight of each register. Tensor Dicke states give symmetric noiseless
feasible support, and exact-feasible basis states support low-depth hardware
execution. The main hardware resource cost on sparse connectivity is then the
dense \(N\times D\) cost layer.
On sparse coupling maps, this layer induces routing and can permute the mapping
between logical variables and measured bits. We therefore study
hardware-aligned sparsification to reduce routing and layout-aware decoding to
preserve the logical interpretation of measured outputs after transpilation.

\paragraph{Application scope.}
We evaluate the formulation in controlled synthetic experiments and on time-split
anomaly benchmarks built from the Credit Card dataset and the IBM IT-AML
HI-Small anti-money-laundering (AML) dataset.
The benchmark protocol has two stages. An earlier-time ranking procedure forms a
candidate pool, and the stage-2 optimizer then performs exact-budget \((k,m)\)
selection over that pool. Protocol-specific scorer, calibration, and trigger
rules are fixed before test evaluation, so test outcomes cannot influence them
(Section~\ref{sec:experiments}).
\paragraph{Contributions.}
\begin{enumerate}
  \item \textbf{Calibration-robust constrained bipartite selection.}
  We formulate exact-budget bipartite selection as a joint \qubo{}/Ising
  objective and map calibrated residual signal into bounded bipartite
  weights. When the calibrated sample--feature coefficient matrix satisfies
  \(\|\widehat{\matW}-\matW^\star\|_\infty \le \eta\), calibration can reduce
  the joint recovery margin by at most \(2\eta\), whereas the analyzed
  sum-column feature-first margin can lose up to \(2N\eta\) by aggregating the
  same entrywise error over all \(N\) candidate samples. The same proof mechanism extends from
  light-tailed sub-Gaussian residuals to sub-exponential Bernstein-tail residuals.
  \item \textbf{Coupling-Grouped XY-QAOA (CG-XY-QAOA).}
  We introduce a problem-structured grouped-angle variant of alternating-operator
  XY-\qaoa{} for the bipartite objective. The construction uses the standard
  feasible-subspace XY mixer~\cite{wang2020xy} and adapts the per-term phase
  parameterization of multi-angle \qaoa{}~\cite{herrman2022multiangle} to the
  sample--feature cost decomposition. We evaluate
  fixed-transport variants that keep the mixer schedule tied and a fully
  grouped cost-plus-mixer variant that tests layerwise transport freedom.
  Parameter-matched controls compare this decomposition with type-preserving
  random regroupings and with tied XY-\qaoa{} at depth \(2p\).
  \item \textbf{Hardware-efficient sparse-surrogate execution and decoding.}
  We make the dense two-register constrained ansatz executable on sparse IBM
  Heron hardware by combining register-preserving variable placement,
  hardware-aligned sparse-surrogate phase separators, Block~XY fusion,
  fractional gates, edge-colored scheduling, and layout-aware decoding. On
  matched sparse IBM Heron R3 circuits, this implementation reduces submitted
  circuit depth by \CompilePathDepthReductionRangePct{} and the target-based
  circuit-duration estimate by \CompilePathDurationReductionRangePct{} relative
  to Qiskit's optimization-level~3 preset pass manager on the standard Heron
  CZ-basis target. The fractional target accounts for a
  \CompilePathNativeTwoQCountReductionRangePct{} reduction in native
  two-qubit instruction count. Block~XY fusion, late parameter binding, and
  edge-colored transport ordering provide additional depth and duration
  reductions without changing that count. At 20 decision qubits, the fully
  grouped \(p=5\) condition applies five alternating phase-separator and mixer
  layers while preserving separate sample and feature cardinalities. It raises
  the BK hit rate \CreditTwentyQPFiveFullyGroupedBKRateFold{} over tied
  XY-\qaoa{} and retains \CreditTwentyQPFiveFullyGroupedFeasPct{} decoded
  exact-budget mass. Sparse-surrogate executions reach 64 decision
  qubits at \(p=2\) and 36 decision qubits at \(p=3\). In these \(p=2\)
and \(p=3\) executions, decoded exact-budget mass remains
\CreditLargeEndpointSixtyFourQPTwoFeasPct{} at 64 qubits and
\CreditLargeEndpointThirtySixQPThreeFeasPct{} at 36 qubits, giving
feasibility-to-chance ratios of
\CreditLargeEndpointSixtyFourQPTwoChanceRatio{} and
\CreditLargeEndpointThirtySixQPThreeChanceRatio{}. Retained
  sample--feature coupling mass quantifies the sparse surrogate, and
  dense-objective rescoring evaluates measured samples against the original
  objective.
\end{enumerate}

\section{Related Work}
\label{sec:related}

\paragraph{QAOA and Constrained Optimization.}
The Quantum Approximate Optimization Algorithm~\cite{farhi2014quantum} alternates
evolution under cost and mixer Hamiltonians. Standard \qaoa{} uses
the transverse-field mixer~$\sum_i X_i$, whose bit flips change cardinality.
The alternating-operator view of constrained \qaoa{} makes feasible-subspace
design explicit~\cite{hadfield2019quantum}. With feasible initialization, XY
mixers preserve Hamming weight and keep the evolution within a fixed-cardinality
sector~\cite{wang2020xy}.
Multi-angle \qaoa{} assigns separate variational parameters to Hamiltonian
terms or term groups and can trade additional classical parameters for reduced
quantum depth~\cite{herrman2022multiangle}. We build on that idea with a
coupling-grouped phase parameterization for the bipartite cost.
Recent work also relates XY-mixer topology and generator choice to trainability
and warm-start performance for shared- and multi-angle \qaoa{}~\cite{kordonowy2026lie}.

\paragraph{Dicke State Preparation.}
Dicke states $\ket{D^n_k}$ are symmetric superpositions of all $n$-qubit states
with Hamming weight~$k$~\cite{dicke1954coherence,bartschi2019deterministic}.
Efficient preparation circuits construct this superposition without enumerating
its $\binom{n}{k}$ basis states and thereby provide a symmetric feasible
initialization for fixed-cardinality XY-\qaoa{}.

\paragraph{Feasible-subspace ansatz choices.}
Constraint-preserving QAOA can vary both the feasible initialization and the
mixer construction. Grover-style mixers shift more burden
to state preparation~\cite{bartschi2019grover}, while warm-start QAOA biases the
search using solutions of classical relaxations~\cite{egger2021warmstart}. We
use Block~XY mixing because it preserves Hamming weight independently in each
register and is generated by Pauli \(XX+YY\) exchange interactions.

\paragraph{Hardware optimization benchmarks.}
Hardware optimization studies show that encoding, native connectivity,
implementation path, and classical reference all affect the interpretation of
quantum-optimization results. On superconducting processors, QAOA performance
depends on whether the problem graph is native to the device or routed through
sparse connectivity~\cite{harrigan2021nonplanar}. Rydberg-array studies take a
different hardware-native route by encoding optimization problems as
maximum-independent-set instances and benchmarking them against classical
heuristics~\cite{ebadi2022rydbergmis,nguyen2023rydbergarbitrary}. Gate-model
constrained-optimization experiments similarly emphasize feasible
initialization matched to the mixer~\cite{he2023alignment}, direct execution of
XY-constrained ansatzes~\cite{niroula2022summarization}, and explicit IBM
hardware resource accounting~\cite{kotil2025multiobjective}.

Earlier fixed-cardinality demonstrations include two trapped-ion studies.
Niroula~et~al. executed \(p=1\) Hamming-weight-preserving XY-\qaoa{}
circuits for extractive summarization up to 20 qubits with two-qubit gate depth
up to 159. He~et~al. applied XY-mixer \qaoa{} to a 32-qubit portfolio problem
with one cardinality constraint on a trapped-ion quantum
processor~\cite{niroula2022summarization,he2023alignment}. Relative to the
closest 20-qubit fixed-cardinality XY-\qaoa{} comparison, our \(p=5\) condition
extends logical depth from one to five alternating layers while adding a second
preserved cardinality constraint.
Register width alone does not capture the comparison because the circuits also
preserve two global cardinality constraints at nontrivial \qaoa{} depth. Broader
gate-model demonstrations use larger unconstrained registers or width--depth
products. Pelofske~et~al. execute \(p=1,2\) \qaoa{} on 127-qubit
superconducting-transmon IBM Quantum hardware with a heavy-hex coupling graph.
Shaydulin and Pistoia report MaxCut \qaoa{} with \(N p\) up to 320 on Quantinuum
trapped-ion processors~\cite{pelofske2023shortdepthqaoa127,shaydulin2023qaoanp}.
Our comparison is restricted to constraint-preserving bipartite-selection
\qaoa{} with reported feasible-sector retention. Within this scope, the
64-qubit \(p=2\) and 36-qubit \(p=3\) runs are, to our knowledge, the largest
reported width--depth configurations.

\paragraph{Sparse and surrogate phase operators.}
A related strategy is to replace a dense phase separator by a cheaper sparse
surrogate. Liu, Shaydulin, and Safro study QAOA with sparsified phase operators
and show that performance depends on preserving the original problem's ground
state or low-energy structure, not merely on coefficient
retention~\cite{liu2022sparsifiedphase}. For sparse surrogates, retained
coupling mass measures how much of the dense bilinear objective remains in the
executed phase separator, and dense-objective rescoring evaluates measured
bitstrings on the original objective.

\paragraph{Industrial applications.}
Prior gate-model studies have applied constrained optimization to portfolio
rebalancing, planning, and airline tail
assignment~\cite{hodson2019portfolio,stollenwerk2020planning,vikstal2020tail}.
These studies include single-register constraints as well as multi-register and
one-hot encodings. Here we study a calibrated constrained
bipartite selection problem over interacting sample and feature sets, together
with a decoding scheme that preserves interpretation after sparse-connectivity
transpilation. Other domains would require their own calibration maps and
validation protocols for the corresponding bipartite weights.

\paragraph{Application setting: anomaly detection.}
Isolation Forest detects anomalies through isolation~\cite{liu2008isolation},
while Local Outlier Factor uses local density~\cite{breunig2000lof}. Broader
taxonomies also cover reconstruction-based and other classical
methods~\cite{chandola2009anomaly}. When feature selection precedes anomaly
scoring, it can discard joint sample--feature information. Post-hoc
anomaly-explanation methods
interpret a detector after it has been
fixed~\cite{sejr2021explainable,li2023surveyexplainableanomalydetection,ijcai2018p341}.
Joint instance-feature co-selection has also been studied for unsupervised
rare-category analysis~\cite{he2010coselection}. Our setting differs by
imposing exact review budgets, deriving calibrated bounded sample--feature
weights, comparing calibration sensitivity against feature-first aggregation,
and mapping the resulting QUBO/Ising objective to a constraint-preserving
\qaoa{} implementation.

\section{Calibrated Constrained Bipartite Selection}
\label{sec:formulation}

The calibrated formulation begins by constructing the weight matrix from which
the classical objective and the CG-XY-\qaoa{} phase separator are derived. The key
robustness result is that joint recovery is less sensitive to calibration error:
entrywise error of radius $\eta$ can reduce the joint recovery margin by at most
$2\eta$, while the analyzed feature-first sufficient condition can lose up to
$2N\eta$ through column aggregation. This turns calibration robustness into a
perturbation statement for a single \qubo{} coefficient matrix.

\subsection{Calibration Map and Objective}

Let $\matX \in \mathbb{R}^{N \times D}$ be a data matrix with $N$ samples and $D$
features. Let $\mathcal{R}$ denote a reference set of sample indices used for
calibration (e.g., an early-time window in a time-split evaluation). Define the
standardized residual matrix \(\matZ\) element-wise by
\begin{equation}
  Z_{ij} = \frac{X_{ij} - \hat{\mu}_j}{\hat{\sigma}_j}\,,
  \label{eq:zscore}
\end{equation}
where $\hat{\mu}_j$ and $\hat{\sigma}_j$ are feature-local location and scale
estimates fit on $\mathcal{R}$.
For heavy-tailed residuals we use robust feature-wise estimates, taking the median
\(\hat{\mu}_j=\mathrm{median}_{i\in\mathcal R} X_{ij}\) and the scaled median
absolute deviation
\[
  \hat{\sigma}_j
  = 1.4826\,\mathrm{median}_{i\in\mathcal R}|X_{ij}-\hat{\mu}_j|.
\]
The factor \(1.4826\) makes the median absolute deviation consistent with
\(\sigma\) under a normal model.

\paragraph{Role of the reference set $\mathcal{R}$.}
In a deployed monitoring setting, calibration is fit on historical data and then
applied prospectively. Our time-split evaluation mirrors this fit-then-apply structure. All feature-local
estimates $(\hat{\mu}_j,\hat{\sigma}_j)$ and calibration maps are fit only on
$\mathcal{R}$ and then applied to later samples without look-ahead leakage. The
resulting weights measure how unusual an observation is relative to recent
reference-period behavior, not relative to test-period information. When labels exist,
restricting \(\mathcal{R}\) to known-normal samples avoids contaminating the
reference distribution with labeled anomalies. When labels are unavailable,
\(\mathcal{R}\) is selected without labels.

\paragraph{Feature-wise calibration to bounded weights.}
We convert standardized residuals to nonnegative cross-term weights
$\matW \in \mathbb{R}_{\ge 0}^{N \times D}$ via a per-feature calibration map.
The calibration map has two roles. It makes tail scores comparable across
heterogeneous features and bounds the QUBO coefficients that scale phase-separator rotations.
It must therefore be monotone in feature-wise extremeness, return finite weights
for finite reference windows, and cap them at a fixed numerical scale.

The default label-free map uses a smoothed empirical cumulative distribution
function (CDF) on the reference window, followed by a two-sided tail
transformation. Let \(n=|\mathcal R|\), and
let
\[
  R_j(z)=\sum_{i\in\mathcal R}\mathbf 1\{Z_{ij}\le z\}
\]
be the number of reference residuals in feature \(j\) no larger than \(z\). The
clipped smoothed empirical CDF estimate \(\widehat F_j(z)\) and the two-sided
tail probability \(p_{ij}\) are
\begin{align}
  \widehat F_j(z)
  &= \mathrm{clip}\!\left(
       \frac{R_j(z)+1/2}{n+1},\,
       \frac{1}{n+1},\,
       \frac{n}{n+1}
     \right), \nonumber\\
  p_{ij}
  &= 2\min\!\left(\widehat F_j(Z_{ij}),\,1-\widehat F_j(Z_{ij})\right).
  \label{eq:tail-calibration}
\end{align}
The corresponding bounded outlier weight \(W_{ij}\) is
\begin{equation}
  W_{ij} = \mathrm{clip}(-\log p_{ij},\,0,\,w_{\max})\,,
  \label{eq:weights}
\end{equation}
where \(w_{\max}\) is the clipping cap. The rank smoothing and CDF clipping
avoid zero tail probabilities. The final weight cap keeps the bilinear
coefficients and gate angles bounded.

The empirical-CDF probability integral transform (ECDF/PIT) validity statement
used in the theory assumes that the score map is fixed independently of the
ECDF calibration sample. A split-reference fit/calibration window satisfies this
condition. Supplementary Section~\ref{supp:calibration_validity} gives the
finite-sample ECDF/PIT result, the coefficient/gate-angle clipping bound, and
the split-conformal super-uniform
alternative~\cite{rosenblatt1952remarks,massart1990tight,vovk2022algorithmic,lei2018distributionfree}.

We also include an absolute-\(z\) scorer, \(u_{ij}=|Z_{ij}|\), as a baseline
before the same clipping and coefficient construction. Protocol-specific
calibration maps are fixed before test evaluation.

\paragraph{Exact-budget objective and coefficient scale.}
We introduce binary decision variables where $s_i \in \{0,1\}$ indicates whether
sample~$i$ is selected, and $f_j \in \{0,1\}$ indicates whether feature~$j$ is
selected. The cost function to minimize, \(\Ham(\vecs,\vecf)\), is
\begin{equation}
  \Ham(\vecs, \vecf) =
    -\sum_{i=1}^{N} a_i \, s_i
    -\sum_{j=1}^{D} b_j \, f_j
    -\lambda \sum_{i=1}^{N} \sum_{j=1}^{D} W_{ij} \, s_i \, f_j\,,
  \label{eq:hamiltonian}
\end{equation}
subject to the cardinality constraints $\sum_i s_i = k$ and $\sum_j f_j = m$.
The exact-budget sector \(\Omega\) is
\[
  \Omega=\{(\vecs,\vecf):|\vecs|=k,\ |\vecf|=m\},
  \qquad |\Omega|=\binom Nk\binom Dm .
\]
The calibrated objective is optimized over \(\Omega\) by both the strict-feasible
classical selectors and the constraint-preserving quantum circuits.
Equation~\eqref{eq:hamiltonian} fixes the minimization convention. Its
coefficients absorb the tradeoff parameter $\rho$ from
Eq.~\eqref{eq:objective}.
The coefficients separate the sample marginal, feature marginal, and
cross-coupling terms as follows.
\begin{itemize}
  \item \textbf{Sample marginal score.} $a_i = \max_{j \in [D]} W_{ij}$ measures the
    maximum deviation of sample~$i$ across all features under the chosen weights.
  \item \textbf{Feature marginal score.} $b_j = \mathrm{Var}_{i\in[N]}(|Z_{ij}|)$
    captures cross-sample variation in feature~$j$ before the bounded tail map
    clips large residuals, so feature ranking is not dominated by the clipping cap.
  \item \textbf{Cross-coupling.} $\lambda > 0$ balances marginal terms against
    the bilinear coupling. In all protocols, \(\lambda\) is fixed
    before optimization or test evaluation together with the coefficient map.
\end{itemize}
The bilinear term $W_{ij}\cdot s_i \cdot f_j$ rewards selecting sample--feature
pairs with high joint outlier weights. Unless a protocol specifies a separate
generator, \(W\) denotes the bounded weight map in Eq.~\eqref{eq:weights}.

Hardware-facing QUBOs use the clipped coefficient map without additional
per-instance normalization before circuit construction. Sample marginals,
feature marginals, and bilinear terms can therefore have different numerical
ranges. Fixed-angle and angle-selected executions use this coefficient scale.
When theorem statements require uniform marginal
bounds, we impose finite-instance bounds on \(a_i\) and \(b_j\), or use
normalization or clipping at the protocol level.

\subsection{Joint Signal and Feature-First Separation}

\paragraph{Separable-coupling check.}
The bilinear contribution is irreducibly joint only when the coupling encodes
\emph{non-separable} interactions between samples and features. For the core
bilinear objective, an exactly separable (rank-one) coupling reduces to
independent top-$k$ and top-$m$ selection.
\begin{proposition}[Rank-one coupling reduces to independent selection]
\label{prop:rank-one}
Let $\matW = \bm{u} \bm{v}^\top$ with $\bm{u}\in\mathbb{R}_{\ge 0}^N$ and
$\bm{v}\in\mathbb{R}_{\ge 0}^D$, and consider the core bilinear objective
$B(S,F)=\sum_{i\in S}\sum_{j\in F} W_{ij}$ under $\lvert S\rvert=k$ and
$\lvert F\rvert=m$.
Then a maximizer is obtained by choosing $S$ as the indices of the $k$ largest
entries of $\bm{u}$ and $F$ as the indices of the $m$ largest entries of $\bm{v}$.
Moreover, if $\matW=\bm{u}\bm{v}^\top+\bm{E}$, then for any $S,F$ we have
$\lvert B_{\matW}(S,F)-B_{\bm{u}\bm{v}^\top}(S,F)\rvert \le km\|\bm{E}\|_\infty$.
Consequently, the perturbed optimum can improve over the rank-one independent
solution by at most $2km\|\bm{E}\|_\infty$ under the perturbed objective.
\end{proposition}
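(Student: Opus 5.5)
The plan treats the three claims of Proposition~\ref{prop:rank-one} in order, and the first one carries the content. For the rank-one part, the bilinear objective factorizes as
\[
  B(S,F)=\sum_{i\in S}\sum_{j\in F}u_iv_j=\Bigl(\sum_{i\in S}u_i\Bigr)\Bigl(\sum_{j\in F}v_j\Bigr)=U(S)\,V(F).
\]
Because \(\bm u,\bm v\ge 0\), both factors are nonnegative. For any fixed \(F\), the map \(S\mapsto U(S)V(F)\) is therefore maximized over \(|S|=k\) by maximizing \(U(S)\), and the same holds with the roles of \(S\) and \(F\) swapped. I would write this as the chain
\[
  U(S)V(F)\le U(S^\star)V(F)\le U(S^\star)V(F^\star).
\]
Here \(S^\star\) and \(F^\star\) are top-\(k\) and top-\(m\) index sets, and each inequality uses nonnegativity of the factor held fixed.

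The remaining ingredient is that a top-\(k\) index set maximizes \(\sum_{i\in S}u_i\) over \(|S|=k\). I would justify this by a one-line exchange argument: swapping a non-top index for a top index never decreases the sum. Ties are handled by the statement's wording, which says ``a maximizer'', so any tie-breaking rule works.

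For the perturbation bound, I would write \(B_{\matW}(S,F)-B_{\bm u\bm v^\top}(S,F)=\sum_{i\in S}\sum_{j\in F}E_{ij}\). The triangle inequality then bounds this by \(|S||F|\max_{ij}|E_{ij}|=km\|\bm E\|_\infty\). Here \(\|\cdot\|_\infty\) is read as the entrywise max norm, consistent with the paper's use elsewhere.

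For the consequence, let \((\hat S,\hat F)\) maximize \(B_{\matW}\) and let \((S^\star,F^\star)\) be the independent rank-one solution. I would use the standard sandwich, applying the entrywise bound twice and optimality of \((S^\star,F^\star)\) for the rank-one objective in between:
\[
  B_{\matW}(\hat S,\hat F)\le B_{\bm u\bm v^\top}(\hat S,\hat F)+km\|\bm E\|_\infty
  \le B_{\bm u\bm v^\top}(S^\star,F^\star)+km\|\bm E\|_\infty
  \le B_{\matW}(S^\star,F^\star)+2km\|\bm E\|_\infty .
\]
This yields the claimed gap of at most \(2km\|\bm E\|_\infty\).

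There is no real obstacle here. The only point that needs care is the first step: it depends on nonnegativity of both \(\bm u\) and \(\bm v\), since a negative factor would flip the monotonicity. I would state explicitly where that assumption enters.
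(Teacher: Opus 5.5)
Your proposal is correct and follows essentially the same route as the paper: factorize $B(S,F)=\bigl(\sum_{i\in S}u_i\bigr)\bigl(\sum_{j\in F}v_j\bigr)$ and use nonnegativity of both factors to select top-$k$ and top-$m$ independently. It then bounds $\bigl|\sum_{i\in S}\sum_{j\in F}E_{ij}\bigr|\le km\|\bm E\|_\infty$ and applies the same three-step sandwich to get the $2km\|\bm E\|_\infty$ gap. Your explicit chain $U(S)V(F)\le U(S^\star)V(F)\le U(S^\star)V(F^\star)$ and the exchange argument just spell out details the paper leaves implicit.
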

Proof in Supplementary Section~\ref{supp:elementary_formulation_lemmas}.
Proposition~\ref{prop:rank-one} shows that when sample and feature effects
factor, the bilinear objective contains no irreducibly joint signal and reduces
to independent marginal ranking.

\paragraph{Sufficient-condition separation under feature noise.}
The theorem and corollary identify parameter ranges in a planted model where
exact joint recovery has a weaker sufficient condition than the analyzed
sum-column feature-first rule. The centered score-field model isolates
sample--feature signal from
sub-Gaussian fluctuations before the nonnegative clipping used by empirical
calibration. It adapts elevated-mean submatrix localization and
biclustering~\cite{ariascastro2011anomalouscluster,kolar2011minimax,
butucea2015sharp,cai2017submatrix,hajek2018submatrix} to exact budgets, then adds
the feature-first comparison and deterministic calibration error. The
experiments test whether this separation persists after finite-sample
calibration and whether feasible-subspace circuits can sample the resulting
objective.

We assume \(\mathcal M>1\), \(m(D-m)>0\), and \(0<\delta<1\), so the
logarithms in the theorem and corollary are defined.

\begin{theorem}[Joint recovery under a planted bicluster with sub-Gaussian noise]
\label{thm:joint-noise-superiority}
Assume there exist planted sets $S^\star\subseteq[N]$, $F^\star\subseteq[D]$ with
$|S^\star|=k$, $|F^\star|=m$, and observed weights
\[
W_{ij}=\mu\,\mathbf{1}\{i\in S^\star,\;j\in F^\star\}+\varepsilon_{ij},
\]
where $\{\varepsilon_{ij}\}$ are independent, mean-zero, $\sigma$-sub-Gaussian.
Define
\[
B(S,F)=\sum_{i\in S}\sum_{j\in F}W_{ij},\qquad
\mathcal{M}=\binom{N}{k}\binom{D}{m}.
\]
If
\[
\mu \;\ge\; 2\sigma\sqrt{\frac{\log\!\left((\mathcal{M}-1)/\delta\right)}{\min(k,m)}},
\]
then with probability at least $1-\delta$, $(S^\star,F^\star)$ is the unique
maximizer of $B(S,F)$ over $|S|=k$, $|F|=m$.
\end{theorem}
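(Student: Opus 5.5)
The plan is a direct union bound over all competing feasible pairs, after reducing each pairwise comparison to a single sub-Gaussian tail event. Fix any feasible $(S,F)\neq(S^\star,F^\star)$ with $|S|=k$, $|F|=m$. Write $P=S^\star\times F^\star$ for the planted cell set and $Q=S\times F$ for the competitor's cell set. Both have $km$ cells. Let $r=|P\cap Q|=|S\cap S^\star|\,|F\cap F^\star|$ and $d=km-r$, so that $|P\setminus Q|=|Q\setminus P|=d$. The shared cells cancel in the difference, giving
\[
B(S^\star,F^\star)-B(S,F)=\mu d+\sum_{(i,j)\in P\setminus Q}\varepsilon_{ij}-\sum_{(i,j)\in Q\setminus P}\varepsilon_{ij}.
\]
Hence $(S,F)$ fails to be strictly beaten only on the event that the noise term $G_{S,F}$ (the last two sums) is at most $-\mu d$.

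\textbf{Step 1 (combinatorial lower bound on $d$).} I will show $d\ge\min(k,m)$ for every competitor. If $S\neq S^\star$, then $|S\cap S^\star|\le k-1$, so $r\le(k-1)m$ and $d\ge m$. Otherwise $F\neq F^\star$, so $r\le k(m-1)$ and $d\ge k$. In either case $d\ge\min(k,m)$, and in particular $d\ge 1$.

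\textbf{Step 2 (tail bound).} The sets $P\setminus Q$ and $Q\setminus P$ are disjoint, and the $\varepsilon_{ij}$ are independent, mean-zero and $\sigma$-sub-Gaussian. Therefore $G_{S,F}$ is a signed sum of $2d$ independent such variables and is sub-Gaussian with variance proxy $2d\sigma^2$. This gives
\[
\Pr\bigl(G_{S,F}\le-\mu d\bigr)\le\exp\!\Bigl(-\frac{\mu^2d^2}{4d\sigma^2}\Bigr)=\exp\!\Bigl(-\frac{\mu^2 d}{4\sigma^2}\Bigr)\le\exp\!\Bigl(-\frac{\mu^2\min(k,m)}{4\sigma^2}\Bigr),
\]
where the last inequality uses Step 1.

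\textbf{Step 3 (union bound).} The hypothesis on $\mu$ is exactly the condition $\mu^2\min(k,m)/(4\sigma^2)\ge\log((\mathcal M-1)/\delta)$, so each failure probability is at most $\delta/(\mathcal M-1)$. There are $\mathcal M-1$ competitors, so with probability at least $1-\delta$ every competitor satisfies $B(S^\star,F^\star)-B(S,F)>0$. On that event $(S^\star,F^\star)$ is the unique maximizer.

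The only real obstacle is Step 1, where we must ensure that the exponent does not degrade for competitors that overlap the planted bicluster heavily. A naive bound $d\ge1$ would lose the $\min(k,m)$ factor. The factorized overlap $r=|S\cap S^\star||F\cap F^\star|$ shows that changing even one row or column removes a whole line of planted cells.

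A minor point is that uniqueness requires a strict inequality. The tail bound controls the event $G_{S,F}\le-\mu d$, which includes the case of equality, so strictness holds on the complementary event without any continuity assumption.

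One could sharpen the argument by grouping competitors by $d$ and counting how many have each value. However, the stated threshold with $\log(\mathcal M-1)$ needs only the crude uniform bound above.
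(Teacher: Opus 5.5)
Your proposal is correct and follows the paper's proof step for step: fix a competitor, cancel the shared cells to get the gap $\mu d$ plus a $\sigma\sqrt{2d}$-sub-Gaussian signed sum, bound the failure probability by $\exp(-\mu^2 d/(4\sigma^2))\le\exp(-\mu^2\min(k,m)/(4\sigma^2))$, and union-bound over the $\mathcal M-1$ competitors. Your case split ($S\neq S^\star$ gives $d\ge m$, otherwise $d\ge k$) supplies the justification of $d\ge\min(k,m)$ that the paper only asserts, and your remark on strictness matches the paper's failure event $B(S,F)\ge B(S^\star,F^\star)$.
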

Proof in Supplementary Section~\ref{supp:noise-proof}.
The theorem is stated for the bilinear core \(B(S,F)\). For the full
maximization score
\[
  Q(S,F)=\sum_{i\in S}a_i+\sum_{j\in F}b_j+\lambda B(S,F),
\]
the same proof gives a conservative bounded-marginal condition. If
\(|a_i|\le A_{\max}\), \(|b_j|\le B_{\max}\), and
\[
  \lambda\mu-\frac{2A_{\max}}{m}-\frac{2B_{\max}}{k}
  \;\ge\;
  2\lambda\sigma
  \sqrt{\frac{\log((\mathcal M-1)/\delta)}{\min(k,m)}},
\]
then the planted pair is also the unique maximizer of \(Q\) with probability at
least \(1-\delta\)
(Supplementary Corollary~\ref{cor:bounded-marginal-full-score}). Thus the
full-objective guarantee requires the bilinear signal to dominate both noise
and any adversarial marginal advantage. Empirical marginal scores fall outside
the centered bilinear theorem unless this margin condition is satisfied.

\begin{corollary}[Feature-first sufficient condition under column aggregation]
\label{cor:sequential-snr}
Under the same model, let $C_j=\sum_{i=1}^N W_{ij}$ and define a sequential
feature-first step that chooses the top-$m$ columns by $C_j$.
A sufficient condition for exact recovery of $F^\star$ with probability at least
$1-\delta$ is
\[
\mu k \;\ge\; \sigma\sqrt{8N\log\!\left(\frac{m(D-m)}{\delta}\right)}.
\]
Thus the analyzed feature-first condition accumulates \(N^{1/2}\) sample-wise noise,
whereas Theorem~\ref{thm:joint-noise-superiority} scales with
\(\min(k,m)^{-1/2}\) in the required $\mu$.
\end{corollary}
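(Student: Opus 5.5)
The plan is a union bound over the \(m(D-m)\) pairwise column comparisons, each controlled by a sub-Gaussian tail for a difference of column sums. The event that top-\(m\) selection by \(C_j\) returns exactly \(F^\star\) contains the event that \(C_j>C_{j'}\) for every \(j\in F^\star\) and every \(j'\notin F^\star\). It therefore suffices to bound the probability that some such pair is misordered by \(\delta\).

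Fix \(j\in F^\star\) and \(j'\notin F^\star\). Since the planted signal in column \(j\) occupies exactly the \(k\) rows of \(S^\star\), and column \(j'\) carries no signal,
\[
C_j-C_{j'}=\mu k+\sum_{i=1}^N(\varepsilon_{ij}-\varepsilon_{ij'}).
\]
The noise sum consists of \(2N\) independent mean-zero \(\sigma\)-sub-Gaussian terms; negating a term preserves this. It is therefore sub-Gaussian with variance proxy \(2N\sigma^2\), and
\[
\Pr\bigl(C_j\le C_{j'}\bigr)\le \exp\!\left(-\frac{\mu^2k^2}{4N\sigma^2}\right).
\]

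A union bound over the \(m(D-m)\) pairs gives a failure probability of at most \(m(D-m)\exp(-\mu^2k^2/(4N\sigma^2))\). This is at most \(\delta\) whenever \(\mu k\ge\sigma\sqrt{4N\log(m(D-m)/\delta)}\). The stated condition, with constant \(8\), is implied by this and leaves slack. That slack also absorbs a looser convention for the sub-Gaussian tail, for example \(\exp(-t^2/(2\cdot 2\cdot 2N\sigma^2))\) under a symmetrization-style constant. Either way the claimed inequality follows, since the assumptions \(m(D-m)>0\) and \(0<\delta<1\) make the logarithm well defined.

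The only point needing care is the bookkeeping of the sub-Gaussian constant. I would fix the convention \(\mathbb E e^{t\varepsilon}\le e^{\sigma^2t^2/2}\), which is the same one used in Theorem~\ref{thm:joint-noise-superiority}, and note that the difference sum then has proxy \(2N\sigma^2\). The comparative remark follows by solving both conditions for \(\mu\). Here the requirement is \(\mu\gtrsim\sigma\sqrt{N\log(\cdot)}/k\), which grows like \(N^{1/2}\) because each column sum aggregates noise from all \(N\) samples, including the \(N-k\) rows carrying no signal. Theorem~\ref{thm:joint-noise-superiority} instead requires \(\mu\gtrsim\sigma\sqrt{\log(\cdot)/\min(k,m)}\). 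This is a comparison of sufficient conditions only, and I would state it that way rather than claim a lower bound for feature-first selection.
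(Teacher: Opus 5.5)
Your proposal is correct and follows the paper's proof: both write $C_j-C_\ell=\mu k+\text{noise}$ for each planted--unplanted pair, apply a sub-Gaussian lower-tail bound, and take a union bound over the $m(D-m)$ pairs. The only difference is the constant: the paper treats the noise as $2\sigma\sqrt{N}$-sub-Gaussian, giving the tail $\exp(-(\mu k)^2/(8N\sigma^2))$ and exactly the stated constant $8$, while your independence-based variance proxy $2N\sigma^2$ gives the sharper constant $4$, which implies the stated condition.
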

Proof in Supplementary Section~\ref{supp:noise-proof}.
This corollary analyzes the sum-column feature-first rule. The empirical
feature-first baselines separately benchmark other calibrated aggregators.

Together, Theorem~\ref{thm:joint-noise-superiority} and
Corollary~\ref{cor:sequential-snr} identify a parameter range where the joint
condition is weaker than the analyzed feature-first condition. Suppressing
logarithmic factors, the joint threshold scales as
\(\sigma \min(k,m)^{-1/2}\). The analyzed feature-first threshold scales as
\(\sigma N^{1/2}/k\). For fixed \(k,m\) and growing \(N\), some signal/noise
windows satisfy the joint recovery condition without satisfying the analyzed
feature-first condition. The planted model supplies a mechanism-level recovery
certificate rather than a fitted model of the public datasets. The synthetic
experiments test its
recovery and perturbation implications under known support. The time-split
benchmarks test whether joint gains and feature-aligned calibration drift
persist without assuming block-constant signal or independent residuals.

The independent-noise assumption can be relaxed using a proxy covariance
matrix \(\Gamma\). For each planted-versus-competitor contrast vector \(c\),
Supplementary Proposition~\ref{prop:covariance-aware-joint-recovery} replaces
the independent-noise scale by \(\sqrt{c^\top\Gamma c}\). Positive
within-feature correlation can therefore weaken the averaging gain in
feature-swap contrasts. The deterministic \(2\eta\) calibration perturbation
from Proposition~\ref{prop:calibration-perturbation} does not depend on the
noise covariance.

The remaining recovery variants keep the same joint-versus-feature-first
comparison and change only the residual-tail assumption or the clipping-bias
term.
Supplementary Proposition~\ref{prop:bernstein-tail-joint-recovery} gives a
Bernstein-tail extension for residuals that can have heavier, sub-exponential
tails than the sub-Gaussian model. If the centered entrywise residuals satisfy a
Bernstein moment-generating-function (MGF) condition with variance proxy $v^2$
and scale $b$, then joint
recovery is certified when
\[
  \mu \ge
  \max\!\left\{
    2v\sqrt{\frac{\log((\mathcal M-1)/\delta)}{\min(k,m)}},
    \frac{2b\log((\mathcal M-1)/\delta)}{\min(k,m)}
  \right\}.
\]
The sequential feature-first sufficient condition becomes
\[
  \mu k \ge
  \max\!\left\{
    2v\sqrt{N\log\!\left(\frac{m(D-m)}{\delta}\right)},
    2b\log\!\left(\frac{m(D-m)}{\delta}\right)
  \right\}.
\]
Thus the Bernstein-tail correction adds a linear logarithmic term while
preserving the structural difference between joint sample--feature recovery
and feature-first aggregation.
Supplementary Corollary~\ref{cor:clipped-residual-recovery} gives the
bounded-noise version for clipped residual calibration. Clipping at level
\(\tau\) yields the joint condition
\[
  \mu-2(\rho_\tau+\eta)
  \ge
  2\tau\sqrt{\frac{\log((\mathcal M-1)/\delta)}{\min(k,m)}},
\]
where \(\rho_\tau\) is the clipping-bias budget. The corresponding sequential
condition replaces the left-hand side by \(\mu k-2N(\rho_\tau+\eta)\) and the
right-hand side by
\(2\tau\sqrt{N\log(m(D-m)/\delta)}\), so clipping makes the bias--variance
tradeoff explicit without changing the calibration-scaling separation.

\subsection{Calibration Perturbations}

The calibration-aware extension retains the centered planted bicluster model
and separates clean signal from coefficient error. Let \(\matW^\star\) denote
the ideal calibrated weight field and \(\widehat{\matW}\) the weight field
actually produced by the reference window and clipping map. The quantity
\[
  \|\widehat{\matW}-\matW^\star\|_\infty\le\eta
\]
means that every stage-2 bilinear coefficient is perturbed by at most \(\eta\)
after calibration and clipping.

\begin{proposition}[Calibration-perturbation stability of joint recovery]
\label{prop:calibration-perturbation}
Assume the planted model of
Theorem~\ref{thm:joint-noise-superiority} for an ideal weight field $\matW^\star$,
and let $\widehat{\matW}$ be a calibrated estimate with
\[
  \|\widehat{\matW} - \matW^\star\|_\infty \le \eta.
\]
Define $\widehat B(S,F)=\sum_{i\in S}\sum_{j\in F}\widehat{\matW}_{ij}$.
If
\[
  \mu - 2\eta \;\ge\; 2\sigma\sqrt{\frac{\log\!\left((\mathcal{M}-1)/\delta\right)}{\min(k,m)}},
\]
then with probability at least $1-\delta$, $(S^\star,F^\star)$ is the unique
maximizer of $\widehat B(S,F)$ over $|S|=k$, $|F|=m$.
\end{proposition}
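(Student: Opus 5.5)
The argument reruns the proof of Theorem~\ref{thm:joint-noise-superiority} with a deterministic perturbation term added to every pairwise contrast. For any feasible competitor $(S,F)\neq(S^\star,F^\star)$, write
\[
\widehat B(S^\star,F^\star)-\widehat B(S,F)
=\bigl[B^\star(S^\star,F^\star)-B^\star(S,F)\bigr]+\bigl[\Delta(S^\star,F^\star)-\Delta(S,F)\bigr],
\]
where $B^\star$ is the bilinear score under $\matW^\star$ and $\Delta=\widehat B-B^\star$. The first bracket splits into a signal part and a noise part. The signal part is $\mu\,(km-|S\cap S^\star||F\cap F^\star|)$. The noise part is the difference of sums of $\varepsilon_{ij}$ over the symmetric difference of the two $k\times m$ rectangles.

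The second bracket is bounded deterministically. The entries on the common overlap $(S\cap S^\star)\times(F\cap F^\star)$ cancel, so the bracket involves only entries in the two rectangle differences. Each of these differences has exactly $r:=km-|S\cap S^\star||F\cap F^\star|$ entries, and each entry of $\widehat{\matW}-\matW^\star$ is at most $\eta$ in absolute value. Hence
\[
|\Delta(S^\star,F^\star)-\Delta(S,F)|\le 2\eta r .
\]
The signal part equals $\mu r$, so the combined deterministic part is at least $(\mu-2\eta)r$. This is exactly the deterministic part from the Theorem with $\mu$ replaced by the effective signal $\mu-2\eta$.

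The noise part is a mean-zero sub-Gaussian variable: a $\pm1$ combination of $2r$ independent $\sigma$-sub-Gaussian terms, so its variance proxy is $2r\sigma^2$. The competitor is beaten unless this noise exceeds $(\mu-2\eta)r$, which has probability at most
\[
\exp\!\bigl(-(\mu-2\eta)^2 r^2/(4r\sigma^2)\bigr)=\exp\!\bigl(-(\mu-2\eta)^2 r/(4\sigma^2)\bigr).
\]
Any $(S,F)\neq(S^\star,F^\star)$ swaps at least one sample or one feature, so $r\ge\min(k,m)$. Each competitor's failure probability is therefore at most $\exp(-(\mu-2\eta)^2\min(k,m)/(4\sigma^2))$. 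A union bound over the $\mathcal M-1$ competitors shows this is at most $\delta$ exactly when the stated condition on $\mu-2\eta$ holds. On the complementary event every competitor scores strictly below the planted pair under $\widehat B$, so $(S^\star,F^\star)$ is the unique maximizer.

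The main obstacle is keeping the noise bound separate from the calibration error. If $\widehat{\matW}$ is data-dependent, the perturbation may be correlated with $\varepsilon$. The plan handles this by using only the worst-case bound $\|\widehat{\matW}-\matW^\star\|_\infty\le\eta$. That bound holds pointwise for every realization, so the high-probability event is defined through $\varepsilon$ alone, and the perturbation is absorbed by the deterministic $2\eta r$ bound on that event. The remaining bookkeeping is to check that the count $r$ matches on both rectangle differences and that $r\ge\min(k,m)$. If the supplementary proof of the Theorem uses a finer per-overlap union bound instead, the same substitution $\mu\mapsto\mu-2\eta$ carries through term by term, since every contrast is linear in $r$.
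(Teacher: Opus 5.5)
Your proposal is correct and follows the paper's proof essentially step for step. The paper likewise writes $\widehat B(S^\star,F^\star)-\widehat B(S,F)=\mu d+\Xi_{S,F}+R_{S,F}$ with $\Xi_{S,F}$ centered and $\sigma\sqrt{2d}$-sub-Gaussian, bounds $|R_{S,F}|\le 2d\eta$ because both rectangle differences have exactly $d$ entries, and finishes with the tail bound $\exp(-(\mu-2\eta)^2 d/(4\sigma^2))$, the fact $d\ge\min(k,m)$, and a union bound over the $\mathcal M-1$ competitors; your remark that the pointwise bound on $\widehat{\matW}-\matW^\star$ makes the failure event depend on $\varepsilon$ alone is the same event inclusion the paper uses implicitly.
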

Proof in Supplementary Section~\ref{supp:noise-proof}.
Entrywise coefficient error can reduce the joint recovery margin by at most
\(2\eta\). The analyzed sum-column feature-first rule instead
aggregates the same coefficient error over all \(N\) samples, giving up to a
\(2N\eta\) margin loss, or \(2N\eta/k\) when written as a threshold on \(\mu\)
(Corollary~\ref{cor:sequential-calibration-sensitivity}). Supplementary
Corollary~\ref{cor:calibration-transfer} gives the high-probability
calibration-transfer result.

The empirical calibration analyses vary reference-window size and age, decompose
the resulting coefficient error, and test whether alternative calibration maps
reduce its feature-aligned component
(Supplementary Figures~\ref{fig:synthetic_calibration_robustness},
\ref{fig:real_creditcard_calibration_data_quality}, and
\ref{fig:real_creditcard_calibration_mechanism}).

\section{Coupling-Grouped XY-QAOA Method}
\label{sec:mixer}

CG-XY-\qaoa{} implements the calibrated bipartite objective while preserving
both cardinality budgets. Its grouped parameterizations distinguish sample
marginals, feature marginals, and sample--feature coupling, while Block~XY
mixing confines evolution to the exact-budget sector.

\subsection{Cost Hamiltonian and Phase Separators}
\label{sec:cost-hamiltonian}

The QAOA cost layer is generated by the Ising form of the \qubo{} objective.
The cost Hamiltonian is diagonal in the computational basis. The
feasible-subspace ansatz determines which exact-budget states are explored and
measured. We use the computational-basis convention
$x=(I-Z)/2$, so $x=1$ corresponds to Pauli eigenvalue $Z=-1$.
For the binary minimization objective
\[
  E(\vecs,\vecf)=
  -\sum_i a_i s_i-\sum_j b_j f_j-\lambda\sum_{i,j}W_{ij}s_if_j,
\]
the Ising Hamiltonian \(\Ham\) is, up to a constant,
\begin{equation}
  \Ham = \sum_{i=1}^{N} h_i^{(s)} Z_i
       + \sum_{j=1}^{D} h_j^{(f)} Z_{N+j}
       + \sum_{i=1}^{N} \sum_{j=1}^{D} J_{ij}\, Z_i Z_{N+j},
  \label{eq:ising}
\end{equation}
with
\begin{equation}
  h_i^{(s)}=\frac{a_i}{2}+\frac{\lambda}{4}\sum_j W_{ij},\qquad
  h_j^{(f)}=\frac{b_j}{2}+\frac{\lambda}{4}\sum_i W_{ij},\qquad
  J_{ij}=-\frac{\lambda}{4}W_{ij}.
  \label{eq:ising_coefficients}
\end{equation}
The QAOA cost layer uses \(\Ham\) as \(H_{\mathrm{C}}\). Positive \(a_i\),
\(b_j\), and \(W_{ij}\) lower the binary objective when the corresponding
sample, feature, or sample--feature pair is selected. The linear coefficients
$h_i^{(s)}$, $h_j^{(f)}$ include the marginal terms and the linear shifts
induced by the bilinear binary-to-Pauli conversion. Equivalently, for a
symmetric QUBO matrix \(Q\) satisfying \(E(x)=x^\top Qx\), the nonconstant
Ising coefficients are \(h_i=-(Q_{ii}+\sum_{j\ne i}Q_{ij})/2\) and
\(J_{ij}=Q_{ij}/2\). The corresponding Qiskit rotations are
\(R_Z(2\gamma h_i)\) and \(R_{ZZ}(2\gamma J_{ij})\).

The same decomposition defines a hierarchy of phase-separator
parameterizations used in Section~\ref{sec:quality}. We separate these terms
because marginal evidence and sample--feature coupling can have different phase
sensitivity at a fixed mixer depth. Write
\begin{equation}
\begin{aligned}
  \Ham &= H_{\mathcal S}+H_{\mathcal F}+H_{\mathcal S\mathcal F},\\
  H_{\mathcal S} &= \frac{1}{2}\sum_i a_iZ_i,\qquad
  H_{\mathcal F} = \frac{1}{2}\sum_j b_jZ_{N+j},\\
  H_{\mathcal S\mathcal F}
  &=\frac{\lambda}{4}\sum_{i,j}W_{ij}
  \left(Z_i+Z_{N+j}-Z_iZ_{N+j}\right).
\end{aligned}
  \label{eq:grouped_cost_decomposition}
\end{equation}

This semantic decomposition maps the three original binary objective
components into Ising form separately. In particular,
\(H_{\mathcal S\mathcal F}\) includes the one-local shifts induced by the
sample--feature products as well as their \(ZZ\) interactions. The three
components sum to Eq.~\eqref{eq:ising} up to the omitted constant.

The standard tied phase separator is
$U_{\mathrm{C}}(\gamma)=\exp[-i\gamma \Ham]$. The lower-dimensional
fixed-transport coupling-grouped variant, bilinear-CG-XY-QAOA, uses the same
set of diagonal gates. It keeps the marginal components tied while assigning a
separate phase to the sample--feature component.
\begin{equation}
  U_{\mathrm{C}}^{\mathrm{bilCG}}(\gamma_{\mathrm{marg}},
  \gamma_{\mathcal S\mathcal F})
  =
  \exp[-i(\gamma_{\mathrm{marg}}(H_{\mathcal S}+H_{\mathcal F})
       +\gamma_{\mathcal S\mathcal F}H_{\mathcal S\mathcal F})].
  \label{eq:bilinear_grouped_phase_separator}
\end{equation}
Setting
$\gamma_{\mathrm{marg}}=\gamma_{\mathcal S\mathcal F}=\gamma$
recovers the standard tied separator exactly. We also evaluate a
fixed-transport grouped separator with separate sample-field, feature-field, and
sample--feature phases.
\begin{equation}
  U_{\mathrm{C}}^{\mathrm{3grp}}(\gamma_{\mathcal S},\gamma_{\mathcal F},
  \gamma_{\mathcal S\mathcal F})
  =
  \exp[-i(\gamma_{\mathcal S}H_{\mathcal S}
       +\gamma_{\mathcal F}H_{\mathcal F}
       +\gamma_{\mathcal S\mathcal F}H_{\mathcal S\mathcal F})].
  \label{eq:grouped_phase_separator}
\end{equation}
Because these components are diagonal and commute, grouping changes only the
phase-angle parameterization of a fixed cost layer. Freeing mixer angles changes
transport on the feasible swap graph in addition to the diagonal phase
separator. Here ``transport'' means the probability-mass evolution induced by
the Block~XY mixer on that graph.

The decomposition yields the three CG parameterizations summarized in
Table~\ref{tab:qaoa_parameterizations}. Bilinear-CG frees only the
sample--feature coupling phase. Fixed-transport grouped CG frees the three
diagonal cost-component phases. Fully grouped CG also frees the mixer angle
layer by layer. The fully grouped variant is the most expressive same-depth
schedule tested in noiseless simulation. Bilinear-CG keeps mixer transport fixed
and is the lower-dimensional hardware variant used for the 32--64-qubit fixed-angle
sparse circuits.

Supplementary Section~\ref{supp:qaoa_grouped_multiangle} gives the
deterministic variational-containment and local-response results used for the
CG-XY-QAOA comparison. It also gives a
criterion for when grouped depth \(p\) remains preferable to tied depth \(2p\)
if hardware preserves only part of the additional-depth noiseless gain.

The variational-containment statements compare schedules at fixed depth and
fixed implementation choices. Fixed-transport grouped CG contains the tied-cost
schedule when the mixer transport remains shared. Fully grouped CG contains the
layerwise XY-\qaoa{} parameterization at the same depth. A small-angle response
analysis sharpens the fixed-mixer phase comparison to a strict improvement
whenever the per-component responses are not collinear with the tied direction.
Let \(r=(r_{\mathcal S},r_{\mathcal F},r_{\mathcal S\mathcal F})\) denote the
three-component small-angle mixed response of the expected cost in the grouped
phase angles at the Dicke-initialized origin
(Supplementary Eq.~\ref{eq:grouped_component_response}).

\begin{corollary}[Equal-norm grouped descent direction]
\label{cor:grouped-equal-norm-descent}
Assume $r\neq 0$ and fix a small mixer angle $\beta=\epsilon$. Among phase
displacements at fixed Euclidean norm equal to that of a tied displacement
$(\pm\epsilon,\pm\epsilon,\pm\epsilon)$, the predicted descent is maximized by
the grouped vector antiparallel to $r$. Relative to the better signed tied
direction, the predicted descent improves by
\[
  \epsilon^2\sqrt{3}\,\|r\|_2
  \left(1-|\cos\angle(r,\mathbf{1})|\right),
\]
which is zero exactly when $r$ is collinear with $\mathbf{1}=(1,1,1)$ and
positive otherwise.
\end{corollary}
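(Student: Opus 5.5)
The plan is to reduce the statement to a Cauchy--Schwarz comparison on the sphere of radius $\sqrt{3}\,\epsilon$ in $\mathbb{R}^3$, using the small-angle response model from the supplementary local-response analysis. At the Dicke-initialized origin, with mixer angle $\beta=\epsilon$, I take the expected cost for a phase displacement $\gamma=(\gamma_{\mathcal S},\gamma_{\mathcal F},\gamma_{\mathcal S\mathcal F})$ to be
\[
  \langle H\rangle(\gamma,\epsilon) = \langle H\rangle_0 + \epsilon\, r^\top\gamma + O(\epsilon^3).
\]
The zeroth-order phase term vanishes because the Dicke state is an eigenstate of the block XY mixer, so a pure phase change cannot move the expectation. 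The first nonvanishing contribution is the mixed term. The predicted descent is therefore $-\epsilon\, r^\top\gamma$. The argument needs this bilinear form, stated as in Supplementary Eq.~\ref{eq:grouped_component_response}, and takes it as given.

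First, a tied displacement $(\pm\epsilon,\pm\epsilon,\pm\epsilon)$ has Euclidean norm $\sqrt3\,\epsilon$. I read ``tied'' as the common-sign choice $\pm\epsilon\mathbf 1$, since that is what recovers the tied separator. Its predicted descent is $\mp\epsilon^2\,r^\top\mathbf 1$. The better sign therefore gives
\[
  \epsilon^2 |r^\top\mathbf 1| = \epsilon^2\sqrt3\,\|r\|_2\,|\cos\angle(r,\mathbf 1)|.
\]
Second, over all $\gamma$ with $\|\gamma\|_2=\sqrt3\,\epsilon$, Cauchy--Schwarz gives $-\epsilon\, r^\top\gamma\le \epsilon\|r\|_2\|\gamma\|_2=\epsilon^2\sqrt3\,\|r\|_2$. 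Equality holds exactly for $\gamma=-\sqrt3\,\epsilon\, r/\|r\|_2$, which is antiparallel to $r$; this uses $r\neq0$. Subtracting the tied value yields the displayed gap $\epsilon^2\sqrt3\,\|r\|_2(1-|\cos\angle(r,\mathbf 1)|)$.

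Third, for the zero/positive dichotomy, since $\|r\|_2>0$ the gap vanishes iff $|\cos\angle(r,\mathbf 1)|=1$. This is the equality case of Cauchy--Schwarz, i.e.\ $r$ parallel or antiparallel to $\mathbf 1$, which is exactly collinearity with $\mathbf 1$. Otherwise $|\cos|<1$ and the gap is strictly positive.

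The main obstacle is not the optimization but the bookkeeping in the expansion. I have to confirm that the descent is linear in $\gamma$ at order $\epsilon$ with coefficient vector exactly $r$, and that no pure-$\gamma$ quadratic term enters at the same order. If such a term did enter, the ``predicted'' descent would no longer be a linear functional and the antiparallel maximizer would fail. I would check this first from the supplementary definition: with $U_{\mathrm C}$ diagonal and the Dicke state mixer-invariant, the first-order terms in $\beta$ come from $i\epsilon\langle[H_M,e^{i\gamma\cdot H}He^{-i\gamma\cdot H}]\rangle$. These are linear in $\gamma$ to leading order, so the leading-order model is the bilinear one above. I would state the corollary as a statement about this leading-order predicted descent, as its wording indicates.
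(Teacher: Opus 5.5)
Your proof is correct and is essentially the paper's own: both take the mixed term $\epsilon\,\gamma^\top r$ from the small-angle response formula as the predicted-descent model, compute the best common-sign tied value $\epsilon^2|\mathbf{1}^\top r|=\epsilon^2\sqrt{3}\,\|r\|_2|\cos\angle(r,\mathbf{1})|$, and apply Cauchy--Schwarz on the sphere $\|\gamma\|_2=\sqrt{3}\,\epsilon$, with equality at $\gamma=-\sqrt{3}\,\epsilon\,r/\|r\|_2$. Two side remarks need fixing, though neither affects the argument. First, pure-phase terms vanish because the diagonal phases commute with the diagonal cost, not because of a mixer-eigenstate property; that property need not hold, e.g.\ for a ring Block~XY mixer, so your expansion may also carry a $\gamma$-independent $\epsilon^2$ mixer-only term, which cancels in every comparison. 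Second, for the same commutation reason, the phase conjugation in your first-order check must act on the state (equivalently on $H_M$), not on $H$, since conjugating $H$ leaves it unchanged.
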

Proof in Supplementary Section~\ref{supp:qaoa_grouped_multiangle}.

\subsection{Cost-Term Sparsification}
\label{sec:cost-term-sparsification}

Implementing all \(ND\) diagonal sample--feature couplings on sparse hardware
requires routing. For the submitted circuits, register-preserving variable
placement first permutes sample variables only within the sample register and
feature variables only within the feature register. If
\(E_{\times}\) is the set of executable cross-register position pairs, the
placement solves
\[
  (\pi_{\mathcal S},\pi_{\mathcal F})
  \in \arg\max_{\pi_{\mathcal S},\pi_{\mathcal F}}
  \sum_{(u,v)\in E_{\times}}
  \left|W_{\pi_{\mathcal S}(u),\pi_{\mathcal F}(v)}\right|,
\]
where both maps are permutations within their respective registers. The dense
QUBO is permuted consistently, leaving the optimization problem and both
cardinality constraints unchanged. Writing \(W\) in the placed logical order, a
binary mask \(M\) then selects which sample--feature couplings remain in the
phase separator,
\[
  \widetilde W=M\odot W,\qquad M_{ij}\in\{0,1\},
\]
where \(M\) is either a threshold/top-\(K\) term mask used for method
controls or the hardware-aligned mask induced by executable cross-register
edges in the selected layout. The sparse objective
\(\widetilde E(\vecs,\vecf)\) is
\[
  \widetilde E(\vecs,\vecf)=
  -\sum_i a_i s_i-\sum_j b_j f_j
  -\lambda\sum_{i,j}\widetilde W_{ij}s_if_j .
\]
The Ising coefficients are recomputed from \(\widetilde W\), so
sparsification changes which diagonal cost terms appear in the Hamiltonian. The
hardware experiments retain hardware-edge coefficients without refitting or
adding routed couplings, so
the executed Hamiltonian is a hardware-compatible surrogate whose relation to
the dense objective is assessed by retained mass, dense rescoring, and sampling
metrics. Patch selection uses backend calibration data, while variable placement
is determined from the dense objective and fixed hardware support before
simulator or hardware outcomes are available. After layout-aware decoding, the inverse
within-register permutation restores the original sample and feature labels
before scoring. Exact-budget
initialization and Block~XY mixing are otherwise unchanged.
Retained bilinear mass \(R_{\mathcal S\mathcal F}\) and retained-mass ratio
\(\rho_{\mathcal S\mathcal F}\) quantify the absolute and relative
cross-coupling mass retained by the mask:
\[
  R_{\mathcal S\mathcal F}=\sum_{i,j}M_{ij}|W_{ij}|,\qquad
  \rho_{\mathcal S\mathcal F}=
  R_{\mathcal S\mathcal F}/\sum_{i,j}|W_{ij}|.
\]
The retained-mass ratio records how much of the dense bilinear objective remains
in the sparse phase separator. When \(\rho_{\mathcal S\mathcal F}\) is small,
the submitted circuit mainly optimizes the sparse surrogate. Dense rescoring
evaluates its samples under the original objective, whereas direct validation
of the dense bilinear coupling requires instances with more retained
sample--feature terms.

For threshold masks, the objective distortion also has a deterministic bound.
If every omitted coupling satisfies \(|W_{ij}|<\tau\), then, with \(\lambda\)
denoting the bilinear coefficient in Eq.~\eqref{eq:hamiltonian},
\begin{equation}
  |E(\vecs,\vecf)-\widetilde E(\vecs,\vecf)| \le \lambda km\tau,
  \label{eq:sparsification_bound}
\end{equation}
for every feasible \((\vecs,\vecf)\). When \(\lambda>0\), the
dense-objective optimality gap transfers within \(2\lambda km\tau\).
Hardware-aligned and top-\(K\) masks are evaluated empirically because they do
not generally satisfy a useful common threshold bound.

\subsection{Feasible Initialization}
\label{sec:init-options}

The Dicke state $\ket{D^n_k}$ is the uniform superposition over all $n$-qubit
computational basis states with Hamming weight
exactly~$k$~\cite{dicke1954coherence,bartschi2019deterministic}.
\begin{equation}
  \ket{D^n_k} = \binom{n}{k}^{-1/2} \sum_{\substack{x \in \{0,1\}^n \\[1pt] \HW(x) = k}}
  \ket{x}\,.
  \label{eq:dicke}
\end{equation}

For bipartite constraints $\lvert\vecs\rvert = k$ and
$\lvert\vecf\rvert = m$, we use the tensor product initialization
\(\ket{\psi_0}\).
\begin{equation}
  \ket{\psi_0} = \ket{D^N_k}_{\mathcal{S}} \otimes \ket{D^D_m}_{\mathcal{F}}\,,
  \label{eq:init_state}
\end{equation}
where subscripts~$\mathcal{S}$ and~$\mathcal{F}$ denote the sample and feature
registers. For the quantum ansatz, the feasible subspace is the Hilbert-space
span of computational-basis states in \(\Omega\). This state lies entirely in
that subspace.

Initialization trades off symmetric coverage of the feasible sector against
depth and hardware robustness. Any feasible $(k,m)$ initialization remains
feasible under Block~XY, so we select the lowest-depth preparation compatible
with the experiment.

\paragraph{Tensor Dicke initialization.}
The tensor Dicke state is uniform over all feasible bitstrings and well matched
to number-conserving dynamics. Its preparation cost limits its use here to
noiseless statevector simulations and circuit-resource analyses.

\paragraph{Feasible computational-basis initialization.}
A single feasible bitstring is prepared using only $X$ gates. This minimizes
preparation depth. The mixer must then spread the resulting initialization bias.
The fixed-threshold simulation studies and hardware executions use this
exact-feasible basis initialization unless explicitly stated otherwise.

\paragraph{Classical warm-start initialization.}
CG-XY-\qaoa{} can start from any feasible classical bitstring prepared with
\(X\) gates. The 36--64-qubit warm-start and matched-control protocols are
defined in Section~\ref{sec:implementation}.

\subsection{Block~XY Mixer}

The pairwise XY interaction \(\Hxy^{(i,j)}\) preserves total Hamming weight and
is given by
\begin{equation}
  \Hxy^{(i,j)} = \tfrac{1}{2}\bigl(X_i X_j + Y_i Y_j\bigr)
               = \sigma_i^+ \sigma_j^- + \sigma_i^- \sigma_j^+\,.
  \label{eq:xy_pair}
\end{equation}
With $\sigma^\pm=(X\pm iY)/2$, the products
$\sigma_i^+\sigma_j^-$ and $\sigma_i^-\sigma_j^+$ transfer one excitation
between qubits~$i$ and~$j$ without changing the total count.

We construct a \emph{block} XY mixer \(\Hmix\) that acts independently on each
register.
\begin{equation}
  \Hmix = \underbrace{\sum_{(i,i') \in E_{\mathcal{S}}} \Hxy^{(i,i')}}_{\text{sample register}}
        + \underbrace{\sum_{(j,j') \in E_{\mathcal{F}}} \Hxy^{(j,j')}}_{\text{feature register}}\,,
  \label{eq:block_mixer}
\end{equation}
where $E_{\mathcal{S}} \subseteq \{(i,i'): i,i' \in [N], i \neq i'\}$ and
$E_{\mathcal{F}} \subseteq \{(j,j'): j,j' \in [D], j \neq j'\}$ define the mixer
connectivity graphs within each register. Any such graph preserves Hamming
weight. Connectedness is needed for the mixer graph to be irreducible on the
fixed-weight sector of that register. Common choices include ring graphs (for
logical simplicity) or hardware-native edges (to minimize routing overhead).
The product mixer \(U_M(\beta)\) uses an ordered product of two-qubit XY gates
over edge-color classes,
\begin{equation}
  U_M(\beta)=
  \prod_c\prod_{(i,i')\in E_{\mathcal S}^{(c)}}
    e^{-i\beta H_{\mathrm{XY}}^{(i,i')}}
  \prod_c\prod_{(j,j')\in E_{\mathcal F}^{(c)}}
    e^{-i\beta H_{\mathrm{XY}}^{(j,j')}},
  \label{eq:block_mixer_product}
\end{equation}
where edges within a color class are disjoint. References to
$e^{-i\beta H_{\mathrm{mix}}}$ denote the ideal continuous-time mixer model.
The hardware circuits use the product mixer in
Eq.~\eqref{eq:block_mixer_product}. Because factors from different color
classes need not commute, their order defines the hardware transport ansatz.

\begin{proposition}[Block~XY preserves both cardinality constraints]
\label{prop:feasibility}
The Block~XY mixer~\eqref{eq:block_mixer} preserves both constraints
$\lvert\vecs\rvert = k$ and $\lvert\vecf\rvert = m$ independently. Starting from
any state supported on the $(k,m)$ feasible subspace, including the tensor
Dicke state~\eqref{eq:init_state} or an exact-feasible basis state, all
\qaoa{} iterates remain in the feasible subspace.
\end{proposition}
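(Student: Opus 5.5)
The plan is to prove that the two register-wise number operators commute with every gate in the circuit, and then deduce that the feasible subspace is invariant.

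Define
\[
  \hat n_{\mathcal S}=\sum_{i=1}^N \tfrac12(I-Z_i),\qquad
  \hat n_{\mathcal F}=\sum_{j=1}^D \tfrac12(I-Z_{N+j}).
\]
With the convention $x=(I-Z)/2$, their joint eigenspace with eigenvalues $(k,m)$ is exactly the span of the computational-basis states in $\Omega$. I will show that every pairwise XY term commutes with both operators, for each edge of $E_{\mathcal S}$ and each edge of $E_{\mathcal F}$.

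\textbf{Step 1: XY terms commute with the number operators.} Take $H_{\mathrm{XY}}^{(i,i')}=\sigma_i^+\sigma_{i'}^-+\sigma_i^-\sigma_{i'}^+$ from Eq.~\eqref{eq:xy_pair}. A direct computation gives
\[
  \bigl[\tfrac12(I-Z_i)+\tfrac12(I-Z_{i'}),\,H_{\mathrm{XY}}^{(i,i')}\bigr]=0.
\]
This uses $[Z,\sigma^\pm]=\pm2\sigma^\pm$, so the contributions from qubits $i$ and $i'$ cancel. Equivalently, $H_{\mathrm{XY}}^{(i,i')}$ acts as zero on $\ket{00}$ and $\ket{11}$ and swaps $\ket{01}\leftrightarrow\ket{10}$, so it preserves the local weight. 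The terms of $\hat n_{\mathcal S}$ on other qubits act on different tensor factors and commute trivially. Every qubit of a sample edge lies in $\mathcal S$, so the edge term also commutes with $\hat n_{\mathcal F}$. The same argument applies symmetrically to feature edges.

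\textbf{Step 2: the mixer unitaries commute with the number operators.} By linearity, $\Hmix$ in Eq.~\eqref{eq:block_mixer} commutes with both number operators. Hence so does $e^{-i\beta\Hmix}$, and so does each factor $e^{-i\beta H_{\mathrm{XY}}^{(i,i')}}$. Products of commuting unitaries also commute, so the product mixer in Eq.~\eqref{eq:block_mixer_product} commutes with both number operators for any edge order and color classes. This matters because the hardware circuits use the product form.

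\textbf{Step 3: the phase separators commute with the number operators.} Every variant of the cost layer is diagonal in the computational basis. This covers the tied layer, Eq.~\eqref{eq:bilinear_grouped_phase_separator}, Eq.~\eqref{eq:grouped_phase_separator}, and the sparsified layers built from $\widetilde W$. Since $\hat n_{\mathcal S}$ and $\hat n_{\mathcal F}$ are also diagonal, they commute with all of these layers.

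\textbf{Step 4: conclusion.} Let $P_\Omega$ be the projector onto the $(k,m)$ joint eigenspace. Any unitary $U$ that commutes with both number operators satisfies $UP_\Omega=P_\Omega U$, so it maps the feasible subspace into itself. Both the tensor Dicke state of Eq.~\eqref{eq:init_state} and an exact-feasible basis state lie in this subspace. Induction over the $p$ alternating layers then shows that every iterate stays feasible, with arbitrary angles, layerwise mixer angles, and grouped phases.

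There is no real obstacle in this argument. The only step needing care is Step 2: the claim must cover the non-commuting ordered product of Eq.~\eqref{eq:block_mixer_product} as well as the ideal exponential. The argument handles this because each factor individually commutes with the number operators.
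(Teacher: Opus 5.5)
Your proof is correct and follows essentially the same route as the paper's: define the register number operators, show that each XY edge term commutes with them, note that every cost layer is diagonal, and conclude that each layer leaves the joint $(k,m)$ eigenspace invariant. Your factor-by-factor treatment of the ordered product mixer in Eq.~\eqref{eq:block_mixer_product} is a small but useful addition. The paper states the commutation explicitly only for $H_{\mathrm{mix}}$, although its per-term argument implies the same conclusion for the product form.
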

Proof in Supplementary Section~\ref{supp:elementary_formulation_lemmas}.
Proposition~\ref{prop:feasibility} gives the ideal constraint-preservation
reference for the decoded hardware-yield metrics defined in
Section~\ref{sec:experiments}.
The trainability considerations for this symmetry-restricted ansatz are
discussed in Supplementary Section~\ref{supp:trainability}.

\subsection{Algorithm}

Algorithm~\ref{alg:joint_qaoa} summarizes the joint CG-XY-QAOA
procedure. The depth parameter \(p \geq 1\) controls the number of cost--mixer
alternations. Table~\ref{tab:qaoa_parameterizations} lists the CG
parameterizations used in the experiments. For hardware runs, the edge-colored
Block~XY transport order, parameterization, and angles are fixed before
execution.
The XY-\qaoa{} baselines use the same Block~XY mixer with the ungrouped cost
phase \(U_C^{(\ell)}=\exp[-i\gamma_\ell(H_{\mathcal S}+H_{\mathcal F}
+H_{\mathcal S\mathcal F})]\). The layer-shared hardware baseline uses the
same cost form with a single shared \(\gamma\) and \(\beta\).
Figure~\ref{fig:circuit}
illustrates the logical circuit structure for a single $p=1$ layer.
Within a fixed implementation path, depth, and hardware patch, these variants use
the same submitted two-qubit structure. They differ only in whether the diagonal phase
and mixer angle parameters are tied or freed. Hardware comparisons state the
implementation path separately.

\begin{table}[!htbp]
\centering
\footnotesize
\setlength{\tabcolsep}{3pt}
\renewcommand{\arraystretch}{1.02}
\caption{CG-XY-QAOA angle schedules. Here \(\mathcal S\), \(\mathcal F\), and
\(\mathcal S\mathcal F\) denote the sample, feature, and sample--feature cost
terms. Tied XY-QAOA is recovered by setting all cost angles equal and sharing
the mixer angle.}
\label{tab:qaoa_parameterizations}
\TableRowsSetup
\begin{tabularx}{\linewidth}{@{}lXXr@{}}
\TableHideRows
\toprule
Variant & Cost-angle schedule & Mixer schedule & Count \\
\midrule
\TableBodyRows
Bilinear-CG & shared \(\gamma_{\mathrm{marg}}\) with layerwise
\(\gamma_{\mathcal S\mathcal F,\ell}\) & shared \(\beta\) & \(p+2\) \\
Fixed-transport grouped & layerwise \(\gamma_{\mathcal S,\ell}\),
\(\gamma_{\mathcal F,\ell}\), \(\gamma_{\mathcal S\mathcal F,\ell}\) &
shared \(\beta\) & \(3p+1\) \\
Fully grouped & layerwise \(\gamma_{\mathcal S,\ell}\),
\(\gamma_{\mathcal F,\ell}\), \(\gamma_{\mathcal S\mathcal F,\ell}\) &
layerwise \(\beta_\ell\) & \(4p\) \\
\bottomrule
\end{tabularx}
\end{table}

\begin{algorithm}[!htbp]
\normalsize
\caption{CG-XY-QAOA construction and sampling}
\label{alg:joint_qaoa}
\begin{algorithmic}[1]
\Require Data matrix $\matX \in \mathbb{R}^{N \times D}$, calibration reference
set $\mathcal R$, budgets $k,m$, depth $p$, CG parameterization $\mathcal P$
from Table~\ref{tab:qaoa_parameterizations}, selected angles
$\boldsymbol{\theta}_{\mathcal P}$ compatible with $\mathcal P$
\Ensure Best observed feasible sample and feature sets $\widehat S,\widehat F$
\State Fit feature-wise location/scale on $\mathcal{R}$ and compute residuals \(Z_{ij} \gets (X_{ij} - \hat{\mu}_j) / \hat{\sigma}_j\)
\State Calibrate cross-term weights $W_{ij}$ from $Z_{ij}$ (Eq.~\eqref{eq:weights})
\State Build \qubo{}/Ising cost coefficients from $W_{ij}$, including the chosen signs and scales
\State Decompose the diagonal cost as $H_{\mathcal S}+H_{\mathcal F}+H_{\mathcal S\mathcal F}$ (Eq.~\eqref{eq:grouped_cost_decomposition})
\State Initialize $\ket{\psi_0}$ using tensor Dicke, a feasible basis state, or a classical warm start from Section~\ref{sec:init-options}
\State Instantiate the CG phase separator specified by $\mathcal P$ using
$\boldsymbol{\theta}_{\mathcal P}$
\Statex \quad Bilinear-CG: \(U_C^{(\ell)}=\exp[-i\{\gamma_{\rm marg}(H_{\mathcal S}+H_{\mathcal F})+\gamma_{{\mathcal S\mathcal F},\ell}H_{\mathcal S\mathcal F}\}]\)
\Statex \quad Grouped CG variants: \(U_C^{(\ell)}=\exp[-i(\gamma_{{\mathcal S},\ell}H_{\mathcal S}+\gamma_{{\mathcal F},\ell}H_{\mathcal F}+\gamma_{{\mathcal S\mathcal F},\ell}H_{\mathcal S\mathcal F})]\)
\For{$\ell = 1, \ldots, p$}
  \State Apply the selected diagonal phase separator \(U_C^{(\ell)}(\mathcal P)\)
  \State Apply the product Block~XY mixer \(U_M(\beta_\ell)\)
  \Statex \quad using a shared mixer angle for bilinear-CG and fixed-transport grouped CG,
  \Statex \quad and layerwise \(\beta_\ell\) for fully grouped CG
\EndFor
	\State Measure \(N_{\rm shots}\) shots and apply the recorded layout decoder when routing is present
		\State Score decoded exact-budget samples under the selected dense or sparse objective
		\State \Return best observed feasible pair $(\widehat S,\widehat F)$
\end{algorithmic}
\end{algorithm}

\paragraph{Classical outer loop.}
For a chosen parameterization \(\mathcal P\), a classical search proposes an
angle vector \(\boldsymbol{\theta}_{\mathcal P}\), evaluates a scalar objective
from the resulting circuit distribution, and uses that value to propose the
next vector. The search fixes the selected angles for final sampling when it
reaches its evaluation budget or stopping criterion. The reported
angle-selection studies evaluate candidates by statevector, shot-based, or
tensor-network simulation, and all hardware angles are fixed before
submission. Algorithm~\ref{alg:joint_qaoa} describes circuit construction and
sampling for a supplied angle vector. Section~\ref{sec:implementation} and
Supplementary Section~\ref{supp:qaoa_angle_protocols} specify the objective,
optimizer, and budget used by each experiment.

\begin{figure}[!htbp]
  \centering
  \begingroup
  \resizebox{0.85\linewidth}{!}{\input{figures/qaoa_circuit.tex}}
  \endgroup
  \caption{Logical CG-XY-\qaoa{} circuit for one layer of exact-budget
  bipartite selection ($p=1$). The example uses $N = 4$, $k = 2$, $D = 3$,
  and $m = 1$ with tensor Dicke
  initialization. Hardware runs use the feasible-basis and classical warm-start
  initializations in Section~\ref{sec:init-options}. The diagonal cost layer
  applies sample and feature fields together with retained sample--feature
  $R_{ZZ}$ rotations. Independent Block~XY mixers then preserve each register's
  Hamming weight. For $p>1$, the cost and mixer layers repeat with the
  parameterization in
  Table~\ref{tab:qaoa_parameterizations}.}
  \label{fig:circuit}
\end{figure}

\FloatBarrier

\subsection{Complexity Analysis}

Complexity enters through the constrained bipartite selection problem and the
circuit resources of its CG-XY-\qaoa{} realization.

\emph{Problem complexity.}
For the optimization problem, the feasible search space already has size
$\binom{N}{k}\binom{D}{m}$ under the exact cardinality constraints. The decision
form is \textsc{NP}-complete by reduction from Balanced Complete Bipartite
Subgraph~\cite{garey1979computers}, so the optimization form is \textsc{NP}-hard. Supplementary
Proposition~\ref{prop:cbs_complexity} gives the short reduction in this
notation. The reduction sets \(W_{ij}\) to the input graph's edge indicator and
sets \(k=m=K\), asking whether a feasible \(K\)-by-\(K\) subgraph contains all
\(K^2\) cross edges. The exact-budget classical baselines and
constrained-\qaoa{} method both face this combinatorial search space.

\emph{Circuit resources.}
For the quantum realization, the Block~XY mixer requires $\mathcal{O}(N + D)$
two-qubit gates per layer on ring connectivity, while the dense bipartite cost
layer contributes $\mathcal{O}(ND)$ commuting $ZZ$ terms before routing. On a
fixed-degree hardware graph, hardware-aligned sparsification reduces the
retained cost layer to $\mathcal{O}(|E_{\mathrm{hw}}|)$ terms, making the
submitted resource burden depend on the chosen retained hardware edges rather
than on the full logical bipartite complete graph. Tensor Dicke initialization
can dominate depth on sparse superconducting hardware, whereas feasible basis-state
initialization uses only $\mathcal{O}(k+m)$ single-qubit gates. Parallel
scheduling follows standard edge-coloring arguments for ring
mixers~\cite{hadfield2019quantum,wang2020xy,west2001introduction}.

\section{Experimental Results}
\label{sec:experiments}

The evaluation addresses formulation value, quantum sampling behavior, and
hardware execution against distinct references.
\begin{itemize}
  \item \textbf{Formulation results.} Synthetic and public-benchmark
    tests measure when the calibrated joint exact-budget objective improves
    fixed-budget selection over the feature-first sequential baselines.
  \item \textbf{CG-XY-\qaoa{} results.} Statevector simulations, fixed-threshold
    simulations, and fixed-angle hardware runs evaluate whether grouped phase schedules
    improve exact-budget quantum sampling metrics for the same objective, and
    when the bilinear-CG specialization provides a lower-dimensional
    fixed-transport hardware variant.
  \item \textbf{Hardware resource and sampling results.} Transpilation controls
    quantify submitted circuit resources. Width-scaling and classical warm-start
    runs measure exact-budget shot mass and low-energy feasible samples on IBM
    Heron R3 hardware.
\end{itemize}

\subsection{Experimental Setup}
\label{sec:implementation}

Supplementary Section~\ref{supp:notation} collects recurring notation and
sampling metrics, while Supplementary Section~\ref{supp:hardware_detail}
defines submitted-circuit resource metrics.
Supplementary Section~\ref{supp:repro} gives the run
protocols and hyperparameters. Supplementary Sections~\ref{supp:creditcard_checks}
and~\ref{supp:hardware_detail} give the public-benchmark protocol checks and
hardware/transpiler settings. All Qiskit-based statevector and
transpilation-resource analyses use Qiskit~2.1.2~\cite{qiskit}. Finite-shot
readout tables sample from optimized state distributions, while expected-energy
summaries compute expectations directly from those distributions.

\paragraph{Evaluation settings and metrics.}
We use three evaluation settings. Synthetic experiments have full sample and
feature ground truth. Labeled public-benchmark experiments provide anomaly
labels without feature ground truth. Unlabeled analyses use objective values
because recovery scores are unavailable. For synthetic experiments we report
combined F1,
\[
\tfrac{1}{2}(\mathrm{F1@k}_{\text{samples}}+\mathrm{F1@m}_{\text{features}}),
\]
which reduces to the hit fractions $\lvert \hat S \cap S^\star\rvert/k$ and
$\lvert \hat F \cap F^\star\rvert/m$ because predicted and planted sets have
matching fixed sizes. On labeled benchmark data we report anomaly F1@k as the
set-F1 score
\[
  \mathrm{F1@k}=\frac{2|\hat S\cap S^+|}{|\hat S|+|S^+|},
\]
where $S^+$ is the positive set in the candidate pool. In the balanced
public-benchmark evaluations, $|\hat S|=|S^+|=k$, so anomaly F1@k equals precision@k and
$1-\mathrm{F1@k}$ is the false-review rate. Throughout, $\Delta$F1 denotes
joint minus sequential.

\paragraph{Quantum sampling and postprocessing.}
Noiseless depth-frontier simulations use the logical constraint-preserving
ansatz with tensor-Dicke initialization and ring Block~XY mixing. Angles are
optimized with the depth-ladder optimizer and bet-and-run budget. The
primary simulator metric is the optimized feasible-sector score
\(\alpha=(\bar C_\Omega-\mathbb E[C])/(\bar C_\Omega-C^\star)\), where
\(C^\star\) is the exact feasible optimum and
\(\bar C_\Omega=|\Omega|^{-1}\sum_{x\in\Omega}C(x)\) is the uniform-feasible
reference.
Finite-shot readout quality is computed from exact-budget samples of the
optimized distribution.

The fixed-threshold CG-XY-QAOA simulations select by best-known-threshold (BK)
hit rate \(p_{\rm BK}\). For a
threshold \(E_{\rm BK}\) fixed before the corresponding hardware run,
\[
  p_{\rm BK}=P\{x\in\Omega,\ E(x)\le E_{\rm BK}\}.
\]
The threshold is taken from the pre-run threshold manifest, which records the
best feasible energy available from the corresponding classical or simulation
evidence pool when the threshold was fixed.
This quantity is the per-shot probability of an exact-feasible sample at or below
that energy.
Mean energy and the conditional value-at-risk metric \(\mathrm{CVaR}_5(E)\),
defined here as the mean energy of the lowest-energy 5\% of decoded-feasible
shots, are secondary distribution metrics. These
simulations use exact-feasible basis initialization to match the hardware
protocol.

The noiseless depth studies report every optimized trajectory from the
prespecified \(p=1,\ldots,8\) grids. None is excluded for nonmonotonicity. Mean
\(\alpha\) increases at each successive depth in every displayed
perturbation-mode and signal-strength combination. Exact-budget XY-\qaoa{}
evolves inside a constrained feasible sector, so decoded exact-budget mass and
approximation-score scaling are shown alongside the depth budget.

Hardware experiments use the same logical cost and edge-colored Block~XY
transport structure with protocol-specific angle selection. Fixed-angle tied
hardware runs measure exact-budget sampling after final-layout decoding. Fully grouped
cost-plus-mixer experiments test transport freedom inside the CG-XY-QAOA
ansatz. Fixed-threshold and 32--64-qubit fixed-angle bilinear-CG runs keep tied transport
fixed. The fixed-threshold runs use preselected component phases. The
32--64-qubit constant-angle runs lie on the tied-phase subspace and test
sparse-circuit execution scale rather than a grouping benefit.
Supplementary Table~\ref{tab:penalty_x_vs_strict_feasible_qaoa}
reports a penalty-X control, which uses an unconstrained X mixer with
cardinality penalties.

\paragraph{Initialization protocols and tensor-network angle selection.}
The hardware scaling analysis uses three initialization protocols.
Table~\ref{tab:hardware_run_summary} reports the fixed-angle and warm-start
summaries, while Supplementary
Table~\ref{tab:cgxy_fixed_angle_condition_details} gives the matched
random-feasible controls.
The \emph{fixed-angle} protocol prepares
the canonical exact-budget basis state
\(\ket{1^k0^{N-k}}\otimes\ket{1^m0^{D-m}}\) and runs the circuit at
QUBO-independent constant tied angles
\(\gamma_\ell=\beta_\ell=\pi/4\). The \emph{classical warm-start} protocol prepares
the best feasible bitstring from the dense-\(W\) heuristic pool as a basis state
and applies the selected bilinear-CG circuit. The \emph{random-feasible}
protocol uses an independently sampled feasible basis state with the same angle
schedule, isolating the effect of the warm-start bitstring.

For the 36-qubit \(p=3\), 52-qubit \(p=2\), and 64-qubit \(p=2\)
warm-start executions and matched random-feasible controls, angles are selected
before hardware execution by a tensor-network angle-selection study implemented
with cuQuantum~\cite{bayraktar2023cuquantum}. The study evaluates candidate
angle vectors using the submitted bilinear-CG sparse-cost circuit. It decodes
sampled bitstrings, retains the exact \((k,m)\) sector, rescores the retained
samples on the original dense QUBO, and ranks candidates by dense-objective
\(\mathrm{CVaR}_5(E)\). Consequently, warm-start executions use dense-QUBO
information before hardware execution, while fixed-angle executions use the
canonical feasible basis state and QUBO-independent constant tied angles. Exact run
settings are specified in Supplementary Section~\ref{supp:repro}.

\paragraph{Public-benchmark protocol and baselines.}
The public-benchmark evaluation uses time-split candidate pools. The split
matches the deployment pattern in which calibration and baseline choices are
fixed before later review decisions are evaluated. All methods solve
the same fixed-$(k,m)$ stage-2 problem on later-time slices after calibration is
fit on earlier windows. The feature-first sequential aggregation rule is
selected on the held-out calibration window and then fixed for test-window
evaluation. The same frozen stage-2 scoring map defines the objective
coefficients for both feature-first and joint selectors. Only the selection
procedure differs.
Let \(A_{ij}\ge 0\) denote the frozen stage-2 score matrix, which plays the
role of \(W_{ij}\) in the slice objective. The
feature-first sequential rule first aggregates each feature column by maximum,
sum, or median and keeps the top \(m\) features. It then applies the same
aggregation over the selected feature set to score samples and keeps the top
\(k\) samples. This rule is the matched stage-2 baseline for the fixed-budget
selection problem. Full anomaly-detection pipelines such as Isolation Forest or
Local Outlier Factor address a different sample-scoring task.

The primary public-benchmark comparisons use balanced candidate pools for both
Credit Card and IBM. The Credit Card setting uses label-free feature-local
stage-2 calibration, while the IBM setting uses a supervised stage-2 scorer.
IBM trigger-based label-free controls are in
Supplementary Table~\ref{tab:joint_vs_sequential_real_ibm_aml_hi_small_trigger_menu_stage1_baseline}.
Supplementary Table~\ref{tab:solver_matched_compute} compares the methods under
matched compute. Sequential feature-first rules are cheaper per call.
The joint heuristics perform additional local-search work to optimize the
coupled objective under the same fixed stage-2 budgets. Reported confidence
intervals are conditional on the selected split, calibration map, solver set,
and protocol fields specified for each experiment. Protocol-selection
uncertainty is outside those intervals.

\subsection{Classical Baselines}
\label{sec:classical-baselines}

The classical comparisons use constraint-respecting optimizers with different
roles. Swap-SA denotes the full-neighborhood feasible-subspace simulated
annealing used in the tables. It is a problem-specific implementation of
simulated annealing on fixed-cardinality bipartite subsets. The search state is
a pair \(S\subseteq[N]\), \(F\subseteq[D]\) with \(|S|=k\) and \(|F|=m\). Each
proposal is a sample, feature, or coupled swap that remains in this feasible
sector. Non-improving proposals are accepted with a Metropolis-style
temperature rule under fixed or annealed schedules
\cite{metropolis1953equation,kirkpatrick1983optimization}. Greedy local search,
coordinate ascent, and tabu search provide deterministic or memory-based joint
references and sometimes tie or exceed Swap-SA in these comparisons.

The ring-local Kawasaki Metropolis reference isolates the effect of the
feasible swap graph. It uses
nearest-neighbor token exchanges on the same ring-local adjacency graph as the
Block~XY mixer. These exchanges are the standard local moves in Kawasaki and
exclusion-process dynamics~\cite{kawasaki1972kinetics,spitzer1970interaction}.
It serves as a neighborhood-matched classical walk on the feasible graph.
The full-neighborhood classical comparisons use Swap-SA, greedy local search,
coordinate ascent, and tabu search. Coherent Block~XY \qaoa{} uses the same
locality constraint together with superposition, cost-phase interference, and
variational angle optimization.
Supplementary Section~\ref{supp:swap_sa} gives the Swap-SA procedure and
specifies the ring-local reference. Supplementary
Table~\ref{tab:solver_matched_compute} compares the matched wall-clock performance of
the classical heuristics on the synthetic stage-2 setting.

\FloatBarrier
\subsection{Formulation-Level Selection Results}

The formulation-level experiments test the calibrated objective before adding
quantum execution effects. Synthetic instances test recovery of planted
sample--feature structure under controlled ground truth. The Credit Card and IBM
IT-AML slices test the same constrained formulation on candidate pools held
fixed across methods. These comparisons measure joint-versus-sequential
differences before quantum execution effects are introduced. CG-XY-\qaoa{} is one
constraint-preserving quantum optimizer for that formulation.
Table~\ref{tab:formulation_evidence_summary} summarizes representative settings and
names the joint heuristic attaining each displayed value.

\begin{table}[!htbp]
\centering
\caption[Formulation-level selection results]{Formulation-level selection results. Joint and sequential selection tie at zero synthetic noise. The best prespecified joint heuristic improves every nonzero-noise synthetic row and every displayed public-benchmark row. Entries are means comparing the reported feature-first sequential baseline with the best result among the prespecified full-neighborhood joint heuristics for the same fixed-budget objective. Synthetic rows average 200 seeds, and public-benchmark rows average 25 time-split slices. \(\Delta\) is joint minus sequential in percentage points.}
\label{tab:formulation_evidence_summary}
\small
\setlength{\tabcolsep}{2.1pt}
\renewcommand{\arraystretch}{1.02}
\TableRowsSetup
\begin{tabular}{@{} l l l r r r r l @{}}
\TableHideRows
\toprule
Benchmark & Parameters & Metric & Seeds / slices & Sequential & Joint & \(\Delta\) pp & Joint heuristic \\
\midrule
\TableBodyRows
Synthetic & $\nu=0\%$ & combined F1 & 200 & 0.833 & 0.833 & +0.0 & All joint methods \\
Synthetic & $\nu=20\%$ & combined F1 & 200 & 0.854 & 0.885 & +3.1 & All joint methods \\
Synthetic & $\nu=40\%$ & combined F1 & 200 & 0.821 & 0.955 & +13.4 & Swap-SA/Coord.-ascent \\
Synthetic & $\nu=50\%$ & combined F1 & 200 & 0.791 & 1.000 & +20.9 & Swap-SA/Coord.-ascent \\
Synthetic & $\nu=60\%$ & combined F1 & 200 & 0.752 & 0.944 & +19.2 & Swap-SA/Coord.-ascent \\
Credit Card & $(20,12,3,4)$ & F1@k & 25 & 0.800 & 0.827 & +2.7 & Swap-SA \\
Credit Card & $(50,12,5,4)$ & F1@k & 25 & 0.664 & 0.760 & +9.6 & Coord.-ascent \\
Credit Card & $(80,12,8,4)$ & F1@k & 25 & 0.745 & 0.795 & +5.0 & Greedy+LS \\
IBM IT-AML & $(20,38,3,4)$ & F1@k & 25 & 0.467 & 0.720 & +25.3 & Tabu \\
IBM IT-AML & $(50,38,5,4)$ & F1@k & 25 & 0.120 & 0.400 & +28.0 & Greedy+LS \\
\bottomrule
\end{tabular}
\end{table}

\paragraph{Synthetic benchmarks.}
We generate synthetic instances by sampling $X\in\mathbb{R}^{N\times D}$ from a
base distribution, sampling $k$ anomaly indices, and injecting mean shifts on an
informative feature subset. A noise fraction $\nu\in[0,1]$ controls what
proportion of features are uninformative, and for these experiments we use the
isolation-style weights $W_{ij}=\lvert Z_{ij}\rvert$ to isolate the effect of
joint selection from calibration choices. After standardization, the injected
shifts raise the expected absolute residuals on the planted sample--feature
block, providing the empirical analogue of the centered score-field model. The
sequential reference is feature-first. It ranks features by an aggregation score over samples, keeps the
top-$m$, then ranks samples using only those features and keeps the top-$k$.

On synthetic data, the sequential reference is the maximum, sum, or median
aggregation rule with the highest planted combined F1. The synthetic noise
sweep uses
\((N,D,k,m)=(\JointSeqNSamples,\JointSeqNFeatures,\JointSeqKAnomalies,\JointSeqMFeatures)\),
averaged over \JointSeqNSeeds{} random seeds. At this size the exact joint
optimum is computable by enumerating all feasible $(S,F)$ pairs, so
Supplementary Table~\ref{tab:joint_vs_sequential_practical} compares the
sequential baseline, the prespecified joint heuristics, and the exact optimum
where enumeration is possible.
The wider synthetic parameter sweep appears in
Figure~\ref{fig:joint_vs_sequential_param_sweep_heatmaps}, with bootstrap
intervals in Table~\ref{tab:joint_vs_sequential_ci}.

\paragraph{Time-split financial benchmarks.}
On Credit Card fraud~\cite{dalpozzolo2015calibrating,kaggle_creditcardfraud},
feature-local calibration maps are fit without labels on the reference window
and evaluated on later-time balanced candidate pools. Each candidate scorer
receives two held-out calibration-window scores: its best prespecified joint F1
and its best feature-first F1. Their average determines the scorer fixed for
test evaluation. The same window selects the feature-first
aggregation rule before test evaluation. The selected rule is sum
aggregation in all balanced settings. In every balanced Credit Card setting, the best reported member of
the prespecified joint heuristic set improves mean anomaly F1@k over this
calibration-selected sequential baseline. Supplementary
Table~\ref{tab:joint_vs_sequential_real_creditcard} gives the full balanced
panel and the fixed-\(k\) review-loss interpretation.

On IBM IT-AML HI-Small~\cite{altman2023realistic,kaggle_ibm_itaml}, the balanced
stage-2 setting selects from the full \(D=38\) engineered feature width. Weights
use a supervised logistic-regression scorer fit before the test period, and the
feature-first sequential rule uses \(\max\) aggregation. In both balanced IBM
settings, the best reported member of the prespecified joint heuristic set
improves mean anomaly F1@k over the feature-first baseline. The 95\%
percentile-bootstrap intervals do not overlap at either candidate-pool size.
The largest lift occurs in the full-width balanced setting shown in
Supplementary Table~\ref{tab:joint_vs_sequential_real_ibm_aml_hi_small}. In that
setting, feature-first ranking must choose features before the final transaction
subset is known. Sensitivity analyses appear in
Tables~\ref{tab:joint_vs_sequential_real_creditcard_protocol_stress}
and~\ref{tab:joint_vs_sequential_real_ibm_aml_hi_small_trigger_menu_stage1_baseline}.
\FloatBarrier

These results quantify the formulation-level differences
between joint and feature-first selection independently of quantum
implementation. Across the synthetic noise sweep, the best practical joint
methods match the feature-first baseline at zero noise. At 20--60\% noise, the
mean gains range from \JointSeqMinDeltaFOnePct{} to
\JointSeqPracticalMaxDeltaFOnePct{} percentage points, and every paired 95\%
bootstrap interval lies strictly above zero. At this small size,
Swap-SA and coordinate ascent reach the exact joint optimum at every tested
noise level. On the financial benchmarks,
the sign and size of the joint-minus-sequential gap depend on the stage-2
evaluation protocol. The balanced Credit Card and IBM comparisons yield
positive joint-minus-sequential gaps under their calibration protocols.
Sensitivity analyses show that the margin can narrow or disappear under
unbalanced pools or broader trigger-defined queues.

\subsection{CG-XY-QAOA Sampling Results}
\label{sec:quality}

On exact planted joint-\qubo{} instances, Block~XY mixing preserves strict
\((k,m)\) feasibility, so depth and readout metrics can be evaluated inside the
feasible sector. Three metrics distinguish noiseless optimization quality,
fixed-threshold sampling, and hardware feasible-sector yield. The optimized
feasible-sector score \(\alpha\) measures noiseless depth-frontier quality,
best-known-threshold hit rate measures fixed-threshold sampling, and decoded
exact-budget mass measures hardware yield. Exact-optimum probability is included only
where the feasible sector is enumerable. Mean energy and
\(\mathrm{CVaR}_5(E)\) are secondary metrics for the sampled energy
distribution~\cite{barkoutsos2020cvar}.

The CG-XY-QAOA variants in Table~\ref{tab:qaoa_parameterizations} separate
diagonal phase grouping from mixer-transport freedom:
Corollary~\ref{cor:grouped-equal-norm-descent} covers fixed-mixer component
phases, and Supplementary
Proposition~\ref{prop:fully-grouped-endpoint-containment} covers the fully
grouped cost-plus-mixer variant. Noiseless statevector simulations identify
fully grouped cost-plus-mixer CG-XY-QAOA as a same-depth refinement. It improves
\(\alpha\) in \CGXYGroupedSameDepthWinCount{} of
\CGXYGroupedComparisonCount{} same-depth comparisons and
recovers a median \CGXYGroupedRecoveryMedianRangePct{} of the optimized
\(p\to2p\) gain without adding logical layers. The grouped depth-\(p\) schedule
exceeds the parameter-matched layerwise XY-\qaoa{} depth-\(2p\) reference in
\CGXYGroupedMatchedDepthWinCount{} of \CGXYGroupedComparisonCount{} comparisons
(Supplementary Table~\ref{tab:qaoa_calibration_grouped_multiangle}). The planted
depth-frontier sweep in Figure~\ref{fig:cgxy_noiseless_evidence} shows positive
same-depth lift through \(p=8\) for both residual and mixed perturbations.
Supplementary Table~\ref{tab:cgxy_grouping_randomization_control} reports
parameter-matched randomization controls. Across
\CGXYGroupingNullCaseCount{} instances, the sample-field, feature-field, and
bilinear split has median percentile
\CGXYGroupingNullTypePreservingMedianPercentilePct{} under type-preserving
regrouping and \CGXYGroupingNullSameSizeMedianPercentilePct{} under unrestricted
same-size partitioning. The semantic split is therefore distinguished within
the sample, feature, and coupling taxonomy, not among arbitrary same-size
partitions.

The fixed-threshold simulations evaluate the hardware sampling metric \(p_{\rm BK}\)
defined in Section~\ref{sec:implementation}, with \(E_{\mathrm{BK}}\) fixed
before the associated hardware run. On the matched eight-instance planted
panel, fully grouped CG-XY-QAOA improves BK hit rate over tied XY-QAOA on all
eight instances and has the largest mean \(p_{\rm BK}\). Its paired 95\%
bootstrap lift interval is strictly above zero.
Table~\ref{tab:cgxy_best_known_threshold_summary_p3}
reports the aggregate variant-level lifts. The selected fully grouped schedule
is evaluated on hardware in Section~\ref{sec:transpilation}.

\begin{table}[!htbp]
\centering
\small
\setlength{\tabcolsep}{5pt}
\renewcommand{\arraystretch}{1.1}
\caption[CG-XY-QAOA BK-threshold selection at p=3]{\emph{CG-XY-QAOA BK-threshold selection at \(p=3\).} Fully grouped has the largest mean \(p_{\mathrm{BK}}\) and improves all eight instances relative to XY-QAOA. The noiseless simulation uses an eight-instance planted 20-qubit sparse-cost panel. All variants use the same exact-feasible basis state, greedy edge-colored Block~XY order, and exact first-phase elision. They report the mean \(p_{\mathrm{BK}}=P(E\le E_{\mathrm{BK}})\) with fixed per-instance thresholds. Lift entries are paired mean differences [bootstrap 95\% confidence interval] against XY-QAOA.}
\label{tab:cgxy_best_known_threshold_summary_p3}
\TableRowsSetup
\begin{tabular}{@{}lrrr}
\TableHideRows
\toprule
Variant & Mean \(10^4p_{\mathrm{BK}}\) & \(10^4\Delta p_{\mathrm{BK}}\) vs XY-QAOA & Instances improved \\
\midrule
\TableBodyRows
XY-QAOA & 7.98 & reference & -- \\
Bilinear-CG & 8.60 & $+0.62$ [+0.15, +1.15] & 5/8 \\
Fixed-transport grouped & 28.8 & $+20.77$ [+2.47, +50.51] & 7/8 \\
Fully grouped & 39.6 & $+31.60$ [+11.43, +58.86] & 8/8 \\
\bottomrule
\end{tabular}
\end{table}

The enumerable \((N,D,k,m)=(14,10,3,3)\) depth series provides a
depth-resource reference for planted instances. The matched classical baseline
reaches the exact optimum on every instance in the series, while exact-optimum
readout mass is nonmonotone over \(p=1,\ldots,8\). The readout probabilities are in
Supplementary Table~\ref{tab:qaoa_sparse_exact_14_10_readout}.

\begin{figure}[!htb]
\centering
\includegraphics[width=\linewidth]{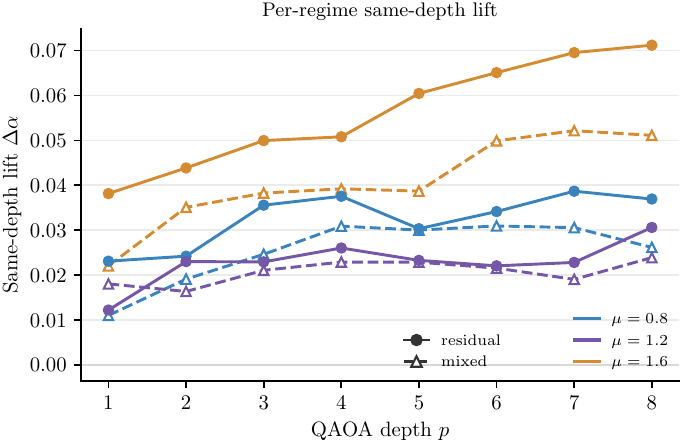}
\caption{Fully grouped cost-plus-mixer CG-XY-QAOA gives positive same-depth
noiseless score lift in every plotted perturbation cell. The panel reports
\(\Delta\alpha = \alpha_{\text{fully grouped}}-\alpha_{\text{tied}}\) over the
  planted-bicluster depth sweep. Solid circles denote residual perturbations,
  dashed triangles denote mixed perturbations, and color denotes signal strength
  \(\mu\). Here
\(\alpha=(\bar C_\Omega-\mathbb E[C])/(\bar C_\Omega-C^\star)\) equals one at
the exact feasible optimum and zero at the uniform-feasible reference. Full pooled
\(\alpha\) values are in Supplementary
Section~\ref{supp:cgxy_depth_resource_evidence}.}
\label{fig:cgxy_noiseless_evidence}
\end{figure}

\subsection{Cost Sparsification and Retained Coupling}
\label{sec:sparsification}

Register-preserving variable placement first permutes variables within their
respective registers. The dense QUBO is unchanged under the recorded
permutation. Hardware-aligned sparsification then removes nonhardware-edge
couplings from the phase separator and retains the remaining coefficients
without reweighting. The resulting sparse phase separator leaves the
exact-budget feasible sector and Block~XY mixer unchanged, so decoded
exact-budget mass remains a hardware-yield quantity while dense-objective energy
is a rescoring metric. Section~\ref{sec:cost-term-sparsification} defines the retained-mass
quantities used to compare the sparse surrogate with the dense objective. When
retained mass is small, the 32--64-qubit fixed-angle runs test
constraint-preserving execution of sparse surrogates dominated by marginal
fields, with limited direct evidence about the dense bilinear term. Dense
joint-selection quality is evaluated separately in the formulation-level
tables.

The hardware-aligned masks also reduce routing resources across every displayed
condition on both targets. Figure~\ref{fig:cross_platform_compilation} compares
dense and sparse costs under matched Qiskit optimization-level~3 configurations
on each target's native basis.

\begin{figure}[!htbp]
\centering
\includegraphics[width=0.94\linewidth]{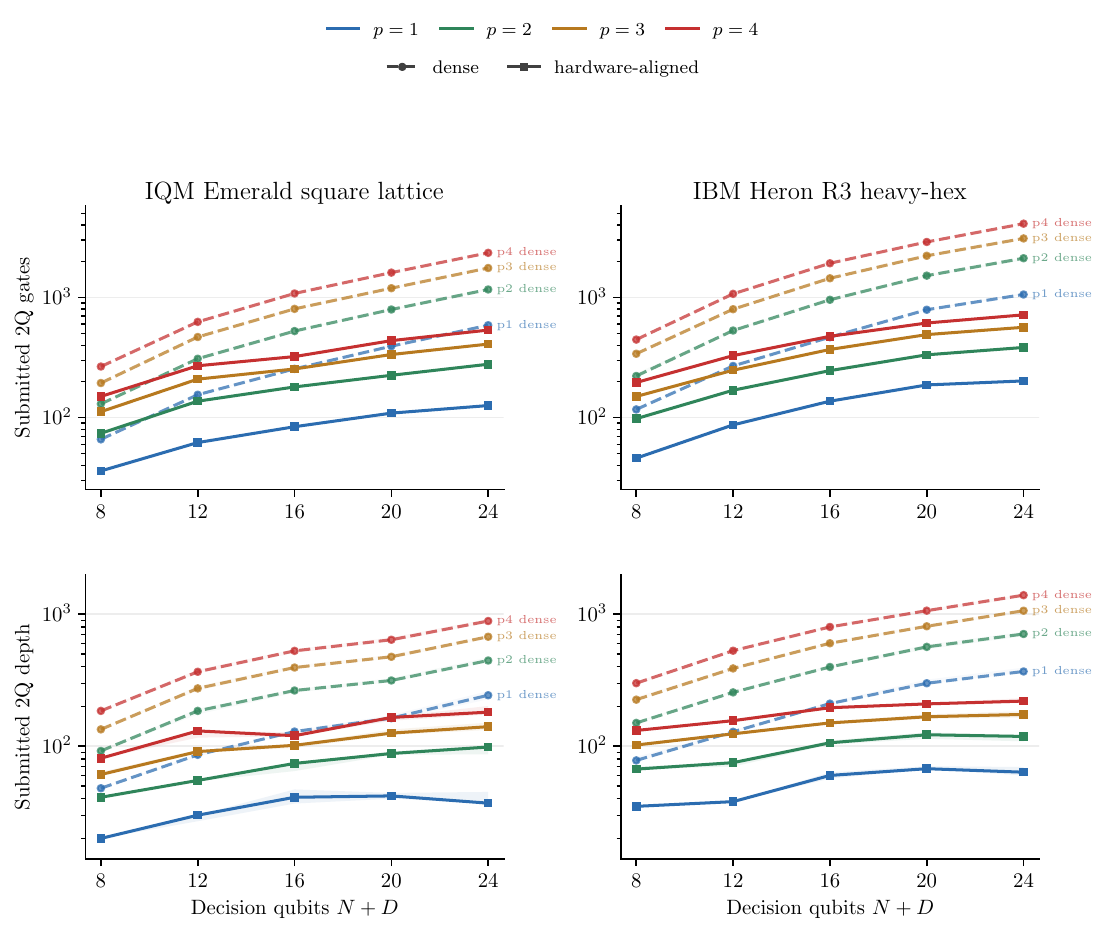}
\caption{\emph{Sparse-cost scaling across hardware targets.}
Hardware alignment lowers both resource measures in all
\HardwareAlignedCompileCellCount{} displayed target--width--depth cells. Across
the IQM Emerald square-lattice and IBM Heron R3 heavy-hex targets, the reduction
in median submitted two-qubit count has an observed range of
\HardwareAlignedTwoQCountReductionRangePct{}, and the reduction in median
submitted two-qubit depth has an observed range of
\HardwareAlignedTwoQDepthReductionRangePct{}. Each point is the median over
20 transpiler seeds, and bands show interquartile ranges. Solid lines denote
hardware-aligned costs, while dashed lines denote dense costs.}
\label{fig:cross_platform_compilation}
\end{figure}

The objective-level cost of sparsification is evaluated before hardware noise.
For each submitted 32--64-qubit condition, let \(x^\star_{\rm sp}\) be the exact sparse
optimum, \(E_d\) the dense objective, \(E_{\rm class,d}\) the strict-feasible
dense classical reference, and \(\bar E_{\rm RF}\) the matched random-feasible
mean. We report the dense-reference gain retained,
\begin{equation}
  g_{\rm dense}=
  \frac{\bar E_{\rm RF}-E_d(x^\star_{\rm sp})}
       {\bar E_{\rm RF}-E_{\rm class,d}}.
  \label{eq:dense_reference_gain_retained}
\end{equation}
This normalization equals one at the dense classical reference and zero at the
random-feasible mean. All five 32--64-qubit conditions lie strictly between these
anchors after dense rescoring. The sparse solutions therefore preserve part of
the dense-reference improvement. The remaining gap quantifies objective
distortion separately from hardware noise and angle selection.

Figure~\ref{fig:submitted_coupling_diagnostic} quantifies the cross-register
structure retained by the submitted Heron circuits from 20 to 64 decision
qubits. In every 32--64-qubit condition, the retained terms change the
sparse-objective optimum relative to diagonal-only exact-budget selection.
Supplementary Table~\ref{tab:cgxy_fixed_angle_condition_details} lists the
retained fraction for every Credit Card condition and compares strict and
band-1-repaired hardware samples with the corresponding classical
references.

\begin{figure}[!htbp]
\centering
\includegraphics[width=0.94\linewidth]{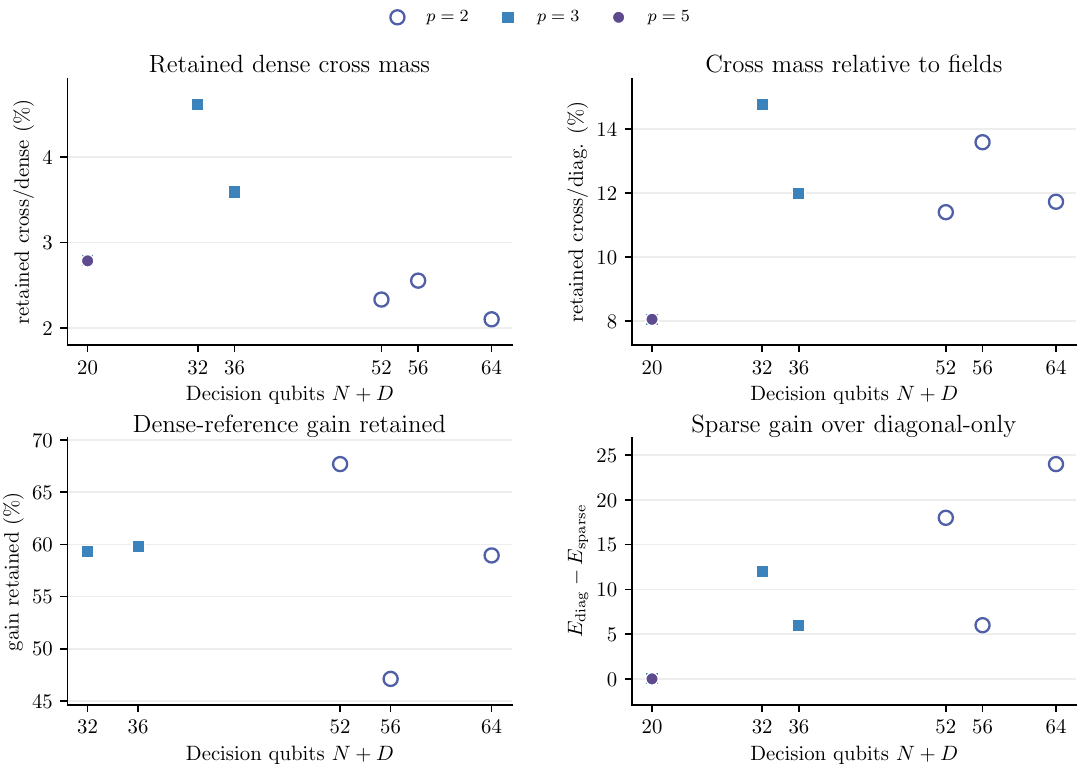}
\caption[Submitted cross-register coupling summary]{\emph{Exact sparse
solutions preserve part of the dense-reference improvement.} Across 32--64
decision qubits, the dense-reference gain retained has an observed range of
\CreditWideSparseDenseGainRetainedRangePct{}, and retained cross-register
\(L_1\) mass has an observed range of
\CreditWideRetainedCrossMassRangePct{}. Every 32--64-qubit sparse optimum
improves on the diagonal-only exact-budget choice. Panels report retained
cross-register mass relative to dense cross-register mass, retained
cross-register mass relative to diagonal-field mass, dense-reference gain
retained after exact sparse optimization, and sparse-objective gain over the
  diagonal-only choice. The 20-qubit grouped conditions have zero
  sparse-objective gain over that choice and lack the matched references needed
  to compute \(g_{\rm dense}\), so they are omitted from the dense-reference
  panel.}
\label{fig:submitted_coupling_diagnostic}
\end{figure}

\FloatBarrier
\needspace{6\baselineskip}
\subsection{Hardware Transpilation, Decoding, and Sampling}
\label{sec:transpilation}

The hardware analysis asks which logical properties survive backend
transpilation and noisy measurement: variable labels, exact-budget mass, and
low-energy feasible samples. It separates layout-aware decoding,
implementation-path resources, fixed-threshold CG-XY-\qaoa{} sampling, and
32--64-qubit sparse-cost sampling. All reported hardware energies use decoded
samples. The 32--64-qubit fixed-angle and 36--64-qubit warm-start energies are
dense-objective rescorings of samples produced by sparse surrogate phase
separators.

The hardware results compare decoded exact-budget mass with chance feasibility
\(p_{\rm chance}=\binom{N}{k}\binom{D}{m}/2^{N+D}\), the uniform-bitstring
probability of satisfying both budgets. The ratio
\(p_{\rm feas}/p_{\rm chance}\) reports the multiple over that level.
Dense-rescored energy evaluates measured
samples against the original dense objective, while sparse-objective energy
evaluates the same samples against the executed surrogate Hamiltonian. The
best-energy gaps summarize the observed low-energy tail against strict-feasible
classical references. Primary hardware metrics are computed on strict decoded
exact-budget samples before classical repair or local improvement. Band-1
repair accepts samples within one count of each budget and projects them to
\(\Omega\). It is evaluated separately.

\paragraph{Layout-aware decoding.}

Transpilation can preserve the logical unitary while permuting the measurement
bits associated with logical variables. In the
\HWCompBlockxyQubits-qubit routing check, physical-order parsing yields an
apparent exact-budget mass of \HWCompBlockxyFeasCompRaw{}. Decoding through the
recorded final layout restores \HWCompBlockxyFeasComp{} decoded exact-budget mass
(Lemma~\ref{lem:routing-feasibility}). Layout-aware decoding adds no gates and
is not included in resource attribution.

The implementation-path study uses a \(2\times2\times2\times2\)
factorial design over backend target, Block~XY fusion, transport order, and
parameter-binding order. Opt-3 CZ and Opt-3 fractional apply Qiskit's
optimization-level~3 preset pass manager to the same unfused circuit with
numerical angles bound before transpilation. They differ only in whether the
standard CZ-basis or fractional-gate target is exposed. The complete CZ and
fractional Block~XY paths bind angles after transpilation, fuse adjacent
equal-angle \(R_{XX}R_{YY}\) mixer pairs into \(\mathrm{XXPlusYY}\), and use
edge-colored transport order. Every fractional path includes exact native-range
\(R_{ZZ}\) folding and its local corrections. These factors interact, so the
study reports paired contrasts rather than additive shares of the total
reduction.

The target-based circuit-duration estimate is the longest dependency path in
the submitted circuit after weighting each instruction by its backend-target
duration. The estimate includes final measurement and excludes reset,
inter-shot delay, Runtime service overhead, and queue time.

\begin{figure}[!htb]
\centering
\includegraphics[width=0.90\linewidth]{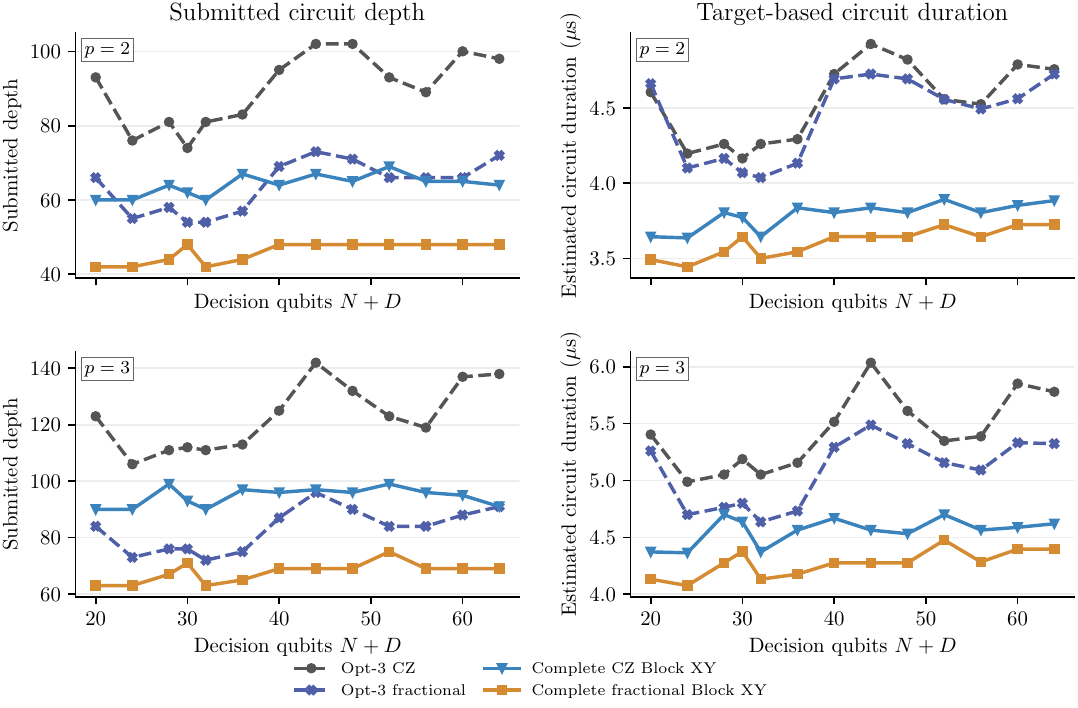}
\caption{\emph{The complete Block~XY paths reduce submitted depth on both the
CZ and fractional targets.} All paths use the same retained cost terms,
physical patch, initial layout, basis initialization, bilinear-CG
parameterization, deterministic numerical angles, backend snapshot, and
transpiler seed. Across \CompilePathMatchedPointCount{} matched width--depth
points, the complete fractional path reduces submitted circuit depth by
\CompilePathDepthReductionMedianPct{} at the median, with an observed range of
\CompilePathDepthReductionRangePct{}, relative to Opt-3 CZ. The target-based
circuit-duration reduction has median
\CompilePathDurationReductionMedianPct{} and observed range
\CompilePathDurationReductionRangePct{}. Native
two-qubit instruction count falls by
\CompilePathNativeTwoQCountReductionRangePct{}, entirely from exposing the
fractional target. The target alone does not reduce the duration estimate at
every point, whereas the complete path does. The
\CompilePathTranspilerSeedCount{} declared transpiler seeds give identical
resources within each condition. Figure~\ref{fig:cross_platform_compilation}
reports the separate dense-to-sparse support comparison, and Supplementary
Table~\ref{tab:implementation_path_factorial} reports the controlled contrasts.}
\label{fig:compiler_path_scaling_baseline}
\end{figure}

\paragraph{IBM Heron R3 hardware experiments.}

The hardware experiments run on an IBM Heron R3 heavy-hex processor using
calibration-selected backend patches. The fixed-threshold runs use \(p_{\rm BK}\), while the
32--64-qubit sparse-cost comparisons use decoded exact-budget mass,
dense-rescored energy, and sparse-objective energy.

The 20-qubit \((12,8,3,5)\) experiments comprise an eight-instance planted
fixed-threshold panel at \(p=3\) and a matched Credit Card comparison at
\(p\in\{3,5\}\).
The 32--64-qubit bilinear-CG fixed-angle runs compare decoded exact-budget mass and low-energy
feasible samples against chance-feasibility, random-feasible, and
strict-feasible classical references. Warm-start controls are run at 36, 52,
and 64 decision qubits.

The fractional path has an angle-domain constraint because native \(R_{ZZ}\)
gates are restricted to the backend-supported angle interval. The tied and
bilinear-CG angles used in the hardware runs fit this interval
directly. Fully grouped unrestricted logical angles can leave the interval. For
those schedules, exact modulo-\(\pi\) \(R_{ZZ}\) folding with local \(R_Z(\pi)\)
corrections preserves the logical phase separator while keeping the compact
fractional circuit.

In the \HWCompBlockxyQubits-qubit routing check, the penalty-based X~mixer has
much lower decoded exact-budget mass under the same transpilation setting
(Supplementary Table~\ref{tab:penalty_x_vs_strict_feasible_qaoa}), while the
constraint-preserving XY circuits recover their strict-feasible sector after
layout decoding.

\paragraph{Fixed-threshold CG-XY-QAOA hardware runs.}

The primary CG-XY-QAOA hardware metric is best-known-threshold hit rate.
Supplementary Table~\ref{tab:cgxy_fully_grouped_qpu_replay_p3} gives the
per-instance planted-panel simulation and matched Heron hardware runs for that metric.
The Heron executions use the selected fully grouped \(p=3\) schedule with the same
decision register, fractional-gate implementation, basis initialization,
decoder, and shot budget. Threshold hits occur on
\CGXYFullyGroupedQPUHitInstances{} instances, with
\CGXYFullyGroupedQPUThresholdHits{} hits in
\CGXYFullyGroupedQPUTotalShots{} submitted shots. The mean decoded exact-budget
mass is \CGXYFullyGroupedQPUMeanFeasPct{}.

On the simulation-selected benchmark instance,
Table~\ref{tab:creditcard_bilinear_cg_best_known_replay} compares tied,
bilinear-CG, and folded fully grouped hardware execution under the same BK-threshold
metric with fixed angles at \(p=3\) and \(p=5\). The \(p=5\) condition applies
five alternating phase-separator and mixer layers, two more than the \(p=3\)
condition. At this depth, fully grouped angles raise the pooled BK hit rate
\CreditTwentyQPFiveFullyGroupedBKRateFold{} from
\CreditTwentyQPFiveTiedBKRatePct{} to
\CreditTwentyQPFiveFullyGroupedBKRatePct{}, while retaining
\CreditTwentyQPFiveFullyGroupedFeasPct{} decoded exact-budget mass. At both
tested depths, the observed repeat-level BK-rate range for fully grouped
CG-XY-QAOA is disjoint from that of tied XY-QAOA.

\begin{table}[!htbp]
\centering
\small
\setlength{\tabcolsep}{5pt}
\renewcommand{\arraystretch}{1.1}
\caption[Matched 20-qubit CG-XY-QAOA hardware comparison]{\emph{Matched 20-qubit CG-XY-QAOA hardware comparison.} Fully grouped has the highest BK hit rate at both depths, while its mean exact-budget mass lies within the observed ranges of the tied and bilinear-CG conditions. IBM Heron R3 runs use the same Credit Card instance, sparse support, patch, exact-feasible basis initialization, and decoder at each depth. Each variant and depth uses ten predeclared 4096-shot repetitions. Exact-budget entries are repetition means [minimum, maximum] of the fraction of decoded shots satisfying both cardinality constraints. BK hit rates pool all submitted shots at each variant and depth and use the predeclared best-known threshold. Exact pooled hit counts and repeat-level BK rates are reported in Supplementary Section~\ref{supp:three_tier_cross_validation}.}
\label{tab:creditcard_bilinear_cg_best_known_replay}
\TableRowsSetup
\begin{tabular}{@{}lrrrr@{}}
\TableHideRows
\toprule
& \multicolumn{2}{c}{$p=3$} & \multicolumn{2}{c}{$p=5$} \\
\cmidrule(lr){2-3}\cmidrule(lr){4-5}
Variant & Exact-budget mass & BK hit rate & Exact-budget mass & BK hit rate \\
\midrule
\TableBodyRows
XY-QAOA & 66.6\% [65.7\%, 67.3\%] & 0.046\% & 65.1\% [64.3\%, 66.5\%] & 0.14\% \\
Bilinear-CG & 66.0\% [65.2\%, 67.1\%] & 0.054\% & 64.1\% [62.9\%, 65.3\%] & 0.26\% \\
Fully grouped & 66.4\% [64.4\%, 68.4\%] & 0.39\% & 64.7\% [62.8\%, 65.6\%] & 3.7\% \\
\bottomrule
\end{tabular}
\end{table}

\FloatBarrier
\needspace{8\baselineskip}

\paragraph{32--64-qubit sparse-cost hardware runs.}

The fixed-angle 32--64-qubit runs measure decoded exact-budget mass relative to
chance feasibility and best dense-rescored energy relative to matched
random-feasible and strict-feasible classical references on the same dense QUBO.
Across the five fixed-angle bilinear-CG conditions, decoded exact-budget mass
ranges from \(\CreditWideFixedChanceRatioMin{}\times\) to
\(\CreditWideFixedChanceRatioMax{}\times\) the uniform-bitstring chance
probability, including 64 decision qubits at \(p=2\) and 36 at \(p=3\). Because
these circuits use QUBO-independent
basis initialization and constant tied angles, they isolate feasibility
retention under fixed sparse-cost circuits.

Classical warm-start hardware runs test energy transfer from dense-objective
seeds under the same sparse-cost execution setting. They use the matched-control
protocols and tensor-network angle selection defined in
Section~\ref{sec:implementation} on the 36-, 52-, and 64-qubit QUBOs.

Energy gaps are reported in QUBO objective units, with lower values better. If
\(E_{\rm fixed}\), \(E_{\rm warm}\), and \(E_{\rm class}\) denote the best
dense-rescored fixed-angle, warm-start, and strict-feasible classical energies,
respectively, the closed-gap fraction is
\[
  g_{\rm closed}=
  \frac{E_{\rm fixed}-E_{\rm warm}}
       {E_{\rm fixed}-E_{\rm class}}.
\]

At 36, 52, and 64 qubits, the warm starts in
Table~\ref{tab:hardware_run_summary} close
\CreditWarmStartGapClosedRangePct{} of the dense-rescored energy gap between the
fixed-angle runs and the strict-feasible classical reference. Matched
random-feasible initialization controls do not close that gap. Strict-feasible
classical references remain the lower-energy target at every width.

The band-1 procedure greedily flips bits to restore \((k,m)\), choosing each
flip by its immediate
sparse-objective energy change. Supplementary
Table~\ref{tab:cgxy_fixed_angle_condition_details} lists the strict and
repaired best-energy gaps for the 32--64-qubit conditions. Repair of the 36-qubit
warm-start condition matches the sparse classical reference. Primary hardware
metrics are computed on strict decoded exact-budget samples before repair.

\begin{table}[!htbp]
\centering
\footnotesize
\setlength{\tabcolsep}{3.5pt}
\renewcommand{\arraystretch}{1.08}
\caption[Sparse-cost hardware scaling and controls]{\emph{Sparse-cost hardware scaling and controls.} The fixed-angle conditions retain exact-budget mass from \(161\times\) to \(5{,}238\times\) the uniform-bitstring chance probability, and warm starts reduce the dense-rescored gap at all three widths where tested. Fixed-angle IBM Heron R3 runs are summarized by width and depth. For repeated conditions, exact-budget entries are repetition means [minimum, maximum]. Single-run entries show the observed fraction without brackets. Exact/chance compares the reported exact-budget mass with uniform-bitstring chance feasibility. Dense and band-1 gaps are reported in the corresponding QUBO objective's energy units, with lower values better. Dense gap is the best decoded fixed-angle energy minus the strict-feasible dense classical reference. Warm gap closed is \((E_{\rm fixed}-E_{\rm warm})/(E_{\rm fixed}-E_{\rm class})\). Band-1 gap is the repaired fixed-angle energy minus the sparse classical reference on the executed sparse objective.}
\label{tab:hardware_run_summary}
\TableRowsSetup
\begin{tabular}{@{}rrrrrrr@{}}
\TableHideRows
\toprule
Qubits & $p$ & Exact-budget mass & Exact/chance & Dense gap & Warm gap closed & Band-1 gap \\
\midrule
\TableBodyRows
32 & 3 & 56.3\% & 161$\times$ & +6.3 & -- & +0.2 \\
36 & 3 & 47.4\% [47.2\%, 47.5\%] & 218$\times$ & +19.9 & 93.3\% & +1.8 \\
52 & 2 & 31.2\% [30.9\%, 31.5\%] & 876$\times$ & +103.8 & 82.7\% & +12.7 \\
56 & 2 & 42.2\% & 5{,}238$\times$ & +113.4 & -- & +3.2 \\
64 & 2 & 25.5\% [24.9\%, 25.9\%] & 1{,}875$\times$ & +84.9 & 78.1\% & +15.2 \\
\bottomrule
\end{tabular}
\end{table}

\FloatBarrier

\section{Discussion}
\label{sec:discussion}

\subsection{Main Findings}

The findings divide into formulation behavior, ansatz design, and hardware
execution. The planted-noise analysis formalizes one mechanism for
joint-selection advantage: signal concentrated on a jointly selected
sample--feature block is preserved by joint scoring, while feature-first
aggregation accumulates noise over the candidate samples. Synthetic results
show corresponding gains when the task selects multiple samples and features
under feature-specific noise. The public-benchmark
gains are strongest in balanced candidate pools.

Grouping the diagonal phase separator allows more
expressive schedules than a single shared angle. Fully grouped cost-plus-mixer gives same-depth lift on the
noiseless planted-bicluster sweep by changing both diagonal phases and mixer
transport (Figure~\ref{fig:cgxy_noiseless_evidence}). Bilinear-CG keeps tied
mixer transport and frees only the sample--feature-component phase. It is the
lower-dimensional selected-angle schedule used for sparse hardware circuits.
Its hardware interpretation depends on the sample--feature couplings retained in
the sparse surrogate, as summarized in
Figure~\ref{fig:submitted_coupling_diagnostic} and
Table~\ref{tab:cgxy_fixed_angle_condition_details}.

The main hardware bottleneck is the resource and noise cost of
implementing diagonal phases that preserve low-energy structure on sparse
connectivity. The factorial implementation study separates target choice from
the complete Block~XY path. The fractional target lowers native two-qubit count
and part of the submitted depth, while fusion, late binding, and edge-colored
transport provide further depth and duration reductions. Block~XY preserves the
exact-budget sector in the ideal circuit. After transpilation and hardware
execution, routing overhead and device noise determine the decoded exact-budget
mass. The fully grouped
20-qubit conditions give the highest measured Credit Card BK hit rate at both
tested depths. The 32--64-qubit circuits retain decoded exact-budget mass above
chance, and warm starts reduce the dense-rescored gap to the strict-feasible
classical reference.

\subsection{Calibration Choices Shape the Objective}

Calibration is a coefficient-level design choice for the sample--feature weight
matrix $\matW$. Both the magnitude and
structure of calibration error matter. Feature-aligned drift is amplified by
feature-first aggregation, whereas sample-aligned drift is much less damaging under
the analyzed aggregation rules.

The calibration-robustness result supports a concrete calibration-evaluation
procedure. Before running the exact-budget optimizer, candidate calibration maps
can be compared on older reference windows by their feature-aligned error while
keeping the candidate pool, budgets, and selection method fixed. The Credit Card
calibration-map sensitivity analysis in Supplementary Figures~\ref{fig:real_creditcard_calibration_data_quality}
and~\ref{fig:real_creditcard_calibration_mechanism} illustrates this evaluation.
Among nine calibration maps, Gaussian-tail $z$ has the lowest feature-aligned
sup-norm and highest oldest-block joint F1 at all three candidate-pool sizes.
The broader implication is that the
joint formulation makes calibration robustness measurable and addressable
within the same constrained objective, while preserving the exact-budget
selection semantics.

\subsection{Conditions for Joint-Selection Advantage}

The feature-first failure mode is the early marginalization step. A sequential
baseline compresses each feature to a column-level score before knowing which
samples will survive the fixed top-$k$ review budget. If feature value is
concentrated on a case-dependent subset, a large column aggregate can reflect
irrelevant samples while a relevant feature can look weak after aggregation.
The joint objective delays that decision and rewards signal on the same
selected samples and features through the coupled term
$\sum_{i\in S,j\in F} W_{ij}$.

In the synthetic budget sweeps, the joint advantage is negligible at \(k=1\) and grows
once the stage-2 task selects multiple anomalies and multiple features under
feature-specific noise. In the calibration-stable mechanism analyses, the
feature-selection gain peaks near the intermediate feature-budget ratio
(Supplementary Table~\ref{tab:synthetic_mechanism_stable}). The advantage is
therefore largest when the stage-2 task is a coupled subset-selection problem
rather than a feature-ranking problem with one selected case.

\paragraph{Limits of the joint-selection advantage.}
The same parameter sweep also maps the boundary of the joint-selection
advantage. With zero feature noise, every feature is informative, the
feature-first rule already recovers the planted feature set, and the
joint-minus-sequential combined-F1 difference is exactly zero across all tested
$(N,D,k,m)$ settings. With a
single-anomaly budget ($k{=}1$), no feature can aggregate coordinated evidence
across multiple selected samples, and joint selection can fall below sequential
once feature noise is high. These boundary cases clarify what the joint objective adds. It exploits
case-dependent feature utility under multi-anomaly review. When every feature is
informative, or when the budget selects only one anomaly, the coupled term has
little additional structure to offer, and the simpler sequential rule is
sufficient and can be preferable.

\subsection{Limitations and Future Work}

The study isolates a stage-2 constrained-selection setting with fixed candidate
pools and budgets. The hardware experiments evaluate execution and sampling for
a constraint-preserving ansatz on an IBM Heron R3 processor.
On the 32--64-qubit dense QUBOs,
strict-feasible classical baselines still find lower energies.

On sparse connectivity, implementing the full dense $N\times D$ phase layer
would require routing that adds two-qubit operations and accumulated noise. The
hardware-aligned masks used for the 32--64-qubit runs retain
\CreditWideRetainedCrossMassRangePct{} of the dense cross-register $L_1$
coupling mass. At this retained mass, every exact sparse optimum differs from
diagonal-only selection, and dense rescoring attributes
\CreditWideSparseDenseGainRetainedRangePct{} of the dense-reference gain to the
sparse optimum. The hardware circuits implement this sparse phase separator.
Dense rescoring evaluates their measured selections on the original QUBO.

The formal recovery results have an analogous scope. The main planted theorem
covers the centered bilinear core, and the full-objective extension gives a
sufficient condition under bounded marginal terms. The primary method optimizes
a fixed \((k,m)\) sector. The Credit Card neighboring-budget analysis in
Supplementary Section~\ref{supp:creditcard_budget_diagnostics} measures the
tradeoff between retrospective capture and review capacity across nearby
sectors and does not define an automatic budget-selection rule.

Future work falls into five connected directions: sparse
surrogate design, deployment calibration, domain transfer, richer constraints,
and algorithmic scaling.
Prior QAOA work indicates that sparse phase operators should preserve the
low-energy structure of the dense objective. Retaining many large coefficients alone is
insufficient~\cite{liu2022sparsifiedphase}. For the present bipartite QUBO
setting, this suggests selecting and, when appropriate, reweighting phase
operators by validation on the dense objective. Pre-hardware checks on that
objective should include CVaR,
tail summaries~\cite{barkoutsos2020cvar}, threshold-hit probability,
low-energy overlap, and exact optimum agreement where enumeration is possible.
Exact execution of dense
non-native cost layers on sparse superconducting devices is itself
resource-intensive~\cite{harrigan2021nonplanar,sack2024nonplanar}.

For deployment, the calibration results make the weight matrix part of the
deployed system, not just an optimizer input.
The theory studies planted joint structure with sub-Gaussian noise,
Bernstein-tail residuals, and bounded entrywise perturbations. The
public-benchmark analyses identify feature-aligned drift as the harmful mode to
monitor. A deployment-style study should update reference windows, choose the
scoring map, sparsification rule, and hardware layout jointly, and track which
perturbation modes change the constrained optimum. Connecting stage-2 selection
to upstream candidate generation and analyst feedback would evaluate calibration
maintenance across the full anomaly-detection system.

Beyond anomaly-feature selection, the same optimization primitive should be
tested in other bipartite selection domains with problem-specific calibration
requirements. The weight matrix would encode domain-specific quantities such as
binding affinity, coverage probability, or service utility. A cross-domain study
should distinguish the transferable parts of the method from domain-specific
modeling. The
exact-budget feasible sector, Block~XY mixer, and sparse-surrogate metrics
transfer from the present construction. The bipartite
weight matrix must be redesigned and validated for each domain.

Richer models can add anomaly-structure constraints to the current $(k,m)$
exact-budget sector while preserving the exact-budget invariant enforced by the
mixer. Representative extensions include temporal or graph coherence and
structured feature constraints such as group sparsity or mutual exclusion.
Penalty encodings can represent some of these constraints. The quantum design
question is which constraints should instead be implemented by feasible mixers
that enforce the constraint directly through the feasible subspace and mixer
dynamics~\cite{hadfield2019quantum,wang2020xy,larose2022mixerphaser}.

The main open algorithmic question is how the constrained optimizer scales.
\qaoa{} for general \qubo{} is best understood as a
heuristic without general performance guarantees, and the potential for quantum
advantage in discrete optimization remains an open
question~\cite{abbas2024quantumoptimization}. The same constrained-selection
problem can also be studied with oracle-based quantum search or optimization
primitives, such as amplitude amplification or quantum minimum
finding~\cite{brassard2002amplitude,durr1996minimum}. In that
setting, the reversible comparison circuit, cost loading, thresholding, and data
access are part of the oracle cost. A complementary empirical direction is to
optimize angles at the larger decision-register widths, push CG-XY-\qaoa{} to
deeper circuits, and test mitigation methods that preserve decoded exact-budget mass,
with this constrained-selection problem as the scaling benchmark.

\section{Conclusion}
\label{sec:conclusion}

This work formulates exact-budget bipartite selection as a joint
\qubo{}/Ising objective for choosing anomalous samples and explanatory features
under fixed budgets. Calibrated residual weights define the dense objective,
while the exact-budget sector keeps feasibility explicit. The perturbation
analysis explains why treating those weights as estimates matters: in the
planted model, joint selection is less sensitive to entrywise coefficient error
than the feature-first rule analyzed here. The experiments show gains in balanced
public-benchmark pools and synthetic settings with multiple selected anomalies
and features.

The quantum contribution combines a coupling-grouped XY-\qaoa{}
parameterization with a hardware-aware realization using register-preserving
variable placement, sparse-surrogate phases, Block~XY fusion, fractional gates,
edge-colored scheduling, and layout-aware decoding. The complete implementation
path reduces submitted depth and target-based duration beyond exposing
fractional gates alone. Noiseless simulations show
same-depth gains from fully grouped schedules. At 20 decision qubits and
\(p=5\), fully grouped angles raise the measured BK hit rate
\CreditTwentyQPFiveFullyGroupedBKRateFold{} to
\CreditTwentyQPFiveFullyGroupedBKRatePct{}, with
\CreditTwentyQPFiveFullyGroupedFeasPct{} decoded exact-budget mass. Bilinear-CG
provides the lower-dimensional fixed-transport schedule used in the
32--64-qubit sparse circuits.
On IBM Heron R3, the
resulting sparse-surrogate circuits retain feasible-sector mass far above chance
at 64 decision qubits \((p=2)\) and 36 \((p=3)\). Measured samples are decoded
and rescored on the dense objective against strict-feasible classical baselines,
which remain the lower-energy target.

\paragraph{Acknowledgments.}
The author thanks IBM Quantum for access to IBM Quantum hardware used in the
experiments, VTT Technical Research Centre of Finland for access to VTT Q50
used in early exploratory phases of this work, Ilmo Salmenperä for valuable feedback and Vlad Stirbu
for support during manuscript preparation.

\paragraph{Data availability.}
The Credit Card fraud dataset and the public IBM IT-AML benchmark used in the experiments
are available from their cited public
releases~\cite{kaggle_creditcardfraud,altman2023realistic,kaggle_ibm_itaml}.
Additional data supporting the hardware experiments are
available from the author upon reasonable request.

\bibliographystyle{unsrtnat}
\bibliography{references}

\clearpage
\appendix
\setcounter{section}{0}
\setcounter{equation}{0}
\setcounter{figure}{0}
\setcounter{table}{0}
\setcounter{algorithm}{0}
\setcounter{theorem}{0}
\renewcommand{\thesection}{S\arabic{section}}
\renewcommand{\theequation}{S\arabic{equation}}
\renewcommand{\thefigure}{S\arabic{figure}}
\renewcommand{\thetable}{S\arabic{table}}
\renewcommand{\thealgorithm}{S\arabic{algorithm}}
\renewcommand{\thetheorem}{S\arabic{theorem}}
\renewcommand{\theHsection}{supp.\arabic{section}}
\renewcommand{\theHequation}{supp.\arabic{equation}}
\renewcommand{\theHfigure}{supp.\arabic{figure}}
\renewcommand{\theHtable}{supp.\arabic{table}}
\renewcommand{\theHalgorithm}{supp.\arabic{algorithm}}
\renewcommand{\theHtheorem}{supp.\arabic{theorem}}

\section*{Supplementary Material}
This supplement provides proofs, robustness analyses, and extended hardware results
for the constrained-selection formulation and Coupling-Grouped XY Quantum
Approximate Optimization Algorithm (CG-XY-QAOA).

\bigskip
\noindent\textbf{\large Contents}\par\medskip
\begingroup
\setlength{\parskip}{2pt}
\noindent\hyperref[supp:notation]{\ref*{supp:notation}.\quad Notation}\dotfill \pageref*{supp:notation}\par
\noindent\hyperref[supp:calibration_validity]{\ref*{supp:calibration_validity}.\quad Calibration Validity and Coefficient Bounds}\dotfill \pageref*{supp:calibration_validity}\par
\noindent\hyperref[supp:elementary_formulation_lemmas]{\ref*{supp:elementary_formulation_lemmas}.\quad Formulation and CG-XY-QAOA Properties}\dotfill \pageref*{supp:elementary_formulation_lemmas}\par
\noindent\hyperref[supp:noise-proof]{\ref*{supp:noise-proof}.\quad Joint-Recovery Theory and Calibration Robustness}\dotfill \pageref*{supp:noise-proof}\par
\noindent\hyperref[supp:repro]{\ref*{supp:repro}.\quad Experimental Protocols and Reproducibility}\dotfill \pageref*{supp:repro}\par
\noindent\hyperref[supp:swap_sa]{\ref*{supp:swap_sa}.\quad Classical and Penalty-Based Baselines}\dotfill \pageref*{supp:swap_sa}\par
\noindent\hyperref[supp:synthetic_sensitivity]{\ref*{supp:synthetic_sensitivity}.\quad Synthetic Sensitivity and Mechanism Analyses}\dotfill \pageref*{supp:synthetic_sensitivity}\par
\noindent\hyperref[supp:creditcard_checks]{\ref*{supp:creditcard_checks}.\quad Public-Benchmark Comparisons and Protocol Sensitivity}\dotfill \pageref*{supp:creditcard_checks}\par
\noindent\hyperref[supp:qaoa_depth]{\ref*{supp:qaoa_depth}.\quad CG-XY-QAOA Grouping and Sampling Results}\dotfill \pageref*{supp:qaoa_depth}\par
\noindent\hyperref[supp:hardware_detail]{\ref*{supp:hardware_detail}.\quad Hardware Implementation, Decoding, and IBM Heron Execution}\dotfill \pageref*{supp:hardware_detail}\par
\endgroup
\medskip

\newpage
\section{Notation}
\label{supp:notation}

The depth-frontier and mechanism sweeps use two perturbation modes:
\textsc{residual} for planted-mean residual scoring and \textsc{mixed} for a
mixed-rank-one residual. Together with $\mu$, the mode defines the
planted-bicluster operating point.

\begingroup
\footnotesize
\setlength{\tabcolsep}{3pt}
\renewcommand{\arraystretch}{0.96}
\TableRowsSetup
\begin{longtable}{>{\raggedright\arraybackslash}p{0.25\linewidth}
                  >{\raggedright\arraybackslash}p{0.68\linewidth}}
\TableHideRows
\toprule
Symbol & Meaning \\
\midrule
\endfirsthead
\TableHideRows
\toprule
Symbol & Meaning \\
\midrule
\endhead
\bottomrule
\endfoot
\TableGroupRow
\multicolumn{2}{l}{\emph{Dimensions, budgets, and selection sets}} \\
\TableBodyRows
$N$ & Number of samples (rows of $\matZ$). \\
$D$ & Number of features (columns of $\matZ$). \\
$k$ & Stage-2 anomaly budget, equal to the number of samples selected. \\
$m$ & Stage-2 feature budget, equal to the number of features selected. \\
$[N]$, $[D]$ & Index sets $\{1,\ldots,N\}$ and $\{1,\ldots,D\}$. \\
$S \subseteq [N]$, $F \subseteq [D]$ & Selected sample/feature index sets, with $|S|=k$ and $|F|=m$. \\
$\vecs,\vecf$ & Binary indicator vectors for $S$ and $F$. \\
$\Omega$ & Exact-budget feasible sector
  \(\{(\vecs,\vecf):|\vecs|=k,\ |\vecf|=m\}\). \\
$S^\star,F^\star$ & Ground-truth (planted) sample/feature sets used to compute synthetic F1. \\
\midrule
\TableHideRows
\TableGroupRow
\multicolumn{2}{l}{\emph{Scoring matrices and bounds}} \\
\TableBodyRows
$\matX \in \mathbb{R}^{N\times D}$ & Unprocessed observation matrix. \\
$\matZ \in \mathbb{R}^{N\times D}$ & Standardized residual/score matrix derived from $\matX$. \\
$\matW$ & Sample--feature score field. The empirical objective uses a calibrated
  nonnegative field in $[0,w_{\max}]^{N\times D}$. The planted theory uses a
  real-valued signal-plus-noise field. \\
$M$, $\widetilde W=M\odot W$ & Binary support mask and the resulting sparse surrogate weight field. \\
$E_\times$ & Executable cross-register position pairs in a selected hardware layout. \\
$(\pi_{\mathcal S},\pi_{\mathcal F})$ & Within-register sample and feature permutations used for variable placement. \\
$R_{\mathcal S\mathcal F}$, $\rho_{\mathcal S\mathcal F}$ & Retained cross-coupling $L_1$ mass and its ratio to the dense cross-coupling mass. \\
$g_{\rm dense}$ & Dense-reference gain retained by the exact sparse optimum after dense rescoring. \\
$\matW^\star$ & Ideal coefficient field before calibration perturbation. \\
$\widehat{\matW}$ & Observed coefficient field satisfying the generic bound $\|\widehat{\matW}-\matW^\star\|_\infty\le\eta$. Clipping defines one stress-test construction. \\
$\matW^\tau$ & Clipped score field used in the clipped-residual recovery corollary. \\
$w_{\max}$ & Per-entry weight cap from feature-wise calibration. \\
$\matQ$ & Quadratic Unconstrained Binary Optimization (QUBO) cost matrix,
  $E(x)=x^\top \matQ x$. \\
$a_i,b_j$ & Sample and feature marginal scores in the calibrated objective. \\
$\lambda$ & Positive weight on the bilinear sample--feature coupling term. \\
\midrule
\TableHideRows
\TableGroupRow
\multicolumn{2}{l}{\emph{Perturbation and calibration parameters}} \\
\TableBodyRows
$\tau$ & Clipping level in the clipped-residual recovery bounds. \\
$\eta$ & Calibration-perturbation radius for the generic bound $\|\widehat{\matW}-\matW^\star\|_\infty\le\eta$. In the uniform synthetic stress tests, $\Delta_{ij}\sim\mathrm{Unif}[-\eta,\eta]$. \\
$\rho$ & Tradeoff parameter in Eq.~\mainref{eq:objective}. \\
$\rho_{\mathrm{extra}}$ & Retained fraction of the noiseless $p\to2p$ score increment in the depth-efficiency comparison. \\
$\rho_\tau$ & Deterministic clipping-bias budget in the clipped-residual recovery corollary. \\
$\nu$ & Synthetic noise fraction, equal to the proportion of uninformative features in synthetic data and distinct from $\eta$. \\
\midrule
\TableHideRows
\TableGroupRow
\multicolumn{2}{l}{\emph{Planted-bicluster signal model}} \\
\TableBodyRows
$\mu$ & Planted-bicluster signal strength, defined as the entrywise amplitude of $\matW_{ij}$
        on the planted support $S^\star\times F^\star$ in
        Theorem~\ref{thm:joint-noise-superiority-supp} and the
        depth-frontier sweep. \\
$\sigma$ & Sub-Gaussian scale of the centered noise term
        $\varepsilon_{ij}$ in the planted-bicluster model. \\
$\mathcal M$ & Number of feasible sample--feature pairs, $\binom Nk\binom Dm$. \\
$\Gamma$ & Positive-semidefinite proxy covariance matrix in the correlated-residual extension. \\
\midrule
\TableHideRows
\TableGroupRow
\multicolumn{2}{l}{\emph{QAOA and hardware-execution parameters}} \\
\TableBodyRows
$p$ & Quantum Approximate Optimization Algorithm (QAOA) depth (number of
  cost+mixer layers). \\
$\bm{\beta}$, $\bm{\gamma}$ & QAOA mixer/cost angle parameter vectors under the
   layerwise convention. Layer-shared controls use global tied angles instead.
   Bilinear-CG uses a tied marginal angle and separate layerwise bilinear
   angles while keeping the mixer tied. The fully grouped schedule uses
   three per-component cost angles $\gamma_{\mathcal S,\ell},
   \gamma_{\mathcal F,\ell}, \gamma_{\mathcal{SF},\ell}$ and a layerwise
   mixer angle. \\
$H_{\mathrm{C}}$, $H_{\mathrm{mix}}$ & QAOA cost and mixer Hamiltonians. $H_{\mathrm{mix}}$ is the Block~XY mixer. \\
$\alpha$ & Normalized expected feasible-sector energy,
  $(\bar C_\Omega-\mathbb E[C])/(\bar C_\Omega-C^\star)$, with $\alpha=1$ at
  the exact feasible optimum and $\bar C_\Omega$ the uniform-feasible
  reference. \\
$E_{\mathrm{BK}}$ & Best-known (BK) energy threshold fixed before the evaluated run. \\
\midrule
\TableHideRows
\TableGroupRow
\multicolumn{2}{l}{\emph{Metrics}} \\
\TableBodyRows
$p_{\rm feas}$ & Decoded exact-budget mass, equal to the fraction of layout-decoded samples in $\Omega$. \\
$p_{\rm chance}$ & Uniform-bitstring feasibility probability
  \(\binom{N}{k}\binom{D}{m}/2^{N+D}\). \\
$p_{\rm BK}$ & Fixed-threshold hit rate \(P\{x\in\Omega,\ E(x)\le E_{\rm BK}\}\). \\
$z_{\rm rand}$ & \((\bar E_{\rm feas}-\mu_{\rm RF})/\sigma_{\rm RF}\), where
  \(\mu_{\rm RF}\) and \(\sigma_{\rm RF}\) are the matched random-feasible
  mean and per-shot standard deviation. Negative values favor hardware samples. \\
\(\mathrm{CVaR}_5(E)\) & Mean energy of the lowest-energy 5\% tail of a sampled distribution. \\
$\mathrm{F1@k}_{\text{samples}}$ & $2|\hat S\cap S^+|/(|\hat S|+|S^+|)$ for labeled benchmark data, reducing to $|\hat S\cap S^\star|/k$ in balanced or planted fixed-size settings. \\
$\mathrm{F1@m}_{\text{features}}$ & $|\hat F\cap F^\star|/m$ under fixed-size selection. \\
Combined F1 & $\tfrac{1}{2}(\mathrm{F1@k}_{\text{samples}}+\mathrm{F1@m}_{\text{features}})$ (arithmetic mean). \\
$\Delta\mathrm{F1}$ & Joint minus sequential F1 gap under identical inputs and budgets. \\
\end{longtable}
\endgroup

\section{Calibration Validity and Coefficient Bounds}
\label{supp:calibration_validity}

The finite-reference tail map, bounded weight construction, and conformal
ablation rely on the following calibration bounds.

\begin{proposition}[Finite-sample validity of empirical cumulative-distribution tail p-values]
\label{prop:pit-validity}
Let $\{Z_{ij}: i\in\mathcal{R}\}$ be iid draws from a continuous cumulative
distribution function (CDF) $F_j$, and
let $\widehat{F}_j$ be the empirical CDF on $\lvert\mathcal{R}\rvert = n$
reference samples. The scoring map that produces the $Z_{ij}$ values is assumed
fixed independently of this empirical CDF (ECDF) calibration sample. A
split-reference protocol with separate fit and calibration windows satisfies
this condition. For an independent test draw $Z\sim F_j$ define the plug-in
probability-integral-transform (PIT) variable
$\widehat{U}=\widehat{F}_j(Z)$.
Then for any $\varepsilon>0$,
\[
  \Pr\!\left(\sup_z \left\lvert \widehat{F}_j(z)-F_j(z)\right\rvert > \varepsilon\right)
  \le 2e^{-2n\varepsilon^2}\,,
\]
and, conditional on a reference sample satisfying
$\sup_z \lvert \widehat{F}_j(z)-F_j(z)\rvert \le \varepsilon$, we have
$\max(0,t-\varepsilon) \le \Pr(\widehat{U}\le t)\le \min(1,t+\varepsilon)$ for all $t\in[0,1]$.
In particular, with probability at least $1-\delta$ over the reference sample,
the Kolmogorov distance between $\widehat{U}$ and $\mathrm{Unif}(0,1)$ is at most
$\varepsilon=\sqrt{\frac{1}{2n}\log\frac{2}{\delta}}$.
The smoothed clipped rank estimate used in the experiments differs from
$\widehat F_j$ by at most $1/(n+1)$ uniformly, so the same bound holds with
$\varepsilon$ replaced by $\varepsilon+1/(n+1)$.
\end{proposition}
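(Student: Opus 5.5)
The plan has three parts. First, the DKW inequality gives the uniform deviation bound. Second, conditional on the reference sample, the PIT distribution is sandwiched between $t-\varepsilon$ and $t+\varepsilon$. Third, the smoothed clipped estimator is handled by a triangle inequality.

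For the first claim I invoke the Dvoretzky--Kiefer--Wolfowitz inequality with Massart's tight constant: for iid draws from a continuous $F_j$, $\Pr(\sup_z|\widehat F_j(z)-F_j(z)|>\varepsilon)\le 2e^{-2n\varepsilon^2}$. The independence assumption on the scoring map is used here. It ensures that, conditional on the fitted map, the reference residuals $\{Z_{ij}\}_{i\in\mathcal R}$ are genuinely iid from $F_j$. The test draw $Z$ is then independent of $\widehat F_j$. Setting $2e^{-2n\varepsilon^2}=\delta$ gives $\varepsilon=\sqrt{\log(2/\delta)/(2n)}$.

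For the PIT sandwich I condition on a reference sample with $\|\widehat F_j-F_j\|_\infty\le\varepsilon$, so that $\widehat F_j$ is a fixed nondecreasing function. Since $F_j(Z)\sim\mathrm{Unif}(0,1)$ by continuity, it suffices to compare events.
\begin{itemize}
\item Upper bound: if $\widehat F_j(Z)\le t$, then $F_j(Z)\le t+\varepsilon$. Hence $\Pr(\widehat U\le t)\le\Pr(F_j(Z)\le t+\varepsilon)=\min(1,t+\varepsilon)$.
\item Lower bound: if $F_j(Z)\le t-\varepsilon$, then $\widehat F_j(Z)\le t$. Hence $\Pr(\widehat U\le t)\ge\max(0,t-\varepsilon)$.
\end{itemize}
Together these say the Kolmogorov distance between $\widehat U$ and the uniform law is at most $\varepsilon$. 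This holds on the event of probability at least $1-\delta$, which gives the "in particular" statement.

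For the smoothed clipped estimate I first bound the uniform difference from the raw ECDF $R_j(z)/n$, and then reapply the sandwich argument with $\varepsilon+1/(n+1)$ via the triangle inequality.

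The uniform difference has two sources.
\begin{itemize}
\item Smoothing: $\left|\frac{R+1/2}{n+1}-\frac{R}{n}\right|=\left|\frac{n/2-R}{n(n+1)}\right|\le\frac{1}{2(n+1)}$, since $0\le R\le n$.
\item Clipping: clipping to $[1/(n+1),\,n/(n+1)]$ moves the value only when it lies outside that interval. Then $R=0$ and the value is $1/(2(n+1))$, or $R=n$ and the value is $(n+1/2)/(n+1)$. In these cases the clipped value differs from $R/n\in\{0,1\}$ by exactly $1/(n+1)$.
\end{itemize}
Away from the endpoints the clip does nothing, so the smoothing error of at most $1/(2(n+1))$ applies there. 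Overall the uniform difference is at most $1/(n+1)$.

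I expect this endpoint bookkeeping to be the only delicate step, and I would check it explicitly. The rest is standard.
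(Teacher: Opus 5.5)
Your proposal is correct and follows the same route as the paper: the Dvoretzky--Kiefer--Wolfowitz bound in Massart's form, then the monotone PIT sandwich using $F_j(Z)\sim\mathrm{Unif}(0,1)$ and independence of the test draw, then a uniform $1/(n+1)$ perturbation bound carried through by the triangle inequality. Your explicit smoothing-plus-clipping bookkeeping (error at most $1/(2(n+1))$ for interior ranks and exactly $1/(n+1)$ at $R\in\{0,n\}$) fills in a step that the paper only asserts.
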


This is the Dvoretzky--Kiefer--Wolfowitz bound in Massart's tight
form~\cite{massart1990tight} combined with the monotone PIT sandwiching
argument. Rank smoothing changes the empirical CDF by at most \(1/(n+1)\).
For the two-sided map \(g(U)=2\min(U,1-U)\),
\(|g(u)-g(v)|\le 2|u-v|\). The plug-in two-sided tail probability therefore
differs from its ideal value by at most
\(2\{\varepsilon+1/(n+1)\}\), which also bounds its Kolmogorov distance from
the uniform distribution.

\begin{lemma}[Clipping bounds Hamiltonian coefficients and gate angles]
\label{lem:bounded-weights}
Suppose $0\le W_{ij}\le w_{\max}$. Under the convention $x=(I-Z)/2$, the
bilinear term $-\lambda W_{ij}s_if_j$ contributes
$J_{ij}=-\lambda W_{ij}/4$ to the Pauli \(ZZ\) coefficient and therefore has magnitude
$\lvert J_{ij}\rvert \le \lambda w_{\max}/4$. Consequently, each cost-layer interaction
$e^{-i\gamma J_{ij}Z_iZ_j}$ can be implemented with an $\mathrm{RZZ}(\theta_{ij})$
rotation with $\lvert \theta_{ij}\rvert \le 2\lvert\gamma J_{ij}\rvert \le
\lvert\gamma\rvert\lambda w_{\max}/2$. Thus $w_{\max}$ directly controls coefficient
dynamic range and the required phase-angle range for transpilation.
\end{lemma}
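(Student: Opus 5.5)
The plan is to substitute the Pauli form of the binary variables directly into the bilinear term and then read off the \(ZZ\) coefficient. Under the convention \(x=(I-Z)/2\), write \(s_i=(1-Z_i)/2\) and \(f_j=(1-Z_{N+j})/2\). Expanding the product gives
\[
-\lambda W_{ij}s_if_j=-\frac{\lambda W_{ij}}{4}\bigl(1-Z_i-Z_{N+j}+Z_iZ_{N+j}\bigr).
\]
The constant piece can be discarded. The one-local pieces \(+\tfrac{\lambda}{4}W_{ij}Z_i\) and \(+\tfrac{\lambda}{4}W_{ij}Z_{N+j}\) are the field shifts already recorded in Eq.~\eqref{eq:ising_coefficients}. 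The only two-local contribution is therefore \(J_{ij}=-\lambda W_{ij}/4\), which matches the main-text coefficient. No other term of Eq.~\eqref{eq:hamiltonian} produces a \(Z_iZ_{N+j}\) product: the marginal terms are linear, and each sample--feature pair occurs exactly once. Hence \(J_{ij}\) is exactly this value and is not a sum of several contributions.

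The magnitude bound then follows from \(0\le W_{ij}\le w_{\max}\) and \(\lambda>0\):
\[
|J_{ij}|=\frac{\lambda W_{ij}}{4}\le \frac{\lambda w_{\max}}{4}.
\]
For the gate statement I would use the Qiskit convention \(\mathrm{RZZ}(\theta)=\exp(-i\tfrac{\theta}{2}Z\otimes Z)\). Matching exponents, \(e^{-i\gamma J_{ij}Z_iZ_{N+j}}=\mathrm{RZZ}(2\gamma J_{ij})\), so \(\theta_{ij}=2\gamma J_{ij}\) gives equality in the first inequality. If the implementation reduces the angle modulo \(2\pi\), or uses modulo-\(\pi\) folding with local corrections as described for the fractional path, the magnitude can only decrease. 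That justifies writing ``\(\le\)''. Combining this with the coefficient bound gives
\[
|\theta_{ij}|\le 2|\gamma|\cdot\frac{\lambda w_{\max}}{4}=\frac{|\gamma|\lambda w_{\max}}{2}.
\]
This completes the chain of inequalities.

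Nothing here is a genuine obstacle. The only point that needs care is bookkeeping. The coefficient \(J_{ij}\) must be attributed solely to the bilinear term, with no hidden contribution from the linear shifts. The factor of \(2\) in the \(\mathrm{RZZ}\) parameterization must also be stated explicitly, because it is a convention rather than a mathematical fact. The closing sentence about dynamic range and transpilation is interpretive and needs no proof beyond the displayed bounds.
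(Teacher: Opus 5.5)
Your proposal is correct and follows the same route as the paper: substitute \(x=(I-Z)/2\) to read off \(J_{ij}=-\lambda W_{ij}/4\) (as in Eq.~\eqref{eq:ising_coefficients}), bound it using \(0\le W_{ij}\le w_{\max}\) and \(\lambda>0\), and match exponents with Qiskit's convention \(R_{ZZ}(\theta)=\exp[-i\theta Z\otimes Z/2]\) to obtain \(\theta_{ij}=2\gamma J_{ij}\). The paper gives no separate proof and treats the lemma as immediate from that Ising conversion and the \(R_{ZZ}\) convention stated in the hardware-implementation section, so your explicit bookkeeping, including the remark that angle folding can only shrink \(|\theta_{ij}|\), is a faithful expansion of the intended argument.
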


\begin{proposition}[Split conformal p-values are super-uniform]
\label{prop:conformal-validity}
Suppose the nonconformity score is fixed independently of the calibration and
test observations, for example by fitting it on a disjoint subset. Let the
observations in the calibration window $\mathcal{R}$ and the additional test
draw $Z$ be jointly exchangeable. The usual split-conformal rank \(p\)-value,
computed by comparing the test score with the calibration scores and using
conservative tie handling, satisfies
\[
  \Pr(p \le \alpha)\le \alpha \qquad \text{for all } \alpha\in[0,1]\,,
\]
independently of the underlying distribution.
\end{proposition}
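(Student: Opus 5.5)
The plan is the standard exchangeability rank argument. Write $n=|\mathcal R|$ and let $s_1,\ldots,s_n$ be the nonconformity scores of the calibration observations and $s_{n+1}$ the score of the test draw $Z$. Because the score function is fixed independently of these $n+1$ observations, for instance by fitting it on a disjoint subset, the vector $(s_1,\ldots,s_{n+1})$ is a fixed measurable function applied coordinatewise to an exchangeable vector. It is therefore itself exchangeable. This is the only place the independence hypothesis enters, and I will state it explicitly, conditioning on the fitting subset if needed.

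Take the conservative split-conformal $p$-value
\[
  p=\frac{1+\#\{i\le n: s_i\ge s_{n+1}\}}{n+1}=\frac{\#\{i\le n+1: s_i\ge s_{n+1}\}}{n+1},
\]
where ties are counted in favour of larger $p$. Set $R_i=\#\{l\le n+1: s_l\ge s_i\}$, so that $p=R_{n+1}/(n+1)$.

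The key step is a deterministic counting bound. For any real vector and any integer $r$, the number of indices $i$ with $R_i\le r$ is at most $r$. To see this, list the values in decreasing order. Every $i$ with $R_i\le r$ has $s_i$ strictly above the $(r+1)$-th largest value counted with multiplicity. Tied values share the same $R$, which is at least their block's largest position, so ties can only push counts down.

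Exchangeability then gives that $\Pr(R_{n+1}\le r)=\Pr(R_i\le r)$ for every $i$. Averaging over $i$,
\[
  \Pr(R_{n+1}\le r)=\frac{1}{n+1}\,\mathbb E\,\#\{i: R_i\le r\}\le \frac{r}{n+1}.
\]

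Finally, for $\alpha\in[0,1]$ put $r=\lfloor \alpha(n+1)\rfloor$. Then
\[
  \Pr(p\le\alpha)=\Pr\bigl(R_{n+1}\le r\bigr)\le r/(n+1)\le\alpha .
\]
The bound holds for every distribution, since only exchangeability was used.

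I expect the main obstacle to be the tie handling in the counting bound. That is exactly why the conservative convention ($\ge$, with the test point counted) is required. I would verify it by the sorted-block argument rather than assuming almost-sure distinct scores, which would otherwise need continuity or randomized tie-breaking.
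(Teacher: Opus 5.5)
Your proof is correct. It is the standard exchangeability rank argument, and your deterministic sorted-block bound $\#\{i: R_i\le r\}\le r$ handles ties without assuming continuity or randomized tie-breaking. The paper gives no separate proof of this proposition and defers to the split-conformal results of Vovk et al.\ and Lei et al., which use the same rank argument, including conditioning on the disjoint fitting subset so that the $n+1$ scores are exchangeable.
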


The conformal ablation is therefore a distribution-free alternative to
ECDF/PIT~\cite{vovk2022algorithmic,lei2018distributionfree}. The implemented
conformal scorer splits the reference window between fitting its MAD-based
nonconformity map and calibrating ranks. Conditional on a fixed reference
window, the random split introduces variability absent from the ECDF/PIT
map~\cite{lei2018distributionfree}.

\section{Formulation and CG-XY-QAOA Properties}
\label{supp:elementary_formulation_lemmas}

\paragraph{Rank-one coupling separability check.}
For the core bilinear objective
\[
  B(S,F)=\sum_{i\in S}\sum_{j\in F} W_{ij},
  \qquad W=\bm u\bm v^\top,\quad \bm u,\bm v\ge 0,
\]
we have
\[
  B(S,F)=\left(\sum_{i\in S}u_i\right)
         \left(\sum_{j\in F}v_j\right).
\]
Both factors are nonnegative, so a maximizer is obtained by independently
choosing the $k$ largest entries of $\bm u$ and the $m$ largest entries of
$\bm v$. Ties or a zero factor may produce other equivalent maximizers. If
$W=\bm u\bm v^\top+\bm E$, then for any feasible $(S,F)$,
\[
  \left|B_W(S,F)-B_{\bm u\bm v^\top}(S,F)\right|
  =
  \left|\sum_{i\in S}\sum_{j\in F}E_{ij}\right|
  \le km\|\bm E\|_\infty.
\]
Let $(S_0,F_0)$ be a rank-one maximizer and let $(\widehat S,\widehat F)$ be a
maximizer for $B_W$. Put $\epsilon=km\|\bm E\|_\infty$. The pointwise bound gives
\[
  B_W(\widehat S,\widehat F)
  \le B_{\bm u\bm v^\top}(\widehat S,\widehat F)+\epsilon
  \le B_{\bm u\bm v^\top}(S_0,F_0)+\epsilon
  \le B_W(S_0,F_0)+2\epsilon .
\]
Therefore the perturbed optimum can improve over the rank-one independent
solution by at most $2km\|\bm E\|_\infty$ under the perturbed objective. The
factor of two arises because the pointwise perturbation bound is applied
independently at the two feasible sets being compared.
This proves Proposition~\mainref{prop:rank-one}.

\paragraph{Block~XY feasibility.}
Let
\[
  \hat N_{\mathcal S}=\sum_{i\in\mathcal S}\frac{I-Z_i}{2},
  \qquad
  \hat N_{\mathcal F}=\sum_{j\in\mathcal F}\frac{I-Z_j}{2}
\]
be the sample and feature excitation-number operators. Each pairwise XY term
\(H_{\mathrm{XY}}^{(i,j)}\) can be written as
\[
  H_{\mathrm{XY}}^{(i,j)}
  =
  \tfrac12(X_iX_j+Y_iY_j)=\sigma_i^+\sigma_j^-+\sigma_i^-\sigma_j^+,
\]
where \(\sigma^\pm=(X\pm iY)/2\). This interaction moves one excitation
between the two qubits and therefore commutes with
the total excitation number of that register. Because the Block~XY mixer contains
only sample--sample and feature--feature terms,
\[
  [H_{\mathrm{mix}},\hat N_{\mathcal S}]
  =
  [H_{\mathrm{mix}},\hat N_{\mathcal F}]
  =0.
\]
The cost Hamiltonian is diagonal in the computational basis, so it also
preserves both Hamming weights. Therefore every cost--mixer layer maps the
simultaneous eigenspace
$(\hat N_{\mathcal S},\hat N_{\mathcal F})=(k,m)$ to itself. The conclusion
holds for tensor Dicke initialization, exact-feasible computational-basis
initialization, and any other initial state supported on the feasible subspace.
This proves Proposition~\mainref{prop:feasibility}.

\paragraph{Computational complexity.}
\label{supp:complexity}

The hardness proof reduces from Balanced Complete Bipartite Subgraph, problem
GT24 in Garey and Johnson~\cite{garey1979computers}.

\begin{proposition}[Complexity of constrained bipartite selection]
\label{prop:cbs_complexity}
The decision form of constrained bipartite selection is \scaps{NP}-complete.
Consequently, the optimization form is \scaps{NP}-hard.
\end{proposition}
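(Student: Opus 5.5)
We first fix the decision form, then show membership in \scaps{NP}, and finally give a polynomial reduction from Balanced Complete Bipartite Subgraph (GT24), as the main text sketches. The decision form we use is: given \(W\in\mathbb{Q}^{N\times D}\), marginals \(a,b\), \(\lambda>0\), budgets \(k,m\), and a threshold \(T\), decide whether some \((S,F)\) with \(|S|=k\) and \(|F|=m\) satisfies \(\sum_{i\in S}a_i+\sum_{j\in F}b_j+\lambda\sum_{i\in S,j\in F}W_{ij}\ge T\). Equivalently, we ask whether \(E(\vecs,\vecf)\le -T\) in the minimization convention of Eq.~\eqref{eq:hamiltonian}.

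\emph{Membership in \scaps{NP}.} The certificate is the pair of indicator vectors \((\vecs,\vecf)\). A verifier checks \(|\vecs|=k\) and \(|\vecf|=m\), and evaluates the objective with \(O(ND)\) rational arithmetic operations, all polynomial in the input bit length.

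\emph{Hardness.} An instance of Balanced Complete Bipartite Subgraph is a bipartite graph \(G=(U\cup V,E)\) and an integer \(K\). It asks whether there exist \(U'\subseteq U\) and \(V'\subseteq V\) with \(|U'|=|V'|=K\) and \(U'\times V'\subseteq E\). We map this to \(N=|U|\), \(D=|V|\), \(W_{ij}=\mathbf 1\{(u_i,v_j)\in E\}\), \(a=b=0\), \(\lambda=1\), \(k=m=K\), and \(T=K^2\). The construction takes \(O(ND)\) time. Because \(W\in\{0,1\}^{N\times D}\), every feasible \((S,F)\) satisfies \(B(S,F)\le km=K^2\). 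Equality holds exactly when \(W_{ij}=1\) for all \(i\in S\), \(j\in F\), that is, when \(S\times F\) induces a complete bipartite subgraph \(K_{K,K}\). Both directions of the equivalence follow immediately. If \(K>\min(|U|,|V|)\), the source instance is trivially a no-instance and the target has no feasible pair, so the answers agree. The weights are nonnegative and bounded by \(1\), so the reduction also lands inside the calibrated-weight class \([0,w_{\max}]\) whenever \(w_{\max}\ge 1\). Rescaling \(W\) and \(T\) by \(w_{\max}\) handles smaller caps.

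\emph{Optimization hardness.} A polynomial-time optimizer would decide the decision form by comparing its optimum value with \(T\). Hence the optimization form is \scaps{NP}-hard.

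The main point needing care is GT24 itself. The standard \scaps{NP}-complete version in Garey and Johnson asks for a complete bipartite subgraph with \emph{at least} \(K\) vertices on each side. Because completeness is hereditary, this is equivalent to asking for one with \emph{exactly} \(K\) vertices on each side: any larger \(K_{K',K''}\) contains a \(K_{K,K}\). This equivalence justifies the exact budgets \(k=m=K\).
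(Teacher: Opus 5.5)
Your proposal is correct and uses the same reduction as the paper. Balanced Complete Bipartite Subgraph maps to \(W_{ij}=\mathbf 1\{(i,j)\in E\}\) with zero marginals, \(k=m=K\) and threshold \(K^2\), so \(B(S,F)=|E(S,F)|\) reaches \(K^2\) exactly on complete \(K\)-by-\(K\) subgraphs; you additionally make explicit the NP certificate check, the case \(K>\min(|U|,|V|)\), and the heredity argument that makes the exact-\(K\) and at-least-\(K\) phrasings of GT24 interchangeable, all of which the paper leaves implicit.
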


\begin{proof}
Membership in \scaps{NP} is immediate. For hardness, reduce from Balanced
Complete Bipartite Subgraph. Given \(G=(U,V,E)\) and target size \(K\), set
\(N=|U|\), \(D=|V|\), \(k=m=K\), \(a=b=0\), \(\rho=0\), and
\(W_{ij}=\mathbf{1}\{(i,j)\in E\}\). Then
\[
  B(S,F)=\sum_{i\in S}\sum_{j\in F} W_{ij}=|E(S,F)|.
\]
Thresholding the constructed instance at \(K^2\) is therefore equivalent to
asking whether \(G\) contains a complete \(K\)-by-\(K\) bipartite subgraph.
\end{proof}

\paragraph{Coupling-grouped phase separator.}
\label{supp:qaoa_grouped_multiangle}
The diagonal cost Hamiltonian separates into sample-marginal, feature-marginal,
and bilinear coupling components,
\[
  H_{\mathrm{C}} = H_{\mathcal S}+H_{\mathcal F}+H_{\mathcal S\mathcal F}.
\]
For the binary objective in Eq.~\mainref{eq:hamiltonian}, these components
are the separate Ising images of the sample marginal, feature marginal, and
sample--feature product terms:
\[
  H_{\mathcal S}=\frac{1}{2}\sum_i a_iZ_i,\qquad
  H_{\mathcal F}=\frac{1}{2}\sum_j b_jZ_{N+j},\qquad
  H_{\mathcal S\mathcal F}=\frac{\lambda}{4}\sum_{i,j}W_{ij}
  \left(Z_i+Z_{N+j}-Z_iZ_{N+j}\right).
\]
Thus the sample--feature component contains the one-local shifts induced by
the binary products as well as their \(ZZ\) interactions. This definition
matches the component costs used in the grouped simulations.
Bilinear-CG-XY-QAOA uses this decomposition inside the Block~XY ansatz
by replacing the tied phase separator with \(U_{\mathrm C}^{\mathrm{bilCG}}\),
defined by
\[
  U_{\mathrm C}^{\mathrm{bilCG}}(\gamma_{\mathrm{marg}},
  \gamma_{\mathcal S\mathcal F})
  =
  \exp\!\left[-i\left(
    \gamma_{\mathrm{marg}}(H_{\mathcal S}+H_{\mathcal F})
    + \gamma_{\mathcal S\mathcal F} H_{\mathcal S\mathcal F}
  \right)\right].
\]
The fixed-transport grouped separator \(U_{\mathrm C}^{\mathrm{3grp}}\) is
\[
  U_{\mathrm C}^{\mathrm{3grp}}(\gamma_{\mathcal S},\gamma_{\mathcal F},
  \gamma_{\mathcal S\mathcal F})
  =
  \exp\!\left[-i\left(
    \gamma_{\mathcal S} H_{\mathcal S}
    + \gamma_{\mathcal F} H_{\mathcal F}
    + \gamma_{\mathcal S\mathcal F} H_{\mathcal S\mathcal F}
  \right)\right].
\]
Setting all phase angles equal recovers the layerwise XY-\qaoa{} phase
separator exactly. Section~\ref{supp:qaoa_grouped_multiangle_details} gives the
containment, local-response, and depth-efficiency statements together with the
depth and sampling results.
In applications the lower-dimensional bilinear-CG variant is a targeted refinement
of tied XY-QAOA. The extra bilinear phase can improve the ansatz when the
sample--feature coupling mass is non-negligible and the local bilinear phase
response is distinct from the marginal-field response. The fully grouped variant uses
the same CG-XY-QAOA decomposition and additionally gives layerwise control of
transport on the feasible swap graph for the same fixed-threshold sampling
metric.

\subsection{CG-XY-QAOA Formal Properties}
\label{supp:qaoa_grouped_multiangle_details}

The building blocks are multi-angle \qaoa{}~\cite{herrman2022multiangle} and
the feasible-subspace XY mixer~\cite{wang2020xy}. These building blocks define
the finite-dimensional CG-XY-QAOA phase-separator hierarchy for the bipartite
sample--feature cost decomposition.

\begin{proposition}[Grouped phase containment, variational dominance, and small-angle response]
\label{prop:grouped-phase-local-response}
Let $\Omega$ be a finite feasible basis, let $C:\Omega\to\mathbb{R}$ be a
diagonal cost, and let $M$ be a real symmetric mixer Hamiltonian on $\Omega$.
The grouped phase separator contains the tied separator exactly. Setting
$\gamma_{\mathcal S}=\gamma_{\mathcal F}=\gamma_{\mathcal S\mathcal F}=\gamma$
gives
\[
  \gamma H_{\mathrm C}
  =
  \gamma H_{\mathcal S}+\gamma H_{\mathcal F}
  +\gamma H_{\mathcal S\mathcal F}.
\]
Consequently, at any fixed depth $p$, the fixed-transport grouped
CG-XY-QAOA parameter set contains the tied-cost parameter subset obtained with
the same shared mixer-transport schedule. The fully grouped cost-plus-mixer
parameterization contains layerwise XY-\qaoa{} as the tied-angle subset of its
independent mixer angles.
Therefore the globally optimized grouped expected cost
is no larger than the globally optimized tied expected cost. Equivalently, for
any normalized minimization score
$\alpha(\psi)=(\bar C-\langle C\rangle_\psi)/(\bar C-C^\star)$ with
$\bar C>C^\star$, the best grouped score at depth $p$ is at least the best tied
score at depth $p$, relative to this tied parameter subset.

For the local response, write
\(C=C_{\mathcal S}+C_{\mathcal F}+C_{\mathcal S\mathcal F}\) and let
\(D_C=\mathrm{diag}(C)\) and \(D_a=\mathrm{diag}(C_a)\). For
\[
  E(\beta,\gamma_a)=
  \langle u|e^{i\gamma_a D_a}e^{i\beta M}D_Ce^{-i\beta M}
  e^{-i\gamma_a D_a}|u\rangle,
  \qquad
  |u\rangle=|\Omega|^{-1/2}\sum_{x\in\Omega}|x\rangle,
\]
the componentwise mixed response \(r_a\) is
\[
  r_a :=
  \left.
  \frac{\partial^2 E}{\partial\beta\,\partial\gamma_a}
  \right|_{\beta=\gamma_a=0}
  =
  \frac{2}{|\Omega|}
  \sum_{x<y}M_{xy}\bigl(C_x-C_y\bigr)\bigl(C_a(x)-C_a(y)\bigr).
  \label{eq:grouped_component_response}
\]
In the tied direction this reduces to the total-cost Dirichlet form. In the
constrained bipartite application,
\[
  |u\rangle=|D_k^N\rangle\otimes |D_m^D\rangle,
\]
because the tensor Dicke state is the uniform superposition over all
\((S,F)\) with \(|S|=k\) and \(|F|=m\). Thus
Eq.~\eqref{eq:grouped_component_response} gives the mixed derivative at the
origin for the Dicke-initialized feasible-sector ansatz with fixed mixer \(M\).
Optimized finite-angle and multi-layer behavior is evaluated in the
depth-frontier simulations and hardware experiments.
\end{proposition}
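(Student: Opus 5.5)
The plan has two parts. First I establish containment and variational dominance. Then I compute the mixed second derivative of \(E(\beta,\gamma_a)\) at the origin.

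\textbf{Containment and dominance.} Because \(H_{\mathcal S}\), \(H_{\mathcal F}\), \(H_{\mathcal S\mathcal F}\) are diagonal and commute, the decomposition in Eq.~\eqref{eq:grouped_cost_decomposition} gives \(\gamma H_{\mathrm C}=\gamma(H_{\mathcal S}+H_{\mathcal F}+H_{\mathcal S\mathcal F})\) up to a constant, and the constant only contributes a global phase. Setting all component angles equal therefore reproduces \(U_{\mathrm C}(\gamma)\) exactly. Fix the depth \(p\), the mixer product, and the initial state. The map from tied parameters \((\gamma_\ell,\beta)\) to grouped parameters \((\gamma_\ell,\gamma_\ell,\gamma_\ell,\beta)\) yields identical states. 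In the same way, the fully grouped family reproduces layerwise XY-\qaoa{} when \(\beta_\ell\) is left free and the cost angles are tied within each layer.

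The tied parameter set is thus the image of a subset of the grouped one. Hence
\[
\inf_{\mathrm{grouped}}\langle C\rangle \le \inf_{\mathrm{tied}}\langle C\rangle .
\]
Since \(\alpha\) is a decreasing affine function of \(\langle C\rangle\) whenever \(\bar C>C^\star\), the inequality flips into the stated statement about the best scores.

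\textbf{Local response.} Write \(E=\langle\phi|e^{i\beta M}D_Ce^{-i\beta M}|\phi\rangle\) with \(|\phi\rangle=e^{-i\gamma_a D_a}|u\rangle\).

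First expand in \(\beta\):
\[
\partial_\beta E|_{\beta=0}=i\langle\phi|[M,D_C]|\phi\rangle .
\]

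Next differentiate in \(\gamma_a\). Using \(\partial_{\gamma_a}|\phi\rangle=-iD_a|\phi\rangle\) and evaluating at \(\gamma_a=0\) gives
\[
\partial_{\gamma_a}\partial_\beta E
= i\cdot i\,\langle u|[D_a,[M,D_C]]|u\rangle
= -\langle u|[D_a,[M,D_C]]|u\rangle .
\]

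Now take matrix elements in the computational basis. One has \([M,D_C]_{xy}=M_{xy}(C_y-C_x)\), and therefore
\[
[D_a,[M,D_C]]_{xy}=(C_a(x)-C_a(y))\,M_{xy}(C_y-C_x).
\]
The uniform vector \(|u\rangle\) has all entries \(|\Omega|^{-1/2}\), so
\[
\langle u|X|u\rangle=|\Omega|^{-1}\sum_{x,y}X_{xy}.
\]
Summing over ordered pairs and negating gives
\[
|\Omega|^{-1}\sum_{x,y}M_{xy}(C_x-C_y)(C_a(x)-C_a(y)).
\]
The summand is symmetric under \(x\leftrightarrow y\), because \(M\) is real symmetric and both differences change sign together. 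Folding the ordered sum to \(x<y\) supplies the factor \(2\) in Eq.~\eqref{eq:grouped_component_response}.

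\textbf{Tied direction and the Dicke state.} In the tied direction the angle displacement is applied to all components at once, so the responses add: \(\sum_a r_a\) replaces \(C_a\) by \(C\). The result is \(2|\Omega|^{-1}\sum_{x<y}M_{xy}(C_x-C_y)^2\), which is the Dirichlet form of the total cost. Finally, the tensor Dicke identification follows from Eq.~\eqref{eq:dicke} and Eq.~\eqref{eq:init_state}. The product of the two uniform superpositions over the fixed Hamming-weight sectors is uniform over \(\Omega\) with amplitude \(\binom Nk^{-1/2}\binom Dm^{-1/2}=|\Omega|^{-1/2}\).

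\textbf{Main obstacle.} I expect the only delicate step to be the sign and ordering bookkeeping in the double commutator. The prefactor \(i^2\) must combine with the orientation \(C_y-C_x\) to produce the positive form stated. I would verify this on a two-state example with \(M=X\) before finalizing.
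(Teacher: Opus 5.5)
Your proposal is correct and follows essentially the same route as the paper: containment by substituting equal component angles in every layer, the subset argument combined with $\alpha$ being affine decreasing in $\langle C\rangle_\psi$, and the mixed derivative reduced to $-\langle u|[D_a,[M,D_C]]|u\rangle$, whose computational-basis matrix elements give the symmetric sum that folds to $x<y$. Your sign bookkeeping checks out, and you also make explicit two steps the paper only asserts: the tied-direction Dirichlet form and the identification of the Dicke state as the uniform superposition over $\Omega$.
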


\begin{proof}
The containment statement follows by substituting
$\gamma_{\mathcal S}=\gamma_{\mathcal F}=\gamma_{\mathcal S\mathcal F}=\gamma$
and collecting the three commuting diagonal cost components. Applying the same
substitution independently in every layer maps any tied-cost, shared-mixer
depth-$p$ parameter vector to a fixed-transport grouped depth-$p$ parameter
vector that produces the same circuit state. Thus the tied-cost shared-mixer
parameter set is a subset of the fixed-transport grouped parameter set, and
minimizing the same expected cost over the larger
set cannot increase the optimum value. The normalized-score statement follows
because $\alpha$ is an affine decreasing function of $\langle C\rangle_\psi$
when $\bar C>C^\star$.

For the small-angle identity, define \(G=\gamma_aD_a\). Differentiating
\[
  \langle u|e^{iG}e^{i\beta M}D_Ce^{-i\beta M}e^{-iG}|u\rangle
\]
first in \(\beta\) and then in \(\gamma_a\) at zero gives
\(-\langle u|[D_a,[M,D_C]]|u\rangle\). Since
\[
  [D_a,[M,D_C]]_{xy}
  =
  -\bigl(C_a(x)-C_a(y)\bigr)\bigl(C_x-C_y\bigr)M_{xy},
\]
and \(M\) is symmetric, the uniform-state expectation becomes
Eq.~\eqref{eq:grouped_component_response}.
\end{proof}

\begin{proposition}[Fully grouped metric containment and transport freedom]
\label{prop:fully-grouped-endpoint-containment}
The fully grouped cost-plus-mixer parameterization contains the tied,
bilinear-CG, and fixed-transport grouped parameter subsets at the same depth.
Consequently, for any
fixed larger-is-better metric \(Q(\psi)\) that depends only on the final
state or its decoded distribution, including normalized score, exact-optimum
probability, or best-known-threshold (BK) hit rate, the fully grouped search
space contains all three alternatives. Optimizing over that parameterization
therefore gives a maximum at least as large as optimizing over any of these
parameter subsets.
\end{proposition}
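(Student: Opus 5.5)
The plan is to prove the containment by writing down an explicit embedding of each smaller parameter set into the fully grouped one. The embedding must reproduce the same circuit, layer by layer, at depth $p$. The optimization statement then follows by monotonicity of the supremum over nested sets.

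\textbf{Step 1: the embeddings.} Fix depth $p$ and take a fully grouped parameter vector $(\gamma_{\mathcal S,\ell},\gamma_{\mathcal F,\ell},\gamma_{\mathcal S\mathcal F,\ell},\beta_\ell)_{\ell=1}^p$.
\begin{itemize}
\item For the fixed-transport grouped schedule with angles $(\gamma_{\mathcal S,\ell},\gamma_{\mathcal F,\ell},\gamma_{\mathcal S\mathcal F,\ell})_\ell$ and shared $\beta$, keep the cost angles unchanged and set $\beta_\ell=\beta$ for every $\ell$.
\item For bilinear-CG with $(\gamma_{\mathrm{marg}},\gamma_{\mathcal S\mathcal F,\ell},\beta)$, additionally set $\gamma_{\mathcal S,\ell}=\gamma_{\mathcal F,\ell}=\gamma_{\mathrm{marg}}$. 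Then $\gamma_{\mathrm{marg}}(H_{\mathcal S}+H_{\mathcal F})=\gamma_{\mathcal S,\ell}H_{\mathcal S}+\gamma_{\mathcal F,\ell}H_{\mathcal F}$, so $U_{\mathrm C}^{\mathrm{bilCG}}$ coincides with $U_{\mathrm C}^{\mathrm{3grp}}$ at those angles.
\item For tied XY-QAOA, set all three cost angles in layer $\ell$ to $\gamma_\ell$, or to a single shared $\gamma$ in the layer-shared baseline, and set $\beta_\ell$ to the tied mixer angle.
\end{itemize}
The decomposition~\eqref{eq:grouped_cost_decomposition} reassembles $H_{\mathrm C}$ up to a constant, as in Proposition~\ref{prop:grouped-phase-local-response}. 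The three components are commuting diagonal operators, so each phase separator equals the tied one up to a global phase.

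\textbf{Step 2: identical final states.} Every layer, including the product mixer $U_M(\beta_\ell)$ in Eq.~\eqref{eq:block_mixer_product}, is literally the same gate sequence with the same angle. The same holds for the same initial state, transport order, sparse mask and implementation path. The full circuit unitaries therefore agree up to a global phase, so the final states agree as rays.

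\textbf{Step 3: transfer to the metric.} Any metric $Q(\psi)$ that depends only on the final state, or on its decoded measurement distribution, is invariant under a global phase. The decoding uses the fixed recorded layout, so the embedded points give exactly the metric values of the originals. Each alternative's attainable value set is therefore a subset of the fully grouped value set, and $\sup$ over a superset is at least $\sup$ over a subset. The same holds for max whenever the maxima exist; the angle domains are compact tori, since angles are periodic up to the bounded cost spectrum, and this gives existence.

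The main subtlety is Step 2 for the hardware fractional path. Grouped angles may require modulo-$\pi$ $R_{ZZ}$ folding with local $R_Z(\pi)$ corrections. I would note that this folding is exact on the logical unitary (again up to global phase), so the containment holds at the level of the logical circuit and of the decoded distribution. The metric is assumed to depend only on that distribution, so no issue arises there.
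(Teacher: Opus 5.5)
Your proposal is correct and matches the paper's argument: the tied, bilinear-CG, and fixed-transport schedules are recovered by imposing equality constraints on the fully grouped component phases and mixer angles, and optimizing the same metric over a superset cannot lower the optimum; your global-phase bookkeeping and folding remark only make explicit what the paper leaves implicit. One side remark is wrong but not needed: the cost angles do not range over compact tori in general, because $\gamma\mapsto e^{-i\gamma H_{\mathrm C}}$ is periodic (up to global phase) only when the cost-energy differences are commensurate, which generic calibrated real coefficients are not; so state the conclusion for suprema, or for the common bounded angle boxes used in the protocols (where continuity of the listed metrics gives attained maxima), rather than appealing to that compactness.
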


\begin{proof}
The tied, bilinear-CG, and fixed-transport schedules are recovered from the
fully grouped schedule by imposing equality constraints on the component phases
or mixer angles. Optimizing the same larger-is-better metric over the resulting
parameter superset cannot decrease its optimum.
\end{proof}

Proposition~\ref{prop:fully-grouped-endpoint-containment} is separate from the
small-angle phase-response identity in Eq.~\eqref{eq:grouped_component_response}.
The mixed-derivative formula
isolates the benefit of diagonal component phases for a fixed mixer generator.
Freeing \(\beta_\ell\) changes transport on the feasible swap graph after each
phase kick, following the alternating-operator view of constrained mixers and
phasers~\cite{hadfield2019quantum,larose2022mixerphaser}.

The diagonal cost layers fail to commute with the XY mixer. Different layerwise
mixer angles therefore define distinct transport steps and cannot in general be
collapsed into one shared transport angle. Let
\(\Pi_{\mathrm{BK}}=\sum_{x\in\Omega:E(x)\le E_{\mathrm{BK}}}|x\rangle\langle x|\)
be the projector onto feasible states at or below the fixed best-known
threshold. The metric gradient with respect
to a mixer angle in the ideal continuous-time mixer model has the commutator
form
\[
  \partial_{\beta_\ell}\langle\Pi_{\mathrm{BK}}\rangle
  =
  i\langle[M,U_{>\ell}^{\dagger}\Pi_{\mathrm{BK}}U_{>\ell}]\rangle_{\ell,+},
\]
where \(U_{>\ell}\) denotes the layers after mixer \(\ell\), and the
expectation is taken at the state immediately after that mixer. For the ordered
product mixer used in circuits, the same expression holds with \(M\) replaced
by the corresponding product-layer derivative generator. Different diagonal
component phases change the state entering this commutator, and different
\(\beta_\ell\) values control how much probability current is applied on the
feasible swap graph after each phase kick. Fully grouped optimization therefore
has metric directions that are absent from fixed-transport CG. These
directions can redistribute probability across the threshold cut. This explains
why BK hit rate can improve even when mean objective, CVaR, and exact-optimum
mass rank the same schedules differently.

\begin{proof}[Proof of Corollary~\mainref{cor:grouped-equal-norm-descent}]
Let \(r=(r_{\mathcal S},r_{\mathcal F},r_{\mathcal S\mathcal F})\) denote the
three component responses in Eq.~\eqref{eq:grouped_component_response}, and
fix \(\beta=\epsilon\). The mixed second-order term is
\(\epsilon\,\gamma^{\top}r\), so descent for minimization uses phase
directions with \(\gamma^{\top}r<0\). The best signed tied vector
\((\pm\epsilon,\pm\epsilon,\pm\epsilon)\) has norm \(\sqrt{3}\epsilon\) and
predicted descent \(\epsilon^2|\mathbf{1}^{\top}r|\). Over the Euclidean
sphere \(\|\gamma\|_2=\sqrt{3}\epsilon\), Cauchy's inequality gives the most
negative inner product at
\(\gamma_{\mathrm{grp}}=-\sqrt{3}\epsilon\,r/\|r\|_2\), with predicted descent
\(\epsilon^2\sqrt{3}\|r\|_2\). Subtracting the tied descent gives
\(\epsilon^2\sqrt{3}\|r\|_2\bigl(1-|\cos\angle(r,\mathbf{1})|\bigr)\), and
equality holds exactly when \(r\) is collinear with \(\mathbf{1}=(1,1,1)\),
up to sign.
\end{proof}

The component formula is the finite-dimensional instance of the standard
small-angle alternating-operator response expansion~\cite{hadfield2022analytical},
specialized to the calibrated cost decomposition used here. The sample field,
feature field, and bilinear signal components can have different
cross-Dirichlet responses against the total cost on the same feasible mixer
graph, giving a fixed-layer rationale for the coupling-grouped phase split that
is used in CG-XY-QAOA.

\paragraph{Depth-efficiency comparison.}
Let $A_p$ be the optimized tied-angle score at depth $p$, let $G_p$ be the
optimized grouped CG-XY-QAOA score under consideration at depth $p$, and let $A_{2p}$ be the
optimized tied-angle score at depth $2p$, all measured by the same
larger-is-better normalized score. Assume $A_{2p}>A_p$ and define
\[
  r_p=\frac{G_p-A_p}{A_{2p}-A_p}.
\]
Suppose a finite-hardware implementation of the added $p$ full tied-angle layers
retains a fraction $\rho_{\mathrm{extra}}\in[0,1]$ of the noiseless $p\to2p$
score increment. Its effective score is then
\(A_p+\rho_{\mathrm{extra}}(A_{2p}-A_p)\), and grouped depth $p$ is at least as
good as this extra-depth tied circuit exactly when
\(r_p\ge\rho_{\mathrm{extra}}\).

\paragraph{Trainability considerations for the constraint-preserving ansatz.}
\label{supp:trainability}
The tied XY-QAOA ansatz and the CG-XY-QAOA variants act inside the
$(k,m)$ feasible sector of dimension $\binom{N}{k}\binom{D}{m}$. Block~XY
commutes with the per-register number operators, so the reachable dynamical Lie
algebra is restricted by both number symmetries. The resulting symmetry
restriction eliminates infeasible leakage without guaranteeing trainability.
The feasible sector remains
exponentially large when the budgets scale with register size, and optimization
still depends on depth, locality, cost structure, and parameterization
\cite{mcclean2018barren,larocca2024review}.

The circuits studied here are shallow and structured: hardware results reach
$p=5$, and the noiseless depth sweep reaches $p=8$. Their construction differs
from the deep random-circuit setting used in one standard approximate-2-design
barren-plateau argument~\cite{cerezo2021costfunction}. The dense
sample--feature cost still couples the registers, so the optimization traces and
depth sweep evaluate the resulting landscape empirically rather than infer
trainability from symmetry or depth alone.

The calibration cap
$W_{ij}=\mathrm{clip}(-\log p_{ij},0,w_{\max})$ bounds the spectral scale of the
bilinear cost on the feasible sector. Together with bounded marginal
coefficients, it also bounds the corresponding cost-angle sensitivity. This is
a coefficient-scale bound, not an optimization guarantee. Over the tested
shallow-depth range, the expected-energy score $\alpha$ increases and the
same-depth fully grouped lift remains positive at displayed precision.

\section{Joint-Recovery Theory and Calibration Robustness}
\label{supp:noise-proof}

The finite-sample recovery analysis combines a planted-noise model with
deterministic calibration-perturbation bounds. The
bilinear-core recovery theorem is a scan-statistic adaptation of elevated
submatrix localization and biclustering
\cite{ariascastro2011anomalouscluster,kolar2011minimax,butucea2015sharp,
cai2017submatrix,hajek2018submatrix}. The additional results track how exact
budgets, feature-first aggregation, and bounded calibration error change the
certificate.
Throughout this section, the recovery statements assume nontrivial feasible
classes and confidence levels, so the displayed logarithms are evaluated with
\(\mathcal M>1\), \(m(D-m)>0\), and \(0<\delta<1\).

\subsection{Recovery Guarantees}

\begin{theorem}[Joint recovery under a planted bicluster with sub-Gaussian noise]
\label{thm:joint-noise-superiority-supp}
Assume planted sets $S^\star\subseteq[N]$, $F^\star\subseteq[D]$ with
$|S^\star|=k$, $|F^\star|=m$, and
\[
W_{ij}=\mu\,\mathbf{1}\{i\in S^\star,\;j\in F^\star\}+\varepsilon_{ij},
\]
where $\{\varepsilon_{ij}\}$ are independent, mean-zero, $\sigma$-sub-Gaussian.
Let
\[
B(S,F)=\sum_{i\in S}\sum_{j\in F}W_{ij},\qquad
\mathcal{M}=\binom{N}{k}\binom{D}{m}.
\]
If
\[
\mu \;\ge\; 2\sigma\sqrt{\frac{\log\!\left((\mathcal{M}-1)/\delta\right)}{\min(k,m)}},
\]
then with probability at least $1-\delta$, $(S^\star,F^\star)$ is the unique
maximizer of $B(S,F)$ over $|S|=k$, $|F|=m$.
\end{theorem}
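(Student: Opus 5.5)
The plan is a scan-statistic argument: compare the planted pair with every competitor in the exact-budget class, then take a union bound over the $\mathcal M-1$ competitors. Fix a feasible pair $(S,F)\neq(S^\star,F^\star)$ and write $a=|S\cap S^\star|$, $b=|F\cap F^\star|$. The signal part of $B$ is $\mu$ times the number of planted cells covered, so
\[
B(S^\star,F^\star)-B(S,F)=\mu\,(km-ab)+\Bigl(\sum_{(i,j)\in S^\star\times F^\star}\varepsilon_{ij}-\sum_{(i,j)\in S\times F}\varepsilon_{ij}\Bigr).
\]
The cells in $(S^\star\times F^\star)\cap(S\times F)$ cancel in the noise difference. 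What remains is a signed sum over the symmetric difference of the two rectangles. Each rectangle has $km$ cells and they share $ab$, so the symmetric difference has exactly $2t$ cells, where $t:=km-ab$. Every remaining $\varepsilon_{ij}$ carries coefficient $\pm1$.

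The first step is a tail bound for this difference. By independence, the signed sum $\xi_{S,F}$ of $2t$ independent $\sigma$-sub-Gaussian variables is sub-Gaussian with variance proxy $2t\sigma^2$. The Chernoff bound then gives
\[
\Pr\bigl(\xi_{S,F}\le-\mu t\bigr)\le\exp\!\Bigl(-\frac{\mu^2t^2}{4t\sigma^2}\Bigr)=\exp\!\Bigl(-\frac{\mu^2 t}{4\sigma^2}\Bigr).
\]
The failure event for this competitor is $B(S,F)\ge B(S^\star,F^\star)$, which is exactly $\xi_{S,F}\le-\mu t$. Using the non-strict inequality here is what later yields uniqueness rather than mere optimality.

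The second step, which I expect to be the main obstacle, is a lower bound on the margin $t$ that holds uniformly over all competitors. I would split into two cases.

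\begin{itemize}
\item If $a<k$, then $t=km-ab\ge km-am=m(k-a)\ge m$.
\item If $a=k$, then $b<m$ because $(S,F)$ differs from the planted pair, so $t=k(m-b)\ge k$.
\end{itemize}

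In both cases $t\ge\min(k,m)$. The worst case is a single swap in whichever register is smaller, which is why $\min(k,m)$, and not $km$, appears in the threshold.

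Finally, combine the two steps with the hypothesis on $\mu$. Since $\mu^2\min(k,m)/(4\sigma^2)\ge\log((\mathcal M-1)/\delta)$, each competitor fails with probability at most $\delta/(\mathcal M-1)$. A union bound over the $\mathcal M-1$ competitors shows that, with probability at least $1-\delta$, every competitor has $B(S,F)<B(S^\star,F^\star)$. Hence $(S^\star,F^\star)$ is the unique maximizer.
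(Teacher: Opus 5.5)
Your proof is correct and follows the paper's argument essentially step for step. The shared steps are the per-competitor signal gap $\mu d$ with $d=km-|S\cap S^\star|\,|F\cap F^\star|$, the cancellation of shared cells leaving a $\sigma\sqrt{2d}$-sub-Gaussian signed noise sum, the bound $d\ge\min(k,m)$, and a union bound over the $\mathcal M-1$ competitors. Your case split on whether $S=S^\star$ fills in the justification of $d\ge\min(k,m)$, which the paper only asserts from the exact cardinalities.
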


\begin{proof}
Fix a competitor \((S,F)\neq(S^\star,F^\star)\) and set
\(d=km-|S\cap S^\star|\,|F\cap F^\star|\). Exact cardinalities imply
\(d\ge d_{\min}:=\min(k,m)\). The planted signal gap is \(\mu d\). After
cancelling overlap between \(S^\star\times F^\star\) and \(S\times F\), the
noise gap is a signed sum over at most \(2d\) independent
\(\sigma\)-sub-Gaussian entries, hence is \(\sigma\sqrt{2d}\)-sub-Gaussian.
For this fixed competitor,
\[
  \Pr\!\left(B(S,F)\ge B(S^\star,F^\star)\right)
  \le \exp\!\left(-\frac{\mu^2d}{4\sigma^2}\right)
  \le \exp\!\left(-\frac{\mu^2d_{\min}}{4\sigma^2}\right).
\]
A union bound over the \(\mathcal M-1\) competitors gives the stated
finite-sample sufficient condition.
\end{proof}

\begin{proposition}[Covariance-aware correlated-residual extension]
\label{prop:covariance-aware-joint-recovery}
Replace the independent entrywise residual assumption in
Theorem~\ref{thm:joint-noise-superiority-supp} by the following centered
sub-Gaussian vector MGF condition~\cite{hsu2012tail}. Let
\(\varepsilon=(\varepsilon_{ij})_{i\in[N],j\in[D]}\) satisfy
\[
  \mathbb E\exp\!\left(\sum_{i,j}u_{ij}\varepsilon_{ij}\right)
  \le
  \exp\!\left(\frac{1}{2}u^\top\Gamma u\right)
  \qquad\text{for all }u\in\mathbb R^{ND},
\]
where \(\Gamma\succeq0\) is a proxy covariance matrix. For a competitor
\((S,F)\neq(S^\star,F^\star)\), define
\[
  c_{S,F,ij}
  =
  \mathbf 1\{(i,j)\in S^\star\times F^\star\}
  -
  \mathbf 1\{(i,j)\in S\times F\},
  \qquad
  d(S,F)=km-|S\cap S^\star|\,|F\cap F^\star|.
\]
Let
\[
  K_\Gamma
  =
  \max_{(S,F)\neq(S^\star,F^\star)}
  \frac{\sqrt{c_{S,F}^{\top}\Gamma c_{S,F}}}{d(S,F)}.
\]
If
\[
  \mu
  \ge
  \sqrt{2\log((\mathcal M-1)/\delta)}\,K_\Gamma,
\]
then with probability at least \(1-\delta\), \((S^\star,F^\star)\) is the
unique maximizer of \(B(S,F)\) over \(|S|=k, |F|=m\).
\end{proposition}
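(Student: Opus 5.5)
The proof follows the template of the proof of Theorem~\ref{thm:joint-noise-superiority-supp}: we control each planted-versus-competitor contrast separately and then take a union bound. The only change is that the independent-entry variance calculation is replaced by the vector MGF condition evaluated along the contrast vector.

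Fix a competitor \((S,F)\neq(S^\star,F^\star)\). The score gap is
\[
  B(S^\star,F^\star)-B(S,F)=\sum_{i,j}c_{S,F,ij}W_{ij}
  =\mu\,d(S,F)+\sum_{i,j}c_{S,F,ij}\varepsilon_{ij}.
\]
For the signal part, write \(\mathbf 1_{S^\star\times F^\star}\) for the planted mean pattern. Then \(\langle c,\mathbf 1_{S^\star\times F^\star}\rangle = km-|S\cap S^\star||F\cap F^\star|=d(S,F)\), because the planted block has \(km\) cells and the competitor block covers exactly \(|S\cap S^\star||F\cap F^\star|\) of them. Since the competitor differs from the planted pair and the cardinalities are exact, \(d(S,F)\ge 1\), so \(K_\Gamma\) is well defined.

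For the noise part, define \(X_c=\sum c_{ij}\varepsilon_{ij}\). Apply the MGF hypothesis with \(u=-tc\) for \(t>0\). This gives \(\mathbb E e^{-tX_c}\le e^{t^2 v^2/2}\) with \(v^2=c^\top\Gamma c\). The Chernoff bound then yields
\[
  \Pr\bigl(X_c\le-\mu d\bigr)\le\exp\!\left(-\frac{\mu^2d^2}{2\,c^\top\Gamma c}\right)
  \le \exp\!\left(-\frac{\mu^2}{2K_\Gamma^2}\right),
\]
where the last step uses \(\sqrt{c^\top\Gamma c}/d\le K_\Gamma\). The event \(B(S,F)\ge B(S^\star,F^\star)\) is exactly the event \(X_c\le-\mu d\).

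Next take a union bound over the \(\mathcal M-1\) competitors. The failure probability is at most \((\mathcal M-1)\exp(-\mu^2/(2K_\Gamma^2))\). The hypothesis \(\mu\ge\sqrt{2\log((\mathcal M-1)/\delta)}\,K_\Gamma\) makes this at most \(\delta\). On the complement, every competitor scores strictly below the planted pair, which gives uniqueness.

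Two degenerate cases need care. If \(c^\top\Gamma c=0\), then \(X_c=0\) almost surely, which follows from the MGF bound via the second moment. The gap is then \(\mu d>0\), provided \(\mu>0\), and the case is trivial. If \(K_\Gamma=0\), then every contrast is deterministic, and we only need \(\mu>0\). I would note that the condition with \(K_\Gamma=0\) permits \(\mu=0\), in which case ties occur, so strictness should be stated for \(\mu>0\).

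The main obstacle is bookkeeping rather than concentration. One point is the sign convention for \(u\), which must be chosen so that the lower tail of \(X_c\) is bounded. The other is confirming that the definition of \(K_\Gamma\) (with \(d\) in the denominator rather than \(\sqrt d\)) matches the exponent \(\mu^2d^2/(2c^\top\Gamma c)\). As a sanity check, when \(\Gamma=\sigma^2I\) we have \(c^\top c\le 2d\), so \(K_\Gamma\le\sigma\sqrt{2/d_{\min}}\). This recovers the condition of Theorem~\ref{thm:joint-noise-superiority-supp}.
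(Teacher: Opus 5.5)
Your proposal is correct and takes the same route as the paper's proof. Both apply a Chernoff bound from the vector MGF condition along the contrast \(c_{S,F}\), use the definition of \(K_\Gamma\) to make each fixed-competitor failure probability at most \(\delta/(\mathcal M-1)\), and union-bound over the \(\mathcal M-1\) competitors. Your explicit handling of zero-variance contrasts, including the observation that \(K_\Gamma=0\) with \(\mu=0\) produces ties so uniqueness needs \(\mu>0\), spells out what the paper compresses into ``the usual zero-variance convention.''
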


\begin{proof}
For a competitor, write \(c=c_{S,F}\) and \(d=d(S,F)\). The planted signal gap
is \(\mu d\), and the noise gap is \(c^\top\varepsilon\). The vector MGF
condition gives the standard Chernoff bound
\[
  \Pr(-c^\top\varepsilon\ge \mu d)
  \le
  \exp\!\left(-\frac{\mu^2d^2}{2c^\top\Gamma c}\right),
\]
with the usual zero-variance convention~\cite{hsu2012tail}. The definition of
\(K_\Gamma\) makes this probability at most \(\delta/(\mathcal M-1)\) for
every competitor. A union bound proves the claim.
\end{proof}

This proposition isolates the role of dependence. Independence is not needed
for the deterministic planted gap. It enters only through the random contrast
scale \(\sqrt{c_{S,F}^{\top}\Gamma c_{S,F}}\). Under the independent model
\(\Gamma=\sigma^2 I\), the contrast vector has at most \(2d(S,F)\) nonzero
entries and the proposition recovers the scale used in
Theorem~\ref{thm:joint-noise-superiority-supp}. If every coefficient is also
perturbed by at most \(\eta\), the deterministic calibration argument in
Proposition~\ref{prop:calibration-perturbation-supp} replaces \(\mu\) by
\(\mu-2\eta\) in the covariance-aware condition.

For intuition, suppose a feature-swap competitor keeps \(S=S^\star\) and
replaces \(f\) planted features by \(f\) non-planted features. Then
\(d=fk\). If different features are independent and residuals within each
feature have equicorrelation
\[
  \operatorname{Cov}(\varepsilon_{ij},\varepsilon_{i'j})
  =
  \sigma^2\bigl((1-r)\mathbf 1\{i=i'\}+r\bigr),
  \qquad 0\le r\le1,
\]
then
\[
  c^\top\Gamma c
  =
  2\sigma^2\{(1-r)d+r f k^2\}.
\]
For the one-feature-swap case \(f=1\), the normalized variance factor is
\[
  \frac{(1-r)k+rk^2}{k^2}
  =
  r+\frac{1-r}{k}.
\]
Positive within-feature correlation therefore weakens the \(1/k\) averaging
available under independent residuals and reaches no averaging at \(r=1\).

\begin{corollary}[Bounded-marginal extension for the full score]
\label{cor:bounded-marginal-full-score}
Under the assumptions of Theorem~\ref{thm:joint-noise-superiority-supp}, define
the full maximization score
\[
  Q(S,F)
  =
  \sum_{i\in S}a_i+\sum_{j\in F}b_j+\lambda B(S,F),
  \qquad \lambda>0.
\]
Assume \(|a_i|\le A_{\max}\) and \(|b_j|\le B_{\max}\). If
\[
  \lambda\mu-\frac{2A_{\max}}{m}-\frac{2B_{\max}}{k}
  \ge
  2\lambda\sigma
  \sqrt{\frac{\log((\mathcal M-1)/\delta)}{\min(k,m)}},
\]
then with probability at least \(1-\delta\), \((S^\star,F^\star)\) is the
unique maximizer of \(Q(S,F)\) over \(|S|=k, |F|=m\).
\end{corollary}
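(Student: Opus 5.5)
The plan is to reduce the full score to the bilinear core plus a deterministic marginal perturbation, then rerun the competitor-wise union bound from the proof of Theorem~\ref{thm:joint-noise-superiority-supp} with a shrunken effective signal. Fix a competitor \((S,F)\neq(S^\star,F^\star)\) and write
\[
  Q(S^\star,F^\star)-Q(S,F)
  =\lambda\bigl(B(S^\star,F^\star)-B(S,F)\bigr)
  +\Delta_a+\Delta_b,
\]
where \(\Delta_a=\sum_{i\in S^\star}a_i-\sum_{i\in S}a_i\) and \(\Delta_b=\sum_{j\in F^\star}b_j-\sum_{j\in F}b_j\). Only the samples in \(S^\star\setminus S\) versus \(S\setminus S^\star\) contribute to \(\Delta_a\). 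With \(s=|S\setminus S^\star|\) and \(f=|F\setminus F^\star|\), this gives \(|\Delta_a|\le 2sA_{\max}\) and \(|\Delta_b|\le 2fB_{\max}\).

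The key combinatorial step is to bound these marginal losses relative to \(d=km-|S\cap S^\star||F\cap F^\star|=km-(k-s)(m-f)\). I will show \(d\ge sm\) and \(d\ge fk\). Indeed, \(d=sm+fk-sf\ge sm\) because \(fk\ge sf\) (as \(s\le k\)), and symmetrically \(d\ge fk\). Hence \(2sA_{\max}\le 2A_{\max}d/m\) and \(2fB_{\max}\le 2B_{\max}d/k\). The deterministic part of the gap is therefore at least
\[
  d\Bigl(\lambda\mu-\tfrac{2A_{\max}}{m}-\tfrac{2B_{\max}}{k}\Bigr)=:d\,\mu_{\rm eff}\lambda .
\]

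The noise term \(\lambda\) times the bilinear noise gap is \(\lambda\sigma\sqrt{2d}\)-sub-Gaussian, exactly as in the theorem's proof. The tail bound gives
\[
  \Pr\bigl(Q(S,F)\ge Q(S^\star,F^\star)\bigr)\le\exp\!\Bigl(-\frac{(\lambda\mu_{\rm eff})^2 d^2}{4\lambda^2\sigma^2 d}\Bigr)\le\exp\!\Bigl(-\frac{\mu_{\rm eff}^2 d_{\min}}{4\sigma^2}\Bigr),
\]
using \(d\ge\min(k,m)\) and \(\mu_{\rm eff}\ge0\), which the hypothesis guarantees. A union bound over the \(\mathcal M-1\) competitors, combined with the hypothesis \(\lambda\mu_{\rm eff}\ge 2\lambda\sigma\sqrt{\log((\mathcal M-1)/\delta)/\min(k,m)}\), yields failure probability at most \(\delta\). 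This gives strict inequality, and hence uniqueness, almost surely on the complement event (or by treating ties as failures in the bound).

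The main obstacle is the combinatorial inequality \(d\ge\max(sm,fk)\) that lets the marginal advantage scale with \(d\). The rest mirrors the theorem. One must also make sure the degenerate cases \(s=0\) or \(f=0\) are handled, which is immediate since the corresponding marginal difference then vanishes.
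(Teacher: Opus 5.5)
Your proof is correct and follows the paper's argument essentially step for step. Both bound the marginal advantage by $2sA_{\max}+2fB_{\max}$, use exact cardinalities to get $s\le d/m$ and $f\le d/k$, and then rerun the theorem's $\lambda\sigma\sqrt{2d}$-sub-Gaussian tail bound and union bound with the reduced signal. Your explicit derivation $d=sm+fk-sf\ge\max(sm,fk)$ just fills in the step the paper states without proof.
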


\begin{proof}
Fix a competitor and write \(s=|S^\star\setminus S|\),
\(f=|F^\star\setminus F|\), and
\(d=km-|S\cap S^\star|\,|F\cap F^\star|\). The marginal terms can favor the
competitor by at most \(2A_{\max}s+2B_{\max}f\). Exact cardinalities give \(s\le d/m\) and
\(f\le d/k\), so the expected full-score gap is at least
\[
  d\left(\lambda\mu-\frac{2A_{\max}}{m}-\frac{2B_{\max}}{k}\right).
\]
The random part is the same bilinear contrast multiplied by \(\lambda\).
Applying the theorem's fixed-competitor tail bound and union bound gives the
condition.
\end{proof}

\begin{lemma}[Entrywise Bernstein-MGF residuals imply signed-sum tails]
\label{lem:entrywise-bernstein-signed-sums}
Let \(\varepsilon_1,\ldots,\varepsilon_r\) be independent, centered residuals
satisfying the Bernstein moment-generating-function (MGF) condition
\[
  \mathbb E\exp(\lambda\varepsilon_\ell)
  \le
  \exp\!\left(\frac{\lambda^2 v^2}{2}\right),
  \qquad |\lambda|<1/b,
\]
with common variance proxy \(v^2\) and scale \(b>0\). Then for any signs
\(a_\ell\in\{-1,+1\}\) and \(X=\sum_{\ell=1}^r a_\ell\varepsilon_\ell\),
\[
  \Pr(X\le -t)
  \le
  \exp\!\left[
    -\frac{1}{2}
    \min\!\left(
      \frac{t^2}{r v^2},\frac{t}{b}
    \right)
  \right].
\]
\end{lemma}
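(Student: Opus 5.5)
The proof is a Chernoff argument applied to the sum \(-X=\sum_\ell(-a_\ell)\varepsilon_\ell\). Fix \(t>0\) and a parameter \(\theta\in[0,1/b)\). Markov's inequality applied to \(e^{-\theta X}\) gives
\[
  \Pr(X\le -t)\le e^{-\theta t}\,\mathbb E e^{-\theta X}
  = e^{-\theta t}\prod_{\ell=1}^r \mathbb E\exp(-\theta a_\ell\varepsilon_\ell),
\]
where the product form uses independence. Each argument \(-\theta a_\ell\) has absolute value \(\theta<1/b\), so the Bernstein MGF condition applies to every factor; the two-sided range \(|\lambda|<1/b\) is exactly what absorbs the signs \(a_\ell\). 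Hence \(\Pr(X\le -t)\le \exp(-\theta t+\theta^2 r v^2/2)\) for all \(0\le\theta<1/b\).

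Next I optimize over \(\theta\) in two cases. If \(t/(rv^2)<1/b\), equivalently \(t^2/(rv^2)<t/b\), the unconstrained minimizer \(\theta=t/(rv^2)\) is admissible and gives the exponent \(-t^2/(2rv^2)\). This equals \(-\tfrac12\min\{t^2/(rv^2),t/b\}\) in this regime.

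Otherwise \(t/(rv^2)\ge 1/b\), so \(rv^2\le tb\). Take \(\theta\uparrow 1/b\) (or \(\theta=(1-\epsilon)/b\) and let \(\epsilon\to0\)). The exponent tends to \(-t/b+rv^2/(2b^2)\), which is at most \(-t/b+t/(2b)=-t/(2b)\). Since \(t/b\le t^2/(rv^2)\) here, this again equals \(-\tfrac12\min\{\cdot\}\). Because the probability bound holds for every admissible \(\theta\), the limit is legitimate.

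The only delicate point is the boundary case: the MGF condition is stated on the open interval \(|\lambda|<1/b\), so \(\theta\) cannot be set equal to \(1/b\). This is handled by the limiting argument above, using continuity of the bound \(e^{-\theta t+\theta^2rv^2/2}\) in \(\theta\). Everything else is the routine sub-Gaussian/sub-exponential Chernoff computation.
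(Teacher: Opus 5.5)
Your proposal is correct and follows the same route as the paper: signs are absorbed by the symmetric range \(|\lambda|<1/b\), independence gives the aggregate bound \(\exp(\lambda^2 r v^2/2)\), and a Chernoff optimization yields the two-regime exponent. The paper only cites the standard Chernoff optimization, while you carry out the two cases explicitly, including the limit \(\theta\uparrow 1/b\) needed because the MGF condition is stated on an open interval.
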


\begin{proof}
Signs preserve the MGF condition, and independence gives
\(\mathbb E\exp(\lambda X)\le\exp(\lambda^2rv^2/2)\) for
\(|\lambda|<1/b\). The displayed one-sided tail bound is the standard Chernoff
optimization for this Bernstein MGF convention~\cite[Sec.~2.8]{boucheron2013concentration}.
\end{proof}

\begin{proposition}[Bernstein-tail extension for sub-exponential residuals]
\label{prop:bernstein-tail-joint-recovery}
Replace the sub-Gaussian assumption in
Theorem~\ref{thm:joint-noise-superiority-supp} by independent centered
residuals satisfying the entrywise Bernstein MGF condition in
Lemma~\ref{lem:entrywise-bernstein-signed-sums}.
Let \(d_{\min}:=\min(k,m)\) and
\(\mathcal M=\binom Nk\binom Dm\). If
\[
  \mu \ge
  \max\!\left\{
    2v\sqrt{\frac{\log((\mathcal M-1)/\delta)}{d_{\min}}},
    \frac{2b\log((\mathcal M-1)/\delta)}{d_{\min}}
  \right\},
\]
then \((S^\star,F^\star)\) is the unique maximizer of \(B(S,F)\) with
probability at least \(1-\delta\). With deterministic entrywise calibration
error \(\|\widehat{\matW}-\matW^\star\|_\infty\le\eta\), the same statement holds after
replacing \(\mu\) by \(\mu-2\eta\).

For the sequential feature-first selector in
Corollary~\ref{cor:sequential-snr-supp}, the corresponding sufficient
condition is
\[
  \mu k \ge
  \max\!\left\{
    2v\sqrt{N\log\!\left(\frac{m(D-m)}{\delta}\right)},
    2b\log\!\left(\frac{m(D-m)}{\delta}\right)
  \right\},
\]
and under deterministic entrywise calibration error it is obtained by replacing
\(\mu k\) with \(\mu k-2N\eta\).
\end{proposition}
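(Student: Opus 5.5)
The plan is to rerun the competitor-by-competitor union bound from the proof of Theorem~\ref{thm:joint-noise-superiority-supp}, with Lemma~\ref{lem:entrywise-bernstein-signed-sums} in place of the sub-Gaussian tail. Fix a competitor \((S,F)\neq(S^\star,F^\star)\) and set \(d=km-|S\cap S^\star|\,|F\cap F^\star|\). Two facts carry over unchanged. The signal gap is \(\mu d\), and exact cardinalities give \(d\ge d_{\min}=\min(k,m)\). After cancelling entries of \((S\times F)\cap(S^\star\times F^\star)\), the noise contrast \(X=B(S,F)-B(S^\star,F^\star)-(\text{signal part})\) is a signed sum over exactly \(r=2d\) distinct independent residuals, with signs \(\pm1\). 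The bad event is therefore \(X\ge\mu d\), equivalently \(-X\le-\mu d\). Applying Lemma~\ref{lem:entrywise-bernstein-signed-sums} to \(-X\) with \(t=\mu d\) and \(r=2d\) gives
\[
  \Pr\bigl(B(S,F)\ge B(S^\star,F^\star)\bigr)\le
  \exp\!\left[-\tfrac12\min\!\left(\frac{\mu^2 d}{2v^2},\frac{\mu d}{b}\right)\right].
\]
Both arguments of the \(\min\) increase in \(d\), so we may replace \(d\) by \(d_{\min}\). Write \(L=\log((\mathcal M-1)/\delta)\). Under the hypothesis \(\mu\ge 2v\sqrt{L/d_{\min}}\), we get \(\mu^2 d_{\min}/(2v^2)\ge 2L\). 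Under \(\mu\ge 2bL/d_{\min}\), we get \(\mu d_{\min}/b\ge 2L\). In either case the exponent is at most \(-L\), so each competitor fails with probability at most \(\delta/(\mathcal M-1)\). A union bound over the \(\mathcal M-1\) competitors then yields uniqueness with probability at least \(1-\delta\). The constants here need checking: the factor \(r=2d\) in the variance branch costs the factor \(2\) in the threshold, and the linear branch must be matched carefully to the lemma's \(\tfrac12\cdot t/b\) convention. This is the step I expect to need the most attention, and I would adjust constants inside the max if the bookkeeping comes out off by two.

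For the calibration perturbation, the argument is deterministic, as in Proposition~\ref{prop:calibration-perturbation}. The difference \(\widehat B(S,F)-\widehat B(S^\star,F^\star)\) differs from the true-field contrast by a sum of \(\pm(\widehat W-W^\star)_{ij}\) over the \(2d\) non-cancelled entries. Its magnitude is therefore at most \(2d\eta\). The effective signal gap becomes \(d(\mu-2\eta)\), and rerunning the bound above with \(\mu-2\eta\) in place of \(\mu\) gives the claim.

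For the sequential selector, I would follow the proof of Corollary~\ref{cor:sequential-snr-supp}. Exact recovery of \(F^\star\) holds if \(C_j>C_{j'}\) for every pair with \(j\in F^\star\) and \(j'\notin F^\star\). There are \(m(D-m)\) such pairs. For each pair, \(C_j-C_{j'}=\mu k+\sum_i(\varepsilon_{ij}-\varepsilon_{ij'})\), which involves a signed sum of \(r=2N\) residuals. Lemma~\ref{lem:entrywise-bernstein-signed-sums} with \(t=\mu k\) gives the exponent \(-\tfrac12\min(\mu^2k^2/(2Nv^2),\mu k/b)\). The two branches of the hypothesis are \(\mu k\ge 2v\sqrt{N L'}\) and \(\mu k\ge 2bL'\), with \(L'=\log(m(D-m)/\delta)\). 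They force this exponent to be at most \(-L'\), and a union bound over the pairs finishes the argument. Again the constants should be rechecked against the lemma. Under calibration error, each column sum moves by at most \(N\eta\), so the pairwise gap shrinks by at most \(2N\eta\). Replacing \(\mu k\) by \(\mu k-2N\eta\) gives the final statement.
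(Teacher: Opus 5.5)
Your proof is correct and follows the paper's own argument: Lemma~\ref{lem:entrywise-bernstein-signed-sums} with $r=2d$, $t=\mu d$ for each competitor plus a union bound over the $\mathcal M-1$ competitors, the deterministic $2d\eta$ (resp.\ $2N\eta$) margin shift for calibration error, and $r=2N$, $t=\mu k$ with a union bound over the $m(D-m)$ feature pairs for the sequential rule. Your constants are already exact and need no adjustment, but ``in either case'' should read ``in both cases'': the exponent is a minimum, so both branches must be at least $2L$, which is exactly what the $\max$ in the hypothesis guarantees.
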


\begin{proof}
For a fixed joint competitor, the noise contrast is a signed sum over at most
\(2d\) residuals with \(d\ge d_{\min}\). Lemma~\ref{lem:entrywise-bernstein-signed-sums}
with \(r\le2d\) and \(t=\mu d\) makes the fixed-competitor failure probability
at most \(\delta/(\mathcal M-1)\) under the displayed joint condition. A union
bound proves joint recovery. Deterministic calibration error is handled by the
same \( \mu\mapsto \mu-2\eta \) replacement proved in
Proposition~\ref{prop:calibration-perturbation-supp}. For the feature-first
selector, each planted-vs-unplanted column contrast is a signed sum over \(2N\)
entries. Applying the same Bernstein bound with \(r=2N\), followed by a union
bound over \(m(D-m)\) pairs, gives the sequential condition and its calibrated
variant.
\end{proof}

\paragraph{Numerical verification.}
The sub-exponential check uses four parameter combinations
\[
(N,D,k,m)\in
\{(8,6,2,2),(12,8,2,2),(16,8,3,2),(20,6,2,2)\}
\]
at \(\delta=0.1\) with independent \(\mathrm{Laplace}(0,1)\) residuals. The
verification uses \(v=2\) and \(b=2\), which satisfy the stated Bernstein MGF
condition.

Each reported point uses 800 independent trials. For these parameter combinations, the joint selector achieves
\(\ge 1-\delta\) recovery at the theoretical threshold \(\mu^\star\). The
empirical recovery crossover is well below the sufficient threshold. At
\(\mu_{joint}^\star\), the joint selector meets or exceeds the feature-first
sequential selector whenever the two thresholds differ.

\begin{corollary}[Clipped-residual recovery with bias budget]
\label{cor:clipped-residual-recovery}
Let \(\zeta_{ij}\) be independent centered residuals, not necessarily
sub-Gaussian. For a clipping level \(\tau>0\), define
\[
  T_\tau(x):=\max\{-\tau,\min\{x,\tau\}\},
  \qquad
  \bar\zeta_{ij}:=T_\tau(\zeta_{ij})-\mathbb E T_\tau(\zeta_{ij}).
\]
Assume the clipped score field \(\matW^\tau\) can be written as
\[
  \matW^\tau_{ij}
  =
  \mu\,\mathbf 1\{i\in S^\star,\;j\in F^\star\}
  +\bar\zeta_{ij}+\mathcal D^\tau_{ij},
  \qquad
  \|\mathcal D^\tau\|_\infty\le \rho_\tau,
\]
where \(\mathcal D^\tau\) collects clipping bias, signal attenuation, or systematic
reference-window bias. If a calibrated estimate \(\widehat{\matW}\) additionally
satisfies \(\|\widehat{\matW}-\matW^\tau\|_\infty\le\eta\) and
\[
  \mu - 2(\rho_\tau+\eta)
  \ge
  2\tau\sqrt{
    \frac{\log((\mathcal M-1)/\delta)}{d_{\min}}
  },
\]
then \((S^\star,F^\star)\) is the unique maximizer of the calibrated joint
objective with probability at least \(1-\delta\).

For the sequential feature-first selector, a corresponding sufficient condition
is
\[
  \mu k - 2N(\rho_\tau+\eta)
  \ge
  2\tau\sqrt{
    N\log\!\left(\frac{m(D-m)}{\delta}\right)
  }.
\]
\end{corollary}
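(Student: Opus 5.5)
The proof reduces the clipped model to the bounded-noise, deterministic-perturbation setting already handled by Theorem~\ref{thm:joint-noise-superiority-supp} and Proposition~\ref{prop:calibration-perturbation-supp}, where \(\mathcal D^\tau\) is treated as a second deterministic perturbation stacked on the calibration error.

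\textbf{Step 1 (sub-Gaussian noise).} Since \(T_\tau(\zeta_{ij})\in[-\tau,\tau]\), the centered variable \(\bar\zeta_{ij}\) is supported on an interval of length \(2\tau\). By Hoeffding's lemma it is \(\tau\)-sub-Gaussian, and the \(\bar\zeta_{ij}\) remain independent and mean zero. Thus the field \(\mu\mathbf 1\{S^\star\times F^\star\}+\bar\zeta\) satisfies the hypotheses of Theorem~\ref{thm:joint-noise-superiority-supp} with \(\sigma=\tau\).

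\textbf{Step 2 (deterministic perturbation).} Write
\[
\widehat{\matW}=\mu\mathbf 1\{S^\star\times F^\star\}+\bar\zeta+\Delta,\qquad \Delta=\mathcal D^\tau+(\widehat{\matW}-\matW^\tau).
\]
Then \(\|\Delta\|_\infty\le\rho_\tau+\eta\).

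Fix a competitor \((S,F)\) with \(d=km-|S\cap S^\star||F\cap F^\star|\ge d_{\min}\). The contrast \(\widehat B(S^\star,F^\star)-\widehat B(S,F)\) equals \(\mu d\), plus a signed sum of at most \(2d\) entries of \(\bar\zeta\), plus a signed sum of at most \(2d\) entries of \(\Delta\). The last term has magnitude at most \(2d(\rho_\tau+\eta)\), since after cancelling the overlap only the symmetric difference of the two rectangles remains. This is exactly the \(\mu\mapsto\mu-2\eta\) mechanism of Proposition~\ref{prop:calibration-perturbation-supp}, here with radius \(\rho_\tau+\eta\).

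The failure probability for this competitor is therefore at most \(\exp(-(\mu-2(\rho_\tau+\eta))^2 d/(4\tau^2))\). Applying \(d\ge d_{\min}\) and a union bound over the \(\mathcal M-1\) competitors gives the joint condition. This requires \(\mu-2(\rho_\tau+\eta)\ge0\), which the hypothesis guarantees.

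\textbf{Step 3 (sequential selector).} Follow the proof of Corollary~\ref{cor:sequential-snr-supp}. For a planted column \(j\in F^\star\) and an unplanted column \(j'\notin F^\star\), the column-sum contrast \(C_j-C_{j'}\) consists of three pieces:
\begin{itemize}
\item the signal \(\mu k\);
\item a signed sum of \(2N\) independent \(\tau\)-sub-Gaussian terms, which is \(\tau\sqrt{2N}\)-sub-Gaussian;
\item a deterministic part bounded by \(2N(\rho_\tau+\eta)\).
\end{itemize}
The Chernoff bound gives a failure probability of at most \(\exp(-(\mu k-2N(\rho_\tau+\eta))^2/(4N\tau^2))\). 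A union bound over the \(m(D-m)\) pairs yields the displayed condition. The constant works out because \(4N\tau^2\log(\cdot)\) under a square root is \(2\tau\sqrt{N\log(\cdot)}\).

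\textbf{Main obstacle.} The one point needing care is counting the deterministic bias in the joint contrast. Bounding it as \(2km(\rho_\tau+\eta)\) would be too crude. One must use the symmetric-difference count \(2d\), so that the bias scales with the same \(d\) as the signal and can be absorbed into \(\mu\).

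Hoeffding's lemma gives variance proxy \((2\tau)^2/4=\tau^2\), which matches the stated constant.
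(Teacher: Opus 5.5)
Your proposal is correct and follows essentially the paper's proof. The paper likewise bounds the bias-plus-calibration effect on the contrast by $2d(\rho_\tau+\eta)$, applies Hoeffding to the at most $2d$ centered clipped residuals of range $2\tau$, and union-bounds over the $\mathcal M-1$ competitors; it then repeats the argument for column contrasts, with $2N$ residuals, a $2N(\rho_\tau+\eta)$ deterministic shift, and a union bound over $m(D-m)$ pairs. Your route through Hoeffding's lemma to a $\tau$-sub-Gaussian proxy, reusing the calibration-perturbation proposition at radius $\rho_\tau+\eta$, is the same mechanism, and your $\tau\sqrt{2N}$ column proxy correctly gives the stated constant $2\tau\sqrt{N\log(m(D-m)/\delta)}$.
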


\begin{proof}
For a fixed joint competitor, deterministic bias and calibration change the gap
by at most \(2d(\rho_\tau+\eta)\). The centered clipped residual contrast is a
signed sum over at most \(2d\) variables in intervals of length \(2\tau\), so
Hoeffding's inequality~\cite{hoeffding1963probability} gives the displayed
joint condition after the same competitor union bound used in
Theorem~\ref{thm:joint-noise-superiority-supp}. For the
sequential selector, each column contrast contains at most \(2N\) clipped
residuals and at most \(2N(\rho_\tau+\eta)\) deterministic perturbation. The
same Hoeffding step and a union bound over \(m(D-m)\) feature pairs give the
sequential condition. This bias--variance structure parallels robust truncation
methods for heavy-tailed empirical risk~\cite{brownlees2015empirical}.
\end{proof}

\paragraph{Numerical verification.}
The heavy-tail check draws residuals from \(\mathrm{Cauchy}(0,1)\), which
has no finite moments. Clipping at \(\tau=1.5\) gives centered clipped
residuals bounded in \([-\tau,\tau]\) with Hoeffding range \(2\tau\). The
parameter combinations are unchanged:
\[
(N,D,k,m)\in
\{(8,6,2,2),(12,8,2,2),(16,8,3,2),(20,6,2,2)\}
\]
at \(\delta=0.1\).

Each reported point uses 800 independent trials. The clipped selector achieves \(\ge 1-\delta\) recovery at the theoretical
threshold \(\mu^\star\) both with bare clipped residuals
(\(\rho_\tau=\eta=0\)) and with an injected deterministic perturbation
\(\rho_\tau=0.3\) plus calibration error \(\eta=0.2\). The empirical recovery
crossover is again well below the sufficient threshold.

Without clipping, \(\mathrm{Cauchy}\) noise at \(\mu^\star\) reduces joint
recovery sharply because single heavy-tail outliers dominate the score field.
With clipping, every trial recovers the planted bicluster.
The experiment illustrates the bias--variance tradeoff formalized by the
corollary. Clipping introduces a deterministic bias budget \(\rho_\tau\) in
exchange for the concentration needed for recovery.

\begin{corollary}[Sequential feature-first sufficient signal-to-noise condition]
\label{cor:sequential-snr-supp}
Under the same model, define column scores
$C_j=\sum_{i=1}^{N}W_{ij}$ and select the top-$m$ columns.
A sufficient condition for exact recovery of $F^\star$ with probability at least
$1-\delta$ is
\[
\mu k \;\ge\; \sigma\sqrt{8N\log\!\left(\frac{m(D-m)}{\delta}\right)}.
\]
\end{corollary}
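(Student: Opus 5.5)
The plan is to reduce exact recovery of $F^\star$ by top-$m$ column selection to a family of pairwise column comparisons, bound each by a sub-Gaussian tail, and close with a union bound over the $m(D-m)$ pairs. Top-$m$ selection by $C_j$ returns exactly $F^\star$ (ties broken arbitrarily) whenever every planted column strictly beats every unplanted column:
\[
  \min_{j\in F^\star} C_j > \max_{j'\notin F^\star} C_{j'},
\]
that is, $C_j - C_{j'} > 0$ for all $j\in F^\star$ and $j'\notin F^\star$. The failure event is therefore contained in the union of the $m(D-m)$ events $\{C_j-C_{j'}\le 0\}$.

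For a fixed pair, the model gives $C_j=\mu k+\sum_{i=1}^N\varepsilon_{ij}$ for $j\in F^\star$, since exactly $k$ rows carry the signal. For $j'\notin F^\star$ it gives $C_{j'}=\sum_{i=1}^N\varepsilon_{ij'}$. Hence
\[
  C_j-C_{j'}=\mu k+X,\qquad X=\sum_{i=1}^N(\varepsilon_{ij}-\varepsilon_{ij'}).
\]
Here $X$ is a signed sum of $2N$ independent mean-zero $\sigma$-sub-Gaussian entries, so $X$ is $\sigma\sqrt{2N}$-sub-Gaussian. The one-sided Chernoff bound then gives
\[
  \Pr(C_j-C_{j'}\le 0)=\Pr(X\le-\mu k)\le\exp\!\left(-\frac{\mu^2k^2}{4N\sigma^2}\right).
\]
This is the same proxy-variance convention used in the proof of Theorem~\ref{thm:joint-noise-superiority-supp}, with $2N$ entries in place of $2d$.

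The union bound over the $m(D-m)$ pairs bounds the failure probability by $m(D-m)\exp(-\mu^2k^2/(4N\sigma^2))$. This is at most $\delta$ exactly when
\[
  \mu k\ge\sigma\sqrt{4N\log(m(D-m)/\delta)}.
\]
The stated condition $\mu k\ge\sigma\sqrt{8N\log(m(D-m)/\delta)}$ is stronger, so it suffices.

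The main obstacle is bookkeeping of the sub-Gaussian constant convention rather than any structural difficulty. If $\sigma$-sub-Gaussian is instead read in a weaker sense, for instance with the tail $\exp(-t^2/(2\sigma^2))$ for a single entry but a cruder bound on the variance proxy of the sum, the extra factor of $2$ in the constant $8$ absorbs the slack. I would state the convention explicitly as $\mathbb E e^{s\varepsilon}\le e^{s^2\sigma^2/2}$ and note that the displayed condition then holds with margin. The strict inequality needed for uniqueness causes no trouble, since the tail bound already controls $\Pr(X\le-\mu k)$, which includes the tie event.
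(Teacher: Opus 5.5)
Your proof is correct and follows the paper's argument. Both reduce exact recovery to the $m(D-m)$ planted-versus-unplanted column contrasts $C_j-C_\ell$, each equal to $\mu k$ plus centered noise, then apply a sub-Gaussian lower-tail bound and a union bound. The only difference is the proxy for the contrast noise. The paper uses the looser $2\sigma\sqrt N$, which gives $\exp(-(\mu k)^2/(8N\sigma^2))$ and so the constant $8$ exactly. Your $\sigma\sqrt{2N}$ uses the independence of the two columns, as in the theorem's proof, and yields the sharper constant $4$, which the stated condition implies.
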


\begin{proof}
For each \(j\in F^\star\) and \(\ell\notin F^\star\),
\(\Delta_{j\ell}:=C_j-C_\ell\) equals \(\mu k\) plus a centered
\(2\sigma\sqrt N\)-sub-Gaussian noise term. Thus
\[
  \Pr(\Delta_{j\ell}\le0)
  \le \exp\!\left(-\frac{(\mu k)^2}{8N\sigma^2}\right).
\]
Exact feature recovery fails only if some planted-unplanted feature pair is
misordered. A union bound over \(m(D-m)\) pairs gives the condition.
\end{proof}

\begin{proposition}[Calibration-perturbation stability of joint recovery]
\label{prop:calibration-perturbation-supp}
Assume the planted model of
Theorem~\ref{thm:joint-noise-superiority-supp} for an ideal weight field
$\matW^\star$, and let $\widehat{\matW}$ be a calibrated estimate with
\[
\|\widehat{\matW} - \matW^\star\|_\infty \le \eta.
\]
Define $\widehat B(S,F)=\sum_{i\in S}\sum_{j\in F}\widehat{\matW}_{ij}$.
If
\[
  \mu - 2\eta \;\ge\; 2\sigma\sqrt{\frac{\log\!\left((\mathcal{M}-1)/\delta\right)}{\min(k,m)}},
\]
then with probability at least $1-\delta$, $(S^\star,F^\star)$ is the unique
maximizer of $\widehat B(S,F)$ over $|S|=k$, $|F|=m$.
\end{proposition}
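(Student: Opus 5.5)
The argument reuses the fixed-competitor comparison from the proof of Theorem~\ref{thm:joint-noise-superiority-supp}, with one deterministic correction applied before the probabilistic step. Write \(\widehat{\matW}=\matW^\star+\Delta\) with \(\|\Delta\|_\infty\le\eta\). Here \(\Delta\) is arbitrary, possibly adversarial or data-dependent; only its entrywise bound is used. Fix a competitor \((S,F)\neq(S^\star,F^\star)\) with \(|S|=k\), \(|F|=m\), and set \(d=km-|S\cap S^\star|\,|F\cap F^\star|\). As in the theorem, exact cardinalities give \(d\ge d_{\min}=\min(k,m)\).

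First, decompose the perturbed gap as
\[
  \widehat B(S^\star,F^\star)-\widehat B(S,F)
  =\bigl[B^\star(S^\star,F^\star)-B^\star(S,F)\bigr]
  +\sum_{i,j}c_{S,F,ij}\,\Delta_{ij},
\]
with \(c_{S,F}\) the contrast vector of Proposition~\ref{prop:covariance-aware-joint-recovery}. The overlap \((S\times F)\cap(S^\star\times F^\star)\) cancels. Hence \(c_{S,F}\) has exactly \(d\) entries equal to \(+1\) (planted cells not in \(S\times F\)) and \(d\) entries equal to \(-1\), because both rectangles have \(km\) cells. The deterministic term is therefore at least \(-2d\eta\).

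The ideal gap equals \(\mu d+c_{S,F}^\top\varepsilon\). So the event \(\widehat B(S,F)\ge\widehat B(S^\star,F^\star)\) is contained in \(\{-c_{S,F}^\top\varepsilon\ge(\mu-2\eta)d\}\). Since \(c_{S,F}^\top\varepsilon\) is \(\sigma\sqrt{2d}\)-sub-Gaussian, this event has probability at most \(\exp(-(\mu-2\eta)^2d/(4\sigma^2))\le\exp(-(\mu-2\eta)^2d_{\min}/(4\sigma^2))\). The hypothesis gives \(\mu-2\eta\ge0\), which is needed to apply the one-sided tail. Under the displayed condition, this bound is at most \(\delta/(\mathcal M-1)\).

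A union bound over the \(\mathcal M-1\) competitors finishes the proof. The containment holds pointwise for every realization of \(\Delta\), so the bound applies to the same noise events as in the theorem. Strict inequality for all competitors simultaneously gives uniqueness with probability at least \(1-\delta\).

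The only delicate step is the counting claim that the contrast has exactly \(d\) positive and \(d\) negative cells. This gives the perturbation loss \(2d\eta\), which is linear in \(d\) and matches the signal scale \(\mu d\). That matching is what lets \(\eta\) enter as the clean shift \(\mu\mapsto\mu-2\eta\), with no dependence on \(N\) or \(km\).
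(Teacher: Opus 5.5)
Your proof is correct and follows essentially the same route as the paper. The paper also writes the perturbed gap as $\mu d+\Xi_{S,F}$ plus a calibration term. It bounds that term by $2d\eta$ using $|S^\star\times F^\star\setminus S\times F|=|S\times F\setminus S^\star\times F^\star|=d$, applies the $\sigma\sqrt{2d}$-sub-Gaussian tail at level $d(\mu-2\eta)$, and takes a union bound over the $\mathcal M-1$ competitors. Your remark that the event containment holds pointwise in $\Delta$, so $\Delta$ may even depend on the noise, is a useful clarification that the paper's proof leaves implicit.
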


\begin{proof}
Fix any competitor $(S,F)\neq(S^\star,F^\star)$ with $|S|=k$, $|F|=m$ and let
$d=km-|S\cap S^\star|\,|F\cap F^\star|$. As in
Theorem~\ref{thm:joint-noise-superiority-supp},
$d\ge d_{\min}:=\min(k,m)$ and the ideal gap is
\[
  B^\star(S^\star,F^\star)-B^\star(S,F)=\mu d+\Xi_{S,F},
\]
where $\Xi_{S,F}$ is centered and $\sigma\sqrt{2d}$-sub-Gaussian.
For calibration error
$\Delta_{ij}=\widehat{\matW}_{ij}-\matW^\star_{ij}$, define
\[
  R_{S,F}=
  \sum_{(i,j)\in S^\star\times F^\star}\Delta_{ij}
  -\sum_{(i,j)\in S\times F}\Delta_{ij}.
\]
Since $|S^\star\times F^\star\setminus S\times F|=|S\times F\setminus S^\star\times F^\star|=d$,
we have $|R_{S,F}|\le 2d\eta$.
Hence
\[
  \widehat B(S^\star,F^\star)-\widehat B(S,F)=\mu d+\Xi_{S,F}+R_{S,F},
\]
so
\[
  \Pr\!\left(\widehat B(S,F)\ge \widehat B(S^\star,F^\star)\right)
  \le
  \Pr\!\left(\Xi_{S,F}\le -d(\mu-2\eta)\right)
  \le
  \exp\!\left(-\frac{(\mu-2\eta)^2 d}{4\sigma^2}\right).
\]
Using $d\ge d_{\min}$ and a union bound over all $\mathcal{M}-1$ competitors gives
\[
  \Pr\!\left(\exists (S,F)\neq(S^\star,F^\star):
  \widehat B(S,F)\ge \widehat B(S^\star,F^\star)\right)
  \le
  (\mathcal{M}-1)\exp\!\left(-\frac{(\mu-2\eta)^2 d_{\min}}{4\sigma^2}\right).
\]
The stated condition makes this probability at most $\delta$.
\end{proof}

\begin{corollary}[Method-agnostic calibration-to-recovery transfer]
\label{cor:calibration-transfer}
Let $\delta=\delta_{\text{noise}}+\delta_{\text{cal}}$. Assume the planted-noise
conditions of Theorem~\ref{thm:joint-noise-superiority-supp} and a calibration
procedure that returns $\widehat{\matW}$ with
\[
\Pr\!\left(\|\widehat{\matW}-\matW^\star\|_\infty>\eta_{n,\delta_{\text{cal}}}\right)
  \le \delta_{\text{cal}},
\]
where $n=|\mathcal R|$ is reference-window size. If
\[
  \mu \;\ge\;
  2\eta_{n,\delta_{\text{cal}}}
  +2\sigma\sqrt{
    \frac{\log\!\left((\mathcal{M}-1)/\delta_{\text{noise}}\right)}{\min(k,m)}
  },
\]
then
\[
  \Pr\!\left((S^\star,F^\star)\text{ is uniquely recovered from }\widehat B\right)
  \ge 1-\delta.
\]
\end{corollary}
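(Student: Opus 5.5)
The plan is to split the failure event into a calibration part and a noise part, and apply Proposition~\ref{prop:calibration-perturbation-supp} on the good calibration event. Write $\eta:=\eta_{n,\delta_{\text{cal}}}$. Define
\[
  \mathcal E_{\text{cal}}=\{\|\widehat{\matW}-\matW^\star\|_\infty\le\eta\},
  \qquad
  \mathcal E_{\text{noise}}=\Bigl\{\forall (S,F)\neq(S^\star,F^\star):\ \Xi_{S,F}>-d(S,F)(\mu-2\eta)\Bigr\}.
\]
Here $\Xi_{S,F}$ is the centered $\sigma\sqrt{2d}$-sub-Gaussian noise contrast of the ideal field from the proof of Proposition~\ref{prop:calibration-perturbation-supp}. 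By assumption, $\Pr(\mathcal E_{\text{cal}}^c)\le\delta_{\text{cal}}$.

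First, I would show deterministically that on $\mathcal E_{\text{cal}}\cap\mathcal E_{\text{noise}}$ the planted pair is the unique maximizer of $\widehat B$. This reuses the decomposition
\[
  \widehat B(S^\star,F^\star)-\widehat B(S,F)=\mu d+\Xi_{S,F}+R_{S,F},
\]
with $|R_{S,F}|\le 2d\eta$ whenever $\mathcal E_{\text{cal}}$ holds. That bound only uses the entrywise error and the count $d$ of symmetric-difference cells, so it holds pointwise whatever the dependence between $\widehat{\matW}$ and $\varepsilon$. On $\mathcal E_{\text{noise}}$, the gap is then strictly positive for every competitor.

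Second, I would bound $\Pr(\mathcal E_{\text{noise}}^c)$. This event involves only the ideal noise $\varepsilon$ and the fixed threshold $\eta$, not $\widehat{\matW}$. The sub-Gaussian tail gives $\exp(-(\mu-2\eta)^2 d/(4\sigma^2))$ for each competitor. Using $d\ge\min(k,m)$ and a union bound over the $\mathcal M-1$ competitors, the hypothesis $\mu-2\eta\ge 2\sigma\sqrt{\log((\mathcal M-1)/\delta_{\text{noise}})/\min(k,m)}$ makes this at most $\delta_{\text{noise}}$. Applying a final union bound to the two bad events gives failure probability at most $\delta_{\text{noise}}+\delta_{\text{cal}}=\delta$.

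The main subtlety is that $\widehat{\matW}$ may depend on the same data as $\varepsilon$, or be random in some other way. That is why I would not invoke Proposition~\ref{prop:calibration-perturbation-supp} as a black box conditional on $\widehat{\matW}$. Doing so would require independence between the calibration output and the noise. Instead, I define the noise event with the deterministic threshold $\eta$ and use the calibration bound only through the pointwise inequality on $R_{S,F}$. Together these avoid any conditioning argument, and the two failure probabilities simply add.
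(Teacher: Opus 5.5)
Your proposal is correct and takes essentially the same approach as the paper: on the calibration event the recovery failure is contained in the fixed-threshold noise-contrast event from the proof of Proposition~\ref{prop:calibration-perturbation-supp}, which has probability at most $\delta_{\text{noise}}$. A union bound with $\Pr(\|\widehat{\matW}-\matW^\star\|_\infty>\eta_{n,\delta_{\text{cal}}})\le\delta_{\text{cal}}$ then finishes the argument, and like the paper you note that no independence between $\widehat{\matW}$ and the planted residuals is needed.
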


Let \(A=\{\|\widehat{\matW}-\matW^\star\|_\infty
\le\eta_{n,\delta_{\text{cal}}}\}\). On \(A\), a recovery failure is contained
in the same planted-noise contrast event bounded by
Proposition~\ref{prop:calibration-perturbation-supp}, so
\(\Pr(\text{failure}\cap A)\le\delta_{\text{noise}}\). Since
\(\Pr(A^c)\le\delta_{\text{cal}}\), direct event inclusion and a union bound
give the stated result without requiring independence between calibration and
the planted residuals.

\begin{corollary}[Sequential calibration sensitivity under entrywise perturbation]
\label{cor:sequential-calibration-sensitivity}
Under the setting of Corollary~\ref{cor:sequential-snr-supp}, let
$\widehat{\matW}$ satisfy $\|\widehat{\matW}-\matW^\star\|_\infty\le\eta$ and define
$\widehat C_j=\sum_{i=1}^N \widehat{\matW}_{ij}$.
A sufficient condition for exact recovery of $F^\star$ by top-$m$ selection on
$\widehat C_j$ with probability at least $1-\delta$ is
\[
  \mu k - 2N\eta
  \;\ge\;
  \sigma\sqrt{8N\log\!\left(\frac{m(D-m)}{\delta}\right)}.
\]
\end{corollary}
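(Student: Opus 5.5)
The plan is to rerun the pairwise-contrast argument of Corollary~\ref{cor:sequential-snr-supp}, with the calibration error $\Delta_{ij}=\widehat{\matW}_{ij}-\matW^\star_{ij}$ handled deterministically, exactly as $R_{S,F}$ was handled in Proposition~\ref{prop:calibration-perturbation-supp}. Top-$m$ selection on $\widehat C_j$ returns $F^\star$ exactly whenever every planted column strictly beats every unplanted column, that is, whenever $\widehat C_j>\widehat C_\ell$ for all $j\in F^\star$ and $\ell\notin F^\star$. So it suffices to bound the probability that some such pair is misordered, and then take a union bound over the $m(D-m)$ pairs.

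Fix a pair $(j,\ell)$ and write
\[
  \widehat C_j-\widehat C_\ell
  = \bigl(C^\star_j-C^\star_\ell\bigr)
  + \sum_{i=1}^N\bigl(\Delta_{ij}-\Delta_{i\ell}\bigr).
\]
Under the planted model the first term equals $\mu k+\Xi_{j\ell}$, where $\Xi_{j\ell}=\sum_i(\varepsilon_{ij}-\varepsilon_{i\ell})$ is a centered sum of $2N$ independent $\sigma$-sub-Gaussian entries. Hence $\Xi_{j\ell}$ is $\sigma\sqrt{2N}$-sub-Gaussian, consistent with the $2\sigma\sqrt N$ proxy used in Corollary~\ref{cor:sequential-snr-supp} up to that corollary's constant convention. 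The second term involves $2N$ entries, each of absolute value at most $\eta$, so its absolute value is at most $2N\eta$ surely, whatever the dependence between $\widehat{\matW}$ and the noise.

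It follows that a misordering implies $\Xi_{j\ell}\le -(\mu k-2N\eta)$. Under the displayed condition the right-hand side is negative, so the sub-Gaussian tail with the same proxy as in Corollary~\ref{cor:sequential-snr-supp} gives
\[
  \Pr(\text{misorder})\le \exp\!\left(-\frac{(\mu k-2N\eta)^2}{8N\sigma^2}\right).
\]
Summing over the $m(D-m)$ pairs, the total failure probability is at most $\delta$ exactly when $\mu k-2N\eta\ge\sigma\sqrt{8N\log(m(D-m)/\delta)}$.

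The only step needing care is the claim that $\Delta$ enters deterministically. One could worry that $\widehat{\matW}$ depends on $\varepsilon$. This does not matter: the bound $|\sum_i(\Delta_{ij}-\Delta_{i\ell})|\le 2N\eta$ holds pointwise on every realization, so the misordering event is contained in a tail event of $\Xi_{j\ell}$ alone. This is the same event-inclusion reasoning used after Corollary~\ref{cor:calibration-transfer}. The remainder is direct substitution of $\mu k-2N\eta$ for $\mu k$ in the earlier proof.
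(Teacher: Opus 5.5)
Your proposal is correct and follows essentially the same route as the paper's proof. A pointwise bound shows that each planted-versus-unplanted column contrast is shifted by at most $2N\eta$, however $\widehat{\matW}$ depends on the noise, so a misordering forces the sub-Gaussian contrast below $-(\mu k-2N\eta)$, and a union bound over the $m(D-m)$ pairs finishes the argument. Your observation that the contrast is actually $\sigma\sqrt{2N}$-sub-Gaussian only strengthens the tail, so the stated $\exp(-(\mu k-2N\eta)^2/(8N\sigma^2))$ bound follows a fortiori.
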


\begin{proof}
For any feature $j$,
$|\widehat C_j-C_j|
\le \sum_{i=1}^N |\widehat{\matW}_{ij}-\matW_{ij}^\star|
\le N\eta$.
Hence for each pair $(j,\ell)$ with $j\in F^\star$, $\ell\notin F^\star$,
\[
  \widehat \Delta_{j\ell}
  :=\widehat C_j-\widehat C_\ell
  =\Delta_{j\ell}+r_{j\ell},
  \qquad |r_{j\ell}|\le 2N\eta.
\]
Therefore
\[
  \Pr(\widehat \Delta_{j\ell}\le 0)
  \le
  \Pr(\Delta_{j\ell}\le 2N\eta)
  \le
  \exp\!\left(
    -\frac{(\mu k-2N\eta)^2}{8N\sigma^2}
  \right),
\]
using the same sub-Gaussian tail step as
Corollary~\ref{cor:sequential-snr-supp}. A union bound over $m(D-m)$ pairs
gives the claim.
\end{proof}

\paragraph{Sufficient-condition separation.}
Theorem~\ref{thm:joint-noise-superiority-supp} and
Corollary~\ref{cor:sequential-snr-supp} provide a conditional separation. The
joint certificate uses pairwise bicluster structure and incurs a combinatorial
competitor term through \(\mathcal M\), while feature-first sequential scoring
accumulates \(N^{1/2}\) noise in each column marginal. Under deterministic
entrywise perturbation, Proposition~\ref{prop:calibration-perturbation-supp}
requires
\[
  \mu
  \ge
  2\eta
  +2\sigma\sqrt{\frac{\log((\mathcal M-1)/\delta_{\mathrm{joint}})}{\min(k,m)}},
\]
whereas Corollary~\ref{cor:sequential-calibration-sensitivity} requires
\[
  \mu
  \ge
  \frac{
    2N\eta
    +\sigma\sqrt{8N\log(m(D-m)/\delta_{\mathrm{seq}})}
  }{k}.
\]
Thus the joint certificate loses \(2\eta\) under entrywise calibration error,
while the analyzed sum-column feature-first certificate loses \(2N\eta\) in the
column margin, equivalently \(2N\eta/k\) as a threshold on \(\mu\). For fixed
\(k,m\) and increasing \(N\), the feature-first sufficient threshold grows
faster.
The comparison is between sufficient conditions: if \(\mu\) lies between the
two displayed thresholds, the joint condition certifies recovery while the
analyzed feature-first condition does not certify recovery.

These results give finite perturbation bounds for the calibrated stage-2 weight
field. The baseline theorem assumes independent mean-zero
$\sigma$-sub-Gaussian entry noise and deterministic entrywise perturbation
bounded by $\eta$. The Bernstein-tail extension adds the corresponding
linear-log term for centered sub-exponential residuals. These bounds are most
informative for nearly stationary reference windows with controlled
residual tails. Drift, heteroskedasticity, and structured dependence motivate
the empirical calibration evaluations.

\subsection{Calibration-Robustness Experiments}

\paragraph{Controlled perturbation sweep.}
The controlled sweep perturbs the positive synthetic setting
\((N,D,k,m)=(20,10,3,5)\) under the feature-specific Gaussian generator and
noise fractions 40\%, 50\%, and 60\%. Separate empirical evaluations cover
drift and reference-window effects. For each seed, we first compute a clean
calibrated weight field $\matW^\star$, choose the highest-F1 clean sequential
rule among maximum, sum, and median aggregation rules, and then hold that rule
fixed while perturbing every entry according to
\[
\widehat{\matW} = \mathrm{clip}(\matW^\star + \Delta, 0, w_{\max}),
  \qquad \Delta_{ij}\sim\mathrm{Unif}[-\eta,\eta].
\]
Each draw therefore satisfies $\|\widehat{\matW}-\matW^\star\|_\infty\le \eta$ exactly.
The clip step makes the perturbation distribution element-wise asymmetric for
$\matW^\star_{ij}$ near $0$ or $w_{\max}$. Those entries see a one-sided effective
perturbation while interior entries see the full uniform draw. The achieved
infinity norm shown alongside $\eta$ is therefore at most the nominal
$\eta$ ceiling used on the robustness-curve axis. This empirical stress test
evaluates the theorem's entrywise perturbation premise under one planted
generator and one bounded perturbation law.

Panels~(a) and (b) of
Figure~\ref{fig:synthetic_calibration_robustness} show the resulting
seed-mean combined and feature F1 with bootstrap confidence bands. Joint
selection stays nearly flat through moderate perturbations, while the fixed
clean sequential rule degrades much earlier. At 50\% noise and $\eta=1$,
combined F1 is \csname CalibPerturbJointCombinedNoise50EtaOne\endcsname{} for
joint selection and \csname CalibPerturbSeqCombinedNoise50EtaOne\endcsname{}
for the fixed sequential rule. The corresponding joint-versus-sequential
values are \csname CalibPerturbJointFeatureNoise50EtaOne\endcsname{} versus
\csname CalibPerturbSeqFeatureNoise50EtaOne\endcsname{} for feature F1 and
\csname CalibPerturbJointExactNoise50EtaOnePct\endcsname{} versus
\csname CalibPerturbSeqExactNoise50EtaOnePct\endcsname{} for exact pair
recovery.
\needspace{18\baselineskip}

\begin{proposition}[Structured calibration modes under feature-first aggregation]
\label{prop:structured-calibration-modes}
Consider the theorem-aligned sequential baseline with sum-aggregated feature
scores
\[
  C_j := \sum_{i=1}^{N} W_{ij},
\]
and the exact-budget joint objective
\[
  B(S,F) := \sum_{i\in S}\sum_{j\in F} W_{ij}.
\]
\begin{enumerate}
  \item If the calibrated field has additive row and feature bias
  \[
\widehat{\matW}_{ij} = W_{ij} + \xi_i + \zeta_j,
  \]
  then
  \[
    \widehat C_j = C_j + \Xi + N \zeta_j,
    \qquad
    \Xi := \sum_{i=1}^{N} \xi_i,
  \]
  and
  \[
    \widehat B(S,F) = B(S,F) + m\sum_{i\in S} \xi_i + k\sum_{j\in F} \zeta_j.
  \]
  Thus row-only bias cancels from the sequential feature ranking, while
  feature bias is amplified by the pool size $N$ in the sequential margin and
  enters the joint objective with coefficient $k$, the sample budget.

  \item If the calibrated field has rank-one drift
  \[
\widehat{\matW}_{ij} = W_{ij} + u_i v_j,
  \]
  then
  \[
    \widehat C_j = C_j + U v_j,
    \qquad
    U := \sum_{i=1}^{N} u_i,
  \]
  and
  \[
    \widehat B(S,F) = B(S,F)
    + \left(\sum_{i\in S} u_i\right)\left(\sum_{j\in F} v_j\right).
  \]
  Hence centered shared drift matters to the sequential feature ranking only
  through its nonzero row-sum component $U$. A mean-shifted row factor behaves
  like feature bias after aggregation.
\end{enumerate}
\end{proposition}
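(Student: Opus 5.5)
The proof is a direct substitution into the two linear functionals \(C_j\) and \(B(S,F)\), using only linearity of summation together with the exact budgets \(|S|=k\) and \(|F|=m\). No probabilistic input is needed. Both parts are identities that hold deterministically for any field \(W\), so the planted model is irrelevant here.

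For part~1, I will substitute \(\widehat{\matW}_{ij}=W_{ij}+\xi_i+\zeta_j\) into \(\widehat C_j=\sum_{i=1}^N\widehat{\matW}_{ij}\). Splitting the sum gives \(C_j+\sum_i\xi_i+\sum_i\zeta_j=C_j+\Xi+N\zeta_j\). For the joint objective, I will write
\[
\widehat B(S,F)=B(S,F)+\sum_{i\in S}\sum_{j\in F}\xi_i+\sum_{i\in S}\sum_{j\in F}\zeta_j .
\]
The inner sum over \(j\in F\) of the \(j\)-independent term \(\xi_i\) contributes the factor \(|F|=m\). Symmetrically, the sum over \(i\in S\) of \(\zeta_j\) contributes \(|S|=k\). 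This yields \(m\sum_{i\in S}\xi_i+k\sum_{j\in F}\zeta_j\).

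The interpretive claims then follow from pairwise differences. First, \(\widehat C_j-\widehat C_\ell=C_j-C_\ell+N(\zeta_j-\zeta_\ell)\). Here \(\Xi\) is common to every column, so row-only bias cannot change the top-\(m\) ranking, while feature bias enters with weight \(N\). Second, in the joint objective the feature-bias contribution to any comparison of two feasible pairs is \(k\bigl(\sum_{j\in F}\zeta_j-\sum_{j\in F'}\zeta_j\bigr)\), which has weight \(k\) rather than \(N\).

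For part~2, the same substitution with \(u_iv_j\) gives \(\widehat C_j=C_j+v_j\sum_i u_i=C_j+Uv_j\). For the joint objective, the double sum factorizes as \(\bigl(\sum_{i\in S}u_i\bigr)\bigl(\sum_{j\in F}v_j\bigr)\), exactly as in the rank-one computation in the proof of Proposition~\ref{prop:rank-one}. Since the column contrasts are \(C_j-C_\ell+U(v_j-v_\ell)\), the drift affects the sequential ranking only through \(U\), and vanishes from it when \(U=0\).

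The remaining remark concerns a mean-shifted row factor. I will write \(u_i=\bar u+\tilde u_i\) with \(\sum_i\tilde u_i=0\). Then \(U=N\bar u\), so the drift acts on the sequential scores as a feature bias \(\zeta_j=\bar u v_j\), again amplified by \(N\).

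No step is a real obstacle. The only point needing care is to state precisely what ``cancels'' and ``amplified'' mean. I will phrase both as statements about ranking differences between columns (or between feasible pairs) rather than about raw scores, so that the common offset \(\Xi\) visibly drops out.
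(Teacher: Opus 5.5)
Your proposal is correct and follows the paper's own argument, which is just to sum the perturbation term over $i$ for $C_j$ and over $S\times F$ for $B(S,F)$, with the exact budgets $|S|=k$ and $|F|=m$ producing the coefficients. Your decomposition $u_i=\bar u+\tilde u_i$ with $U=N\bar u$ spells out the mean-shifted remark, which the paper leaves implicit.
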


\begin{proof}
Both claims follow by summing the perturbation term over $i$ for the sequential
feature score and over $S\times F$ for the joint objective.
\end{proof}

\paragraph{Structured perturbation decomposition.}
Proposition~\ref{prop:structured-calibration-modes} is an unclipped algebraic
statement, so clipping can in principle break the exact cancellation of row
bias. We test the bounded-weight setting by rerunning the
\((N,D,k,m)=(20,10,3,5)\) setting at 50\% noise. The analysis isolates five
structured perturbation modes and holds the theorem-aligned feature-first
sum-aggregation baseline fixed.
\[
  \Delta_{ij}\in
  \left\{
    \text{iid entrywise},\;
    a_i,\;
    b_j,\;
    u_i v_j \text{ with } \sum_i u_i \approx 0,\;
    u_i v_j \text{ with } \sum_i u_i > 0
  \right\}.
\]
Panels~(c) and (d) of
Figure~\ref{fig:synthetic_calibration_robustness} show the resulting
feature-level response. At $\eta=1$, sequential feature F1 remains
\csname StructuredCalibRowBiasSeqFeatureEtaOne\endcsname{} under row bias and
\csname StructuredCalibCenteredRankOneSeqFeatureEtaOne\endcsname{} under
centered rank-one drift, whereas feature bias and mean-shifted rank-one drift
reduce it to \csname StructuredCalibFeatureBiasSeqFeatureEtaOne\endcsname{} and
\csname StructuredCalibMeanShiftedRankOneSeqFeatureEtaOne\endcsname{},
respectively. Joint feature F1 under feature bias stays at
\csname StructuredCalibFeatureBiasJointFeatureEtaOne\endcsname{}. The row-bias
and feature-bias clipping rates are nearly identical,
\csname StructuredCalibRowBiasClipEtaOnePct\endcsname{} versus
\csname StructuredCalibFeatureBiasClipEtaOnePct\endcsname{}, which rules out a
clipping-frequency explanation. The synthetic message is therefore sharper than
the generic $2N\eta$ worst-case bound. The harmful mode is feature-aligned bias,
or shared drift with a nonzero row-mean component, because that is the structure
amplified by feature-first aggregation.
\begin{figure}[!htbp]
  \centering
  \includegraphics[width=\linewidth]{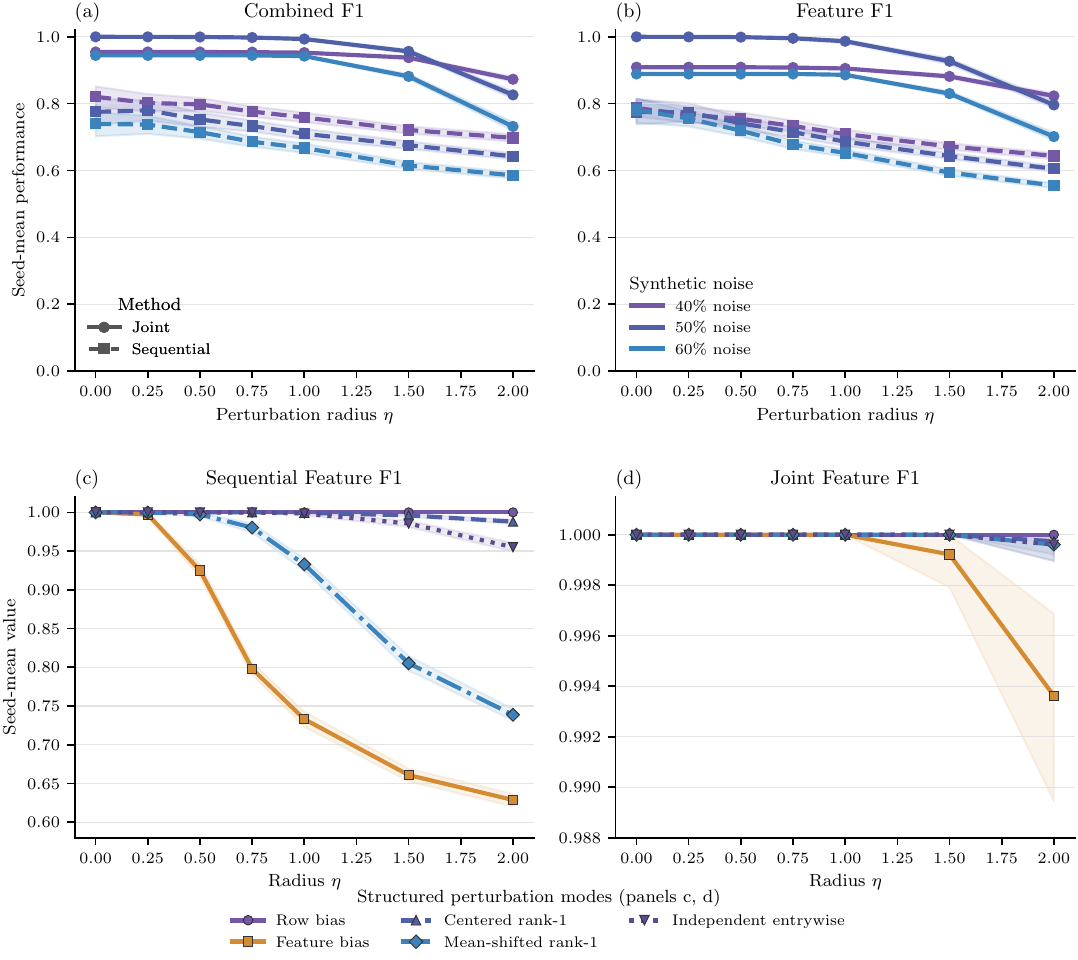}
  \caption{Feature-aligned perturbations degrade feature-first recovery much
  more than row-aligned perturbations, while joint feature F1 stays near one
  across the five structured modes in panel (d). Panels (a, b) sweep the
  perturbation radius at three noise levels. The sequential aggregation rule is
  selected on the clean field and held fixed while bounded entrywise perturbations are injected with
  $\Delta_{ij}\sim\mathrm{Unif}[-\eta,\eta]$. Panels (c, d) hold the 50\% noise
  setting fixed and isolate five structured modes with the theorem-aligned
  sum-aggregation sequential baseline.
  Means and 95\% percentile-bootstrap bands use 64 seeds and $B=1000$. Panel
  (d) expands the vertical scale.}
  \label{fig:synthetic_calibration_robustness}
\end{figure}

\paragraph{Credit Card Calibration Analyses.}
On labeled benchmark data the clean field $\matW^\star$ is unobserved, so we test operational
analogues of calibration error in the balanced Credit Card time-split analyses.
Across these analyses, the test slices, feature subsets, sequential aggregation
rule, and practical joint solver are fixed. Only the reference data or
calibration map used to estimate the stage-2 weight field changes.

First, we vary only the reference window, comparing the full window
against random 50\%, 25\%, 10\%, and 5\% subwindows. In the reference-window
panel of Figure~\ref{fig:real_creditcard_calibration_data_quality}, joint
performance remains at
or above the fixed sequential baseline throughout the sweep. At displayed
precision, the mean joint-minus-sequential anomaly-F1 gap is nonnegative at
every reference-window size. At the smallest reference fraction
(\csname RealCalibCreditLowestRefFracPct\endcsname{}) the mean gap is
\csname RealCalibCreditLowestRefMeanGapPct\endcsname{} percentage points. At
$(N,D,k,m)=(50,12,5,4)$ with a 10\% reference window, anomaly F1 is
\csname RealCalibCreditJointFiftyRefTen\endcsname{} for the joint selector
versus \csname RealCalibCreditSeqFiftyRefTen\endcsname{} for the sequential
baseline.

Second, we isolate temporal drift by refitting the weight field on fixed-size
contiguous pre-test blocks whose end times move farther back from the test window.
The temporal-drift panel of
Figure~\ref{fig:real_creditcard_calibration_data_quality} shows a positive
joint anomaly-F1 advantage after averaging over the displayed reference blocks
within each setting. For
$(N,D,k,m)=(50,12,5,4)$, the oldest block ends
\csname RealCalibDriftCreditLagOldestDays\endcsname{} days before the test
window. Anomaly F1 is \csname RealCalibDriftCreditJointFiftyOldest\endcsname{}
for the joint selector versus
\csname RealCalibDriftCreditSeqFiftyOldest\endcsname{} for the sequential
baseline, giving a joint-minus-sequential gap of
\csname RealCalibDriftCreditFiftyOldestGapPct\endcsname{} percentage points.
\begin{figure}[!htbp]
  \centering
  \includegraphics[width=\linewidth]{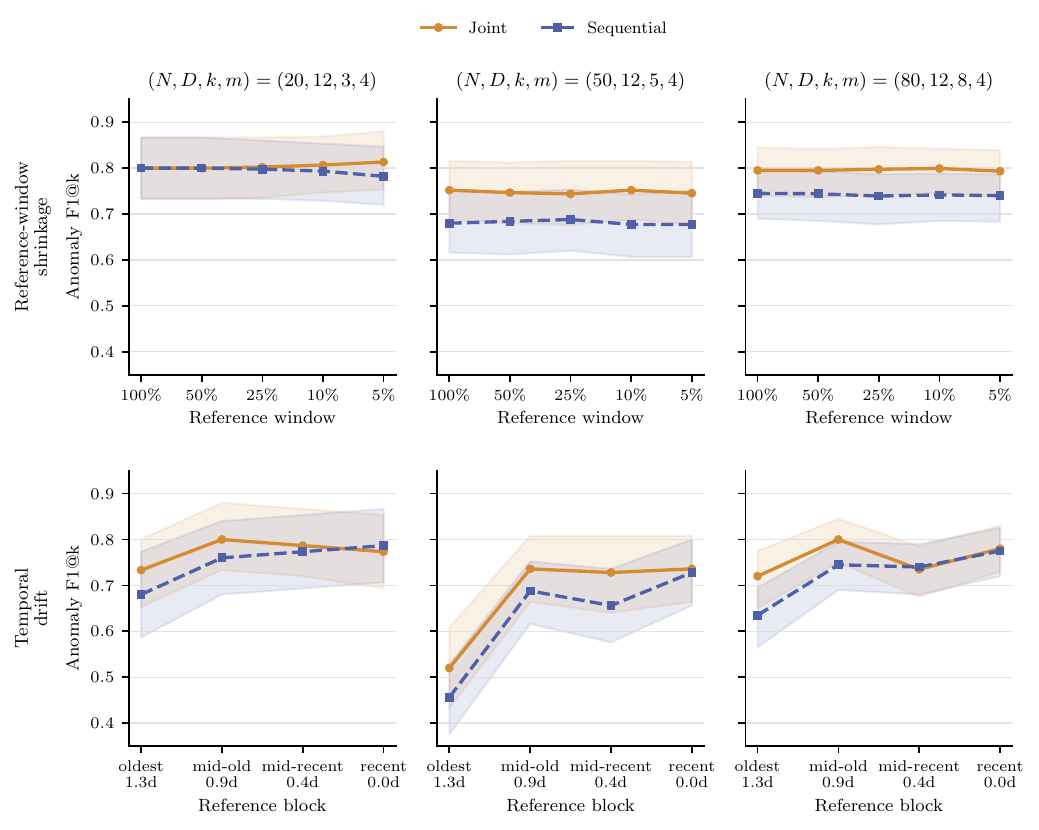}
  \caption{Joint anomaly F1 remains at or above the fixed feature-first baseline
  throughout the reference-window sweep and has a positive average advantage
  across the temporal-drift sweep. The top row varies the reference window over
  the full window and 50\%, 25\%, 10\%, and 5\% pre-test subwindows. The bottom
  row holds its size fixed and varies temporal drift through rolling pre-test
  blocks, with ticks giving end lag in days. Test slices, calibration map, and
  sequential rule are fixed. Means and method-wise 95\%
  percentile-bootstrap bands use 25 seeded slices and $B=1000$. The bands do
  not represent intervals for the paired method difference. Reference-window
  points average five replicates per seed.}
  \label{fig:real_creditcard_calibration_data_quality}
\end{figure}

\needspace{5\baselineskip}
The two degradation analyses induce different perturbation structures. For each
empirical weight-field difference $\Delta$, define
\[
  \bar\Delta=\frac{1}{ND}\sum_{i,j}\Delta_{ij},\qquad
  \delta_i^{\rm row}=\frac{1}{D}\sum_j\Delta_{ij}-\bar\Delta,\qquad
  \delta_j^{\rm feat}=\frac{1}{N}\sum_i\Delta_{ij}-\bar\Delta,
\]
and \(e_{ij}=\Delta_{ij}-\bar\Delta-\delta_i^{\rm row}-\delta_j^{\rm feat}\). The squared-Frobenius
shares are the squared norms of the global-mean, additive-row,
additive-feature, and residual matrices divided by \(\|\Delta\|_F^2\).
Figure~\ref{fig:real_creditcard_calibration_mechanism} shows the mean
decomposition and corresponding sup-norms by candidate-pool size and scenario.
\begin{figure}[!htbp]
  \centering
  \includegraphics[width=\linewidth]{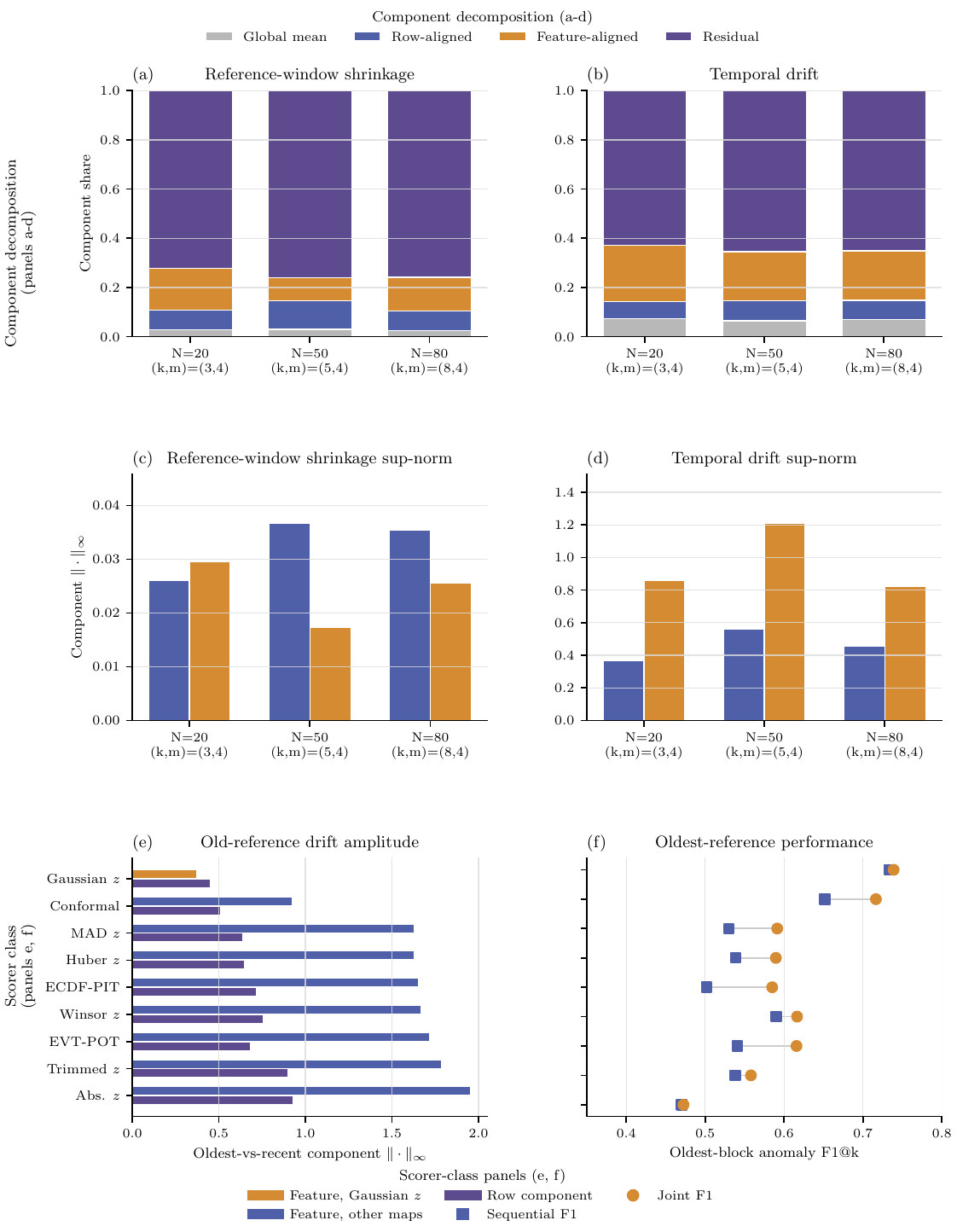}
  \caption{Reference-window shrinkage produces predominantly diffuse residual
  error, whereas temporal lag introduces stronger feature-aligned drift. Among
  the displayed calibration maps, Gaussian $z$ has the lowest feature-aligned
  drift and the highest oldest-block joint F1. Panels (a, b) decompose empirical
  weight-field differences into global-mean, additive-row, additive-feature, and residual components.
  Panels (c, d) report the corresponding row and
  feature sup-norms. Panels (e, f) compare nine calibration maps by pooled
  aligned-perturbation sup-norm and oldest-block anomaly F1, using the same
  scorer order.}
  \label{fig:real_creditcard_calibration_mechanism}
\end{figure}

Reference-window shrinkage is dominated by diffuse residual perturbations.
Pooled over the settings, the residual share is
\csname RealCalibStructRefResidualSharePooled\endcsname{}, versus
\csname RealCalibStructRefFeatureSharePooled\endcsname{} for the feature
component and \csname RealCalibStructRefRowSharePooled\endcsname{} for the row
component. The corresponding feature and row sup-norms are small,
\csname RealCalibStructRefFeatureInfPooled\endcsname{} and
\csname RealCalibStructRefRowInfPooled\endcsname{}.

Temporal reference lag has a more feature-aligned structure. The residual
component remains the largest single share at
\csname RealCalibStructDriftResidualSharePooled\endcsname{}. Among structured
components, feature-aligned drift dominates sample-aligned drift. Pooled over the
drift settings, the feature share is
\csname RealCalibStructDriftFeatureSharePooled\endcsname{} versus
\csname RealCalibStructDriftRowSharePooled\endcsname{} for the row component,
and the feature sup-norm is
\csname RealCalibStructDriftFeatureInfPooled\endcsname{} versus
\csname RealCalibStructDriftRowInfPooled\endcsname{}. Across the drift
settings, the feature component exceeds the row component in
\csname RealCalibStructDriftFeatureShareGtRowPct\endcsname{} of share
comparisons and \csname RealCalibStructDriftFeatureInfGtRowPct\endcsname{} of
sup-norm comparisons, with a pooled feature/row sup-norm ratio of
\csname RealCalibStructDriftFeatureRowInfRatio\endcsname{}. Thus the
feature-aligned mode in the structured synthetic result is a closer operational
model of drift from older reference windows than of finite-sample subsampling noise.

Finally, we compare nine stage-2 calibration maps on the same
oldest-vs-recent temporal-drift design. Panels~(e) and (f) of
Figure~\ref{fig:real_creditcard_calibration_mechanism}
show the resulting perturbation structure and oldest-block anomaly F1. The
Gaussian-tail $z$ calibration map has the smallest feature-aligned sup-norm at
all \csname RealCalibScorerZMinFeatureInfRegimeCount\endcsname{}
candidate-pool sizes and the highest oldest-block joint anomaly F1 at all
\csname RealCalibScorerZBestJointRegimeCount\endcsname{} sizes.

Pooled across candidate-pool sizes, the Gaussian-tail $z$ map's feature-aligned
sup-norm is
\csname RealCalibScorerZFeatureInf\endcsname{}, compared with
\csname RealCalibScorerMainProtocolFeatureInf\endcsname{} for the protocol
calibration map assigned to each candidate-pool size (conformal for $N=20$ and
$N=80$, and empirical-CDF probability-integral-transform (ECDF-PIT) for
$N=50$). The feature-aligned drift decreases by
\csname RealCalibScorerZVsMainFeatureInfReductionPct\endcsname.
Its pooled oldest-block joint anomaly F1 is
\csname RealCalibScorerZJointStaleFOne\endcsname{}, compared with
\csname RealCalibScorerMainProtocolJointStaleFOne\endcsname{} for the same
protocol maps.

Calibration design can therefore reduce feature-aligned drift from older
reference windows, the mode that feature-first aggregation amplifies. ECDF and conformal weights are
based on fewer distributional assumptions, whereas the Gaussian-tail $z$ map is a parametric
approximation after robust prestandardization. Calibration-map choice therefore
belongs to objective construction and must be validated before selecting
samples and features.
\FloatBarrier

\section{Experimental Protocols and Reproducibility}
\label{supp:repro}

\paragraph{Data availability.}
Synthetic experiments are seeded entirely by fixed integer seeds. No
external data is required. The Credit Card fraud dataset used
in our public-benchmark analyses is publicly available from
Kaggle~\cite{dalpozzolo2015calibrating,kaggle_creditcardfraud}. The IBM IT-AML transaction dataset is the
public synthetic anti-money-laundering (AML) benchmark introduced by Altman et
al.~\cite{altman2023realistic}. Our experiments use the public Kaggle-hosted
release of that benchmark~\cite{kaggle_ibm_itaml}.

\paragraph{Randomization budgets.}
The synthetic noise sweep at
$(N,D,k,m)=(\JointSeqNSamples{},\JointSeqNFeatures{},\JointSeqKAnomalies{},\JointSeqMFeatures{})$
uses $\JointSeqNSeeds{}$ seeds. Other synthetic sweeps use 50 seeds per
parameter combination unless stated otherwise. The synthetic calibration-perturbation sweep uses
\CalibPerturbNSeeds{} seeds, the structured calibration-mode sweep uses
\StructuredCalibModesNSeeds{} seeds, and the Credit Card calibration analyses
use \RealCalibCreditNSeeds{} seeds. Bootstrap confidence intervals (CIs) use
$B{=}1000$ resamples.
Captions or rows state slice, seed, or repeat counts where they affect the
comparison.

\paragraph{Synthetic QAOA circuit instances.}
\label{supp:gaussian_generator}
The formulation-level synthetic F1 experiments use the isolation-style weights
\(W_{ij}=|Z_{ij}|\). The Heron seed
panel, Heron execution controls, and fixed-angle hardware controls use a
separate seeded Gaussian QUBO generator. For a requested parameter setting
\((N,D,k,m)\), it draws
\(\min(3k,\lfloor N/3\rfloor)\) anomalous samples and \(N-\min(3k,\lfloor N/3\rfloor)\)
nominal samples. Nominal entries are sampled from \(\mathcal N(0,0.5^2)\). Anomalous
entries are sampled from \(\mathcal N(s_{ij},0.3^2)\), where the signs
\(s_{ij}\in\{-3,3\}\) are independent.

After sample shuffling, the generator estimates the pool mean and covariance,
assigns normalized sample scores
\((x_i-\bar x)^\top \widehat\Sigma^{-1}(x_i-\bar x)\), normalized feature
scores \(\mathrm{Var}_i(X_{ij})\), and normalized cross weights proportional to
\((X_{ij}-\bar X_j)^2(\widehat\Sigma^{-1})_{jj}\). The QUBO uses negative
sample and feature linear terms and symmetric negative sample--feature
couplings scaled by the generator coupling parameter. This construction defines
the energy instances used by the QAOA hardware protocols. Labeled
F1 comparisons use separate synthetic and labeled-benchmark protocols.

\subsection{QAOA Angle Protocols}
\label{supp:qaoa_angle_protocols}

The QAOA experiments use different angle sources depending on the
claim being tested. The common rule is that hardware-facing thresholds, candidate angle
sets, and optimizer budgets are fixed before the hardware execution they
evaluate. Exact-optimum probability is used only as enumerable-instance
validation.

\paragraph{Noiseless depth-frontier optimization.}
Noiseless depth-resource analyses use tensor-Dicke statevector initialization
and limited-memory Broyden--Fletcher--Goldfarb--Shanno with bounds (L-BFGS-B) optimization of the
feasible-sector expected energy. In the
\(p=1,\ldots,8\) frontier, each active angle is bounded to \([-1,1]\).
Each cell uses five deterministic starts plus three seeded random starts in that
interval, and the selected angles define the optimized state.

\paragraph{BK-threshold simulation study.}
The \(p=3\) BK-threshold simulation study uses exact-feasible basis
initialization. It evaluates each untrained angle initialization, then applies
random-local and SciPy Powell searches to the BK-hit objective. Those searches
use active-angle bounds or clipping in \([0,\pi]\), Powell tolerances
\(\mathtt{xtol}=\mathtt{ftol}=10^{-3}\), and a 96-call budget per start.
Selection maximizes BK hit rate.

\paragraph{Fixed-angle CG-XY-QAOA hardware executions.}
The planted-panel and Credit Card BK executions use exact-feasible basis
initialization and fixed angles selected before hardware execution. The planted
panel executes the simulation-selected fully grouped \(p=3\) schedule. The
Credit Card comparison executes tied, bilinear-CG, and fully grouped schedules
at \(p=3\) and \(p=5\) under the same predeclared threshold. The 32--64-qubit
fixed-angle runs use the bilinear-CG circuit form on its tied-phase subspace and
therefore do not test a grouping benefit.

\paragraph{Tensor-network warm-start angle selection.}
For the 36-, 52-, and 64-qubit warm-start hardware executions and matched
random-feasible controls, angle vectors are selected before hardware execution
by tensor-network shot simulations implemented with
cuQuantum~\cite{bayraktar2023cuquantum}. For each width and depth, the study
constructs the same bilinear-CG circuit submitted to hardware, including the
active patch, hardware-aligned sparse phase separator, fused Block~XY transport,
and exact-feasible basis-state preparation class used by the corresponding
hardware execution. The tensor-network simulation samples the final computational-basis
distribution of this circuit without enumerating the full
\(2^{N+D}\)-dimensional statevector.
The simulations contract the circuit tensor network with cuTensorNet without
bond-dimension truncation or matrix-product-state compression.
Numerical contractions use single-precision complex arithmetic.

Candidate angle vectors are generated separately for two classical feasible
starts at each width. For each start, the inherited schedule and five independent
Gaussian perturbations with standard deviation 0.2 per coordinate are evaluated
with 384 simulated shots per candidate. Each simulated shot is decoded in logical register
order, filtered to the exact \((k,m)\) sector, and rescored on the original dense
QUBO. Within each start, candidate selection minimizes \(\mathrm{CVaR}_5(E)\),
computed as the mean dense energy of the lowest-energy \(5\%\) of feasible decoded
samples. The best candidate from each start is reevaluated with 4096 simulated
shots. The final start-angle pair is selected by confirmed \(\mathrm{CVaR}_5(E)\),
followed by mean energy, best energy, and start rank. The selected angle vector is
reused without further optimization for the classical warm-start hardware
execution and the matched random-feasible control, so the two executions differ
only in the prepared basis state.

\paragraph{Penalty-X angle search.}
The penalty-X comparison uses matched candidate-random angle search with
\PenaltyXCandidateEvals{} candidates, \PenaltyXShotsPerEval{} shots per
objective estimate, and \PenaltyXFinalShots{} final shots.

\subsection{Metric and Table Conventions}

The QAOA tables separate optimized simulations, fixed-angle hardware
executions, and transpilation resource projections. Captions state the relevant
angle source, implementation path, shot budget, decoder, and threshold source
for each table when these fields affect the comparison. Resource-projection
rows report submitted circuit resources for fixed backend and implementation
conditions.

\paragraph{Table conventions.}
In hardware tables, Feas. denotes decoded exact-budget mass unless a
column explicitly names physical-order parsing or native strict feasibility.
For repeated hardware conditions, exact-budget entries give the repetition mean
followed by the observed minimum and maximum in square brackets. Single-run
conditions have no bracketed range. Elsewhere, square brackets contain interval
endpoints, with the caption identifying a confidence interval, interquartile
range, or observed range. Compact entries of the form \(x\pm h\) report a mean
and the larger half-width of the stated confidence interval.
The compact resource column \(d/G_{2Q}\) reports submitted circuit depth and
non-\texttt{SWAP} two-qubit gates. Angle-scale tuples are
ordered as sample field, feature field, bilinear coupling, and mixer. Delta
sign conventions are stated in captions. For energy gaps, lower values are better.

For F1 and true positives at \(k\) (TP@k) gaps, positive values favor the named
joint method unless the caption states otherwise. \(\mathrm{CVaR}_5(E)\), a
conditional-value-at-risk-style tail mean, is the mean energy of the lowest-energy
5\% of decoded-feasible shots. Classical method labels use Greedy+LS for greedy
local search, Coord.-ascent for coordinate ascent, Tabu for tabu search, and
Swap-SA for swap-based simulated annealing.

\section{Classical and Penalty-Based Baselines}
\label{supp:swap_sa}

The stage-2 classical baselines and penalty-X quantum control use the
same constrained bipartite objective as the XY-QAOA variants. Swap-SA denotes a
feasible-subspace simulated annealing method with full-neighborhood
budget-preserving swaps~\cite{metropolis1953equation,kirkpatrick1983optimization}.
Ring-local Kawasaki Metropolis and the Block~XY mixer use the same ring-local
graph~\cite{kawasaki1972kinetics,spitzer1970interaction}, with nearest-neighbor
token exchanges governed by fixed-temperature Metropolis dynamics in the former
and coherent \qaoa{} evolution in the latter.
The formulation-level comparisons evaluate the sequential baseline against each
prespecified joint heuristic directly. Additional controls cover
protocol stress, trigger menus, and penalty-X execution.

\paragraph{Matched-compute baseline ranking.}
Table~\ref{tab:solver_matched_compute} gives the ranking of stage-2 classical
baselines under matched wall-clock caps on the same synthetic instances.
\begin{table}[!htbp]
\centering
\caption[Matched wall-clock ranking on the synthetic stage-2 setting with 50 percent noise]{Matched wall-clock ranking on the synthetic stage-2 setting with 50\% noise. Coord.-ascent and Swap-SA reach the exact optimum on every displayed instance at both wall-clock caps. Each method is run on the same 50 seeded instances under the same target wall-clock cap. The sequential entry is a deterministic feature-first baseline. \texttt{max}, \texttt{sum}, and \texttt{median} marginal aggregations are each evaluated once, and the best aggregation by joint objective is retained. Combined F1 and gap to the exact joint optimum are means over the seeded instances. Runtime is the median. Exact-opt. hit is the share of instances with zero gap.}
\label{tab:solver_matched_compute}
\small
\setlength{\tabcolsep}{4pt}
\renewcommand{\arraystretch}{1.08}
\TableRowsSetup
\begin{tabular}{@{}r l r r r r@{}}
\TableHideRows
\toprule
\shortstack{\textbf{Cap}\\\textbf{(ms)}} & \textbf{Method} & \shortstack{\textbf{Combined}\\\textbf{F1}} & \shortstack{\textbf{Exact-opt.}\\\textbf{hit}} & \shortstack{\textbf{Mean gap}\\\textbf{to opt.}} & \shortstack{\textbf{Median}\\\textbf{runtime (ms)}} \\
\midrule
\TableBodyRows
10 & Sequential & 0.763 & 26\% & 7.41 & 0.07 \\
10 & Greedy+LS & 0.485 & 46\% & 8.45 & 0.09 \\
10 & Coord.-ascent & \textbf{1.000} & 100\% & 0.00 & 10.02 \\
10 & Tabu & 0.837 & 82\% & 2.73 & 10.17 \\
10 & Ring-local Kawasaki & 0.830 & 2\% & 6.45 & 10.04 \\
10 & Swap-SA & \textbf{1.000} & 100\% & 0.00 & 10.27 \\
\addlinespace
100 & Sequential & 0.763 & 26\% & 7.41 & 0.07 \\
100 & Greedy+LS & 0.485 & 46\% & 8.45 & 0.09 \\
100 & Coord.-ascent & \textbf{1.000} & 100\% & 0.00 & 100.02 \\
100 & Tabu & 0.837 & 82\% & 2.73 & 100.15 \\
100 & Ring-local Kawasaki & 0.900 & 12\% & 3.61 & 100.03 \\
100 & Swap-SA & \textbf{1.000} & 100\% & 0.00 & 100.28 \\
\bottomrule
\end{tabular}
\end{table}

\paragraph{Penalty-X baseline evaluation protocol.}
\label{supp:penalty_x_baseline}
A penalty formulation replaces exact-budget state preparation and a
feasibility-preserving mixer with a uniform $|+\rangle^{\otimes n}$ start, a
transverse-field $X$ mixer, and squared cardinality penalties in the cost
Hamiltonian. Under the convention $E(x)=x^\top Qx$, let
\[
  c_Q=\max\!\left\{10^{-6},\max_i|Q_{ii}|,
  \max_{i<j}|Q_{ij}+Q_{ji}|\right\}
\]
be the largest absolute binary-polynomial coefficient, including the numerical
floor used by the implementation. Because the penalty strength is an additional
modeling choice, we sweep $\lambda/c_Q$ and report the best setting under the
same evaluation budget. The statevector comparison uses small instances where the original
feasible optimum and the penalty-X distribution can both be evaluated exactly.

Both methods are scored on the same original constrained objective $x^\top Qx$
after restricting to exact-budget states. The penalty-X method is also evaluated
with exact-budget filtering, the standard way to extract feasible decisions from
a sampler over all bitstrings. The resource comparison
uses cost-and-mixer layers after Qiskit transpilation and excludes state
preparation, so Dicke preparation and the penalty-X
$|+\rangle^{\otimes n}$ start can be interpreted separately.
The resulting comparison is shown in
Table~\ref{tab:penalty_x_vs_strict_feasible_qaoa}.
\begin{table}[!htbp]
\centering
\footnotesize
\setlength{\tabcolsep}{4pt}
\renewcommand{\arraystretch}{1.08}
\caption[Penalty-X baseline against the hard-constraint XY-QAOA setting]{Penalty-X baseline against the hard-constraint XY-QAOA setting. Strict-feasible XY-QAOA remains in the feasible sector natively, whereas penalty-X has low native feasibility and higher two-qubit count. Instances are $(N,D,k,m)$. Both methods use the same matched candidate-random angle-search budget and are scored on the original constrained objective after exact-budget filtering. The penalty scale is $c_Q=\max\{10^{-6},\max_i|Q_{ii}|,\max_{i<j}|Q_{ij}+Q_{ji}|\}$ under $E(x)=x^\top Qx$. In all displayed instances, both settings reach zero best-gap after exact-budget filtering, so the table reports native feasibility, exact-optimum probability, selected penalty multiplier, and two-qubit resource overhead.}
\label{tab:penalty_x_vs_strict_feasible_qaoa}
\TableRowsSetup
\begin{tabular}{@{}ccrrrrr@{}}
\TableHideRows
\toprule
\textbf{$(N,D,k,m)$} & \textbf{$p$} & \shortstack{\textbf{Strict}\\\textbf{feas.}} & \shortstack{\textbf{Penalty-X}\\\textbf{feas.}} & \shortstack{\textbf{Opt. prob.}\\\textbf{strict / pen.}} & \shortstack{\textbf{Penalty}\\\textbf{$\lambda/c_Q$}} & \shortstack{\textbf{Penalty-X / strict}\\\textbf{2Q gates}} \\
\midrule
\TableBodyRows
$(4,4,2,2)$ & 1 & 100.0\% & 16.3\% & 4.17\% / 1.66\% & 0.25 & 1.30$\times$ \\
$(4,4,2,2)$ & 2 & 100.0\% & 9.7\% & 6.07\% / 2.51\% & 0.25 & 1.23$\times$ \\
$(5,5,2,2)$ & 1 & 100.0\% & 12.7\% & 1.18\% / 0.17\% & 2 & 1.39$\times$ \\
$(5,5,2,2)$ & 2 & 100.0\% & 16.4\% & 0.67\% / 0.76\% & 4 & 1.28$\times$ \\
$(6,6,3,3)$ & 1 & 100.0\% & 16.5\% & 0.25\% / 0.06\% & 8 & 1.37$\times$ \\
$(6,6,3,3)$ & 2 & 100.0\% & 14.5\% & 0.21\% / 0.08\% & 2 & 1.31$\times$ \\
\bottomrule
\end{tabular}
\end{table}

\FloatBarrier

\paragraph{Swap-SA procedure.}
Swap-SA searches the exact $(k,m)$ sector using sample, feature, and optional
coupled swaps. Biased proposal modes guide the search, and acceptance uses a
Metropolis-style temperature rule without a Hastings proposal-ratio correction.
Reported values are best-found objectives from the search trajectory. Restarts
and optional best-response refinement are part of the method.
Algorithm~\ref{alg:swap-sa} states the full procedure.

On the Credit Card time-split stage-2 slices, annealed and fixed-temperature
Swap-SA match at the displayed precision for all three candidate-pool sizes.

\begin{algorithm}[!htbp]
\caption{Swap-SA feasible-subspace simulated annealing}
\label{alg:swap-sa}
\begin{algorithmic}[1]
  \Require Weights $\matW\in\mathbb{R}^{N\times D}$, marginals $a\in\mathbb{R}^{N}$, $b\in\mathbb{R}^{D}$, budgets $(k,m)$
  \Require Restarts $R$, steps per restart $L$, temperature schedule
  $t\mapsto \mathcal T(t)$ (constant or annealed)
  \Require Move weights for sample/feature/coupled swaps, optional best-response period $P$
  \State $(S^\star,F^\star)\gets(\emptyset,\emptyset)$, $\mathrm{obj}^\star\gets-\infty$
  \For{$r=1$ to $R$}
    \State Initialize feasible $(S,F)$ (greedy or random) and compute $\mathrm{obj}(S,F)$
    \For{$t=1$ to $L$}
      \State Set temperature $T_t\gets \mathcal T(t)$
      \State Choose move type (sample/feature/coupled) by move weights
      \State Propose feasible swap $(S',F')$ (optionally biased toward high-score indices)
      \State $\Delta\gets \mathrm{obj}(S',F')-\mathrm{obj}(S,F)$
      \State Accept $(S,F)\gets(S',F')$ if $\Delta\ge 0$ else w.p.\ $\exp(\Delta/T_t)$
      \If{$P>0$ and $t\bmod P=0$}
        \State Best-response refine $(S,F)$ by alternately updating $S$ given $F$ and $F$ given $S$
      \EndIf
      \If{$\mathrm{obj}(S,F)>\mathrm{obj}^\star$} set $(S^\star,F^\star)\gets(S,F)$ and $\mathrm{obj}^\star\gets \mathrm{obj}(S,F)$ \EndIf
    \EndFor
  \EndFor
  \State \Return best feasible $(S^\star,F^\star)$
\end{algorithmic}
\end{algorithm}
\FloatBarrier

\paragraph{Ring-local Metropolis reference.}
\label{supp:qi-metropolis}
The ring-local Kawasaki Metropolis reference is the neighborhood-matched
classical reference on the same feasible graph used by the Block~XY mixer. It performs a
fixed-temperature Metropolis walk with best-energy tracking and
nearest-neighbor token exchanges. CG-XY-QAOA uses the same register-local
adjacency structure inside a coherent alternating-operator circuit with cost
phases and variational angle optimization. The comparison isolates the local
move graph. The full-neighborhood classical comparisons use Swap-SA, greedy
local search, coordinate ascent, and tabu search.
\FloatBarrier

\section{Synthetic Sensitivity and Mechanism Analyses}
\label{supp:synthetic_sensitivity}

Unless stated otherwise, all reported confidence intervals use the percentile
bootstrap with $B{=}1000$ resamples and a seeded
random-number generator (RNG). Means
are sample means over the slice or seed budget stated for the corresponding
comparison. We use
the percentile method because the F1@k and combined-F1
statistics are bounded and often discrete at the displayed slice counts, making
jackknife acceleration estimates unstable with the smallest slice counts.

These intervals are conditional on the selected protocol, calibration map,
solver set, split design, and hardware/backend choice. Protocol-selection
uncertainty is outside those
intervals. The confidence intervals are unadjusted across
the synthetic sensitivity grid. Each grid point is a prespecified comparison,
and we record whether its confidence interval for $\Delta\mathrm{F1}$ excludes
zero. For the synthetic results in
Table~\ref{tab:joint_vs_sequential_ci}, the nonzero noise levels
($20$--$60\%$) have intervals strictly above zero, while the zero-noise setting is
the null calibration case. When explicit family-wise tests are used in
sensitivity analyses, Holm--Bonferroni adjustment is applied across the named
comparison set~\cite{holm1979simple}.

\paragraph{Full synthetic comparison.}
The full synthetic comparison includes the feature-first sequential baseline,
all prespecified joint heuristics, the ring-local Kawasaki reference, and the
exact optimum available at this size.
\begin{table}[!htbp]
\centering
\caption[Synthetic feature-specific anomaly benchmark]{Synthetic feature-specific anomaly benchmark. Swap-SA and coordinate ascent attain the exact-optimum mean at every displayed noise level and outperform feature-first sequential whenever $\nu>0$. Columns give noise fraction $\nu$. Entries are mean combined F1, $\tfrac{1}{2}(\mathrm{F1@k}_{\mathrm{samples}}+\mathrm{F1@m}_{\mathrm{features}})$, over the seeded instances. Feature-first sequential reports the best of max, sum, and median aggregation by planted combined F1. Joint methods optimize the same exact-budget objective. The exact optimum is computable at this size. Bold marks the best reported member(s) of the prespecified full-neighborhood joint-heuristic set per column, with uninformative multiway ties left unbolded. Seeds count is n=200 per noise level.}
\label{tab:joint_vs_sequential_practical}
\setlength{\tabcolsep}{7pt}
\renewcommand{\arraystretch}{1.22}
\small
\TableRowsSetup
\begin{tabular}{@{}l r r r r r@{}}
\TableHideRows
\toprule
 & \multicolumn{5}{c}{Feature-specific noise fraction $\nu$} \\
\cmidrule(lr){2-6}
Method & 0\% & 20\% & 40\% & 50\% & 60\% \\
\midrule
\TableBodyRows
Feature-first sequential & 0.833 & 0.854 & 0.821 & 0.791 & 0.752 \\
Swap-SA & 0.833 & 0.885 & \textbf{0.955} & \textbf{1.000} & \textbf{0.944} \\
Greedy+LS & 0.833 & 0.885 & 0.813 & 0.535 & 0.395 \\
Coord.-ascent & 0.833 & 0.885 & \textbf{0.955} & \textbf{1.000} & \textbf{0.944} \\
Tabu & 0.833 & 0.885 & 0.946 & 0.874 & 0.735 \\
Ring-local Kawasaki & 0.833 & 0.884 & 0.935 & 0.946 & 0.907 \\
Exact opt & 0.833 & 0.885 & 0.955 & 1.000 & 0.944 \\
\bottomrule
\end{tabular}
\end{table}

Bootstrap intervals and paired effect sizes in
Table~\ref{tab:joint_vs_sequential_ci} show that the positive mean gaps at
20--60\% noise remain well separated from zero under the fixed seed budget.
\begin{table}[!htbp]
\centering
\caption[Bootstrap 95 percent confidence intervals (CI) and standardized paired effects for the mean...]{Bootstrap 95\% confidence intervals (CI) and standardized paired effects for the mean joint-vs-sequential combined-F1 improvement $\Delta$F1. The paired effect is $d_z=\bar\Delta/s_\Delta$ over the same seeded instances. It is omitted when the paired differences have zero variance.}
\label{tab:joint_vs_sequential_ci}
\small
\TableRowsSetup
\begin{tabular}{@{}lrrrr@{}}
\TableHideRows
\toprule
\textbf{Noise \%} & \textbf{$\Delta$F1 (mean)} & \textbf{95\% CI} & \textbf{$d_z$} & \textbf{\#seeds} \\
\midrule
\TableBodyRows
0\% & +0.000 & [+0.000, +0.000] & -- & 200 \\
20\% & +0.031 & [+0.023, +0.039] & 0.52 & 200 \\
40\% & +0.134 & [+0.115, +0.153] & 0.97 & 200 \\
50\% & +0.209 & [+0.186, +0.231] & 1.27 & 200 \\
60\% & +0.192 & [+0.171, +0.212] & 1.25 & 200 \\
\bottomrule
\end{tabular}
\end{table}

\begin{figure}[p]
\centering
\includegraphics[width=\linewidth,height=0.75\textheight,keepaspectratio]{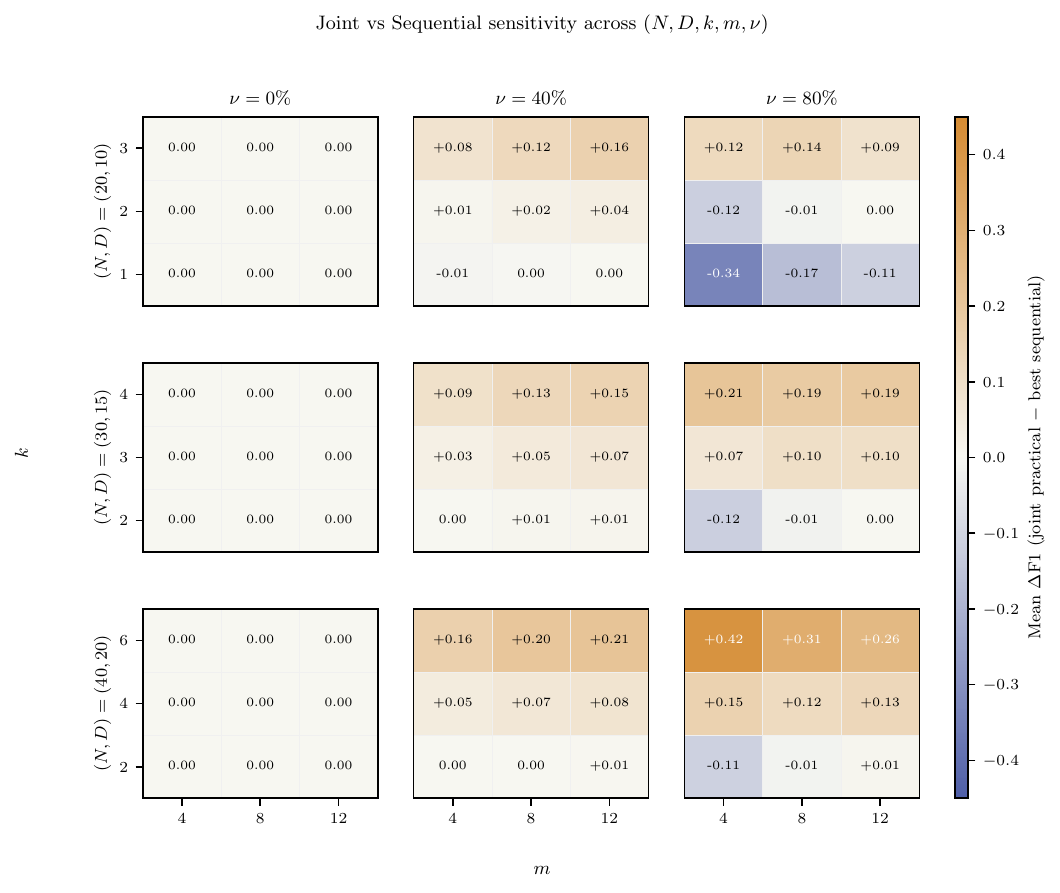}
\caption{Joint-vs-sequential parameter-sweep heatmaps on the synthetic
benchmarks. At $\nu=0$, joint and sequential selection tie in every evaluated
cell. Under feature-specific noise, the largest positive gaps occur at larger
$k$, while $k=1$ can favor sequential selection. Cells show mean
$\Delta\mathrm{F1}$ for the practical joint selector relative to the best
sequential baseline over 500 seeds. Rows vary $(N,D)$, columns vary synthetic
noise fraction $\nu$, and each panel sweeps anomaly budget $k$ against feature
budget $m$. Printed zeros denote evaluated cells with zero mean difference
after rounding.}
\label{fig:joint_vs_sequential_param_sweep_heatmaps}
\end{figure}

\paragraph{Effect sizes alongside the F1 gaps.}
The paired standardized effect \(d_z=\bar\Delta/s_\Delta\) follows the same
pattern across the wider parameter sweep. Positive effects concentrate when
$k>1$ and $m/D$ is intermediate. The largest losses occur at
$k=1$ under high feature noise. These cells show that joint selection helps when
the review budget requires coupled anomaly-feature signal across multiple
samples and features.

\paragraph{Failure-case analysis: $k{=}1$ at high synthetic noise.}
The worst-loss cell is structurally informative. With $k{=}1$, fixing a
candidate sample reduces the bilinear term
$\sum_{i\in S,j\in F}W_{ij}$ to a feature ranking for that sample, whereas the
joint method chooses the sample and feature set together. The feature-first
max-aggregation rule may use a different maximizing sample for each feature and can
therefore be competitive in this setting. Under high feature noise, joint local
search can instead converge to a spurious sample--feature pattern. At lower noise levels with
$k=1$, joint and sequential agree, identifying the
loss as a high-noise corner case.

\paragraph{Synthetic mechanism analysis.}
We isolate parameter combinations whose sign is stable across absolute-\(z\)
and median-absolute-deviation (MAD)-\(z\) calibration. The stable positive results
support the fixed-review-budget explanation. Joint selection helps when multiple
selected samples share informative, nonredundant feature signal. Pure noise and
$k=1$ erase that coupling. Table~\ref{tab:synthetic_mechanism_stable}
reports the calibration-specific lifts and feature-budget ratios.
\begin{table}[!htbp]
\centering
\caption[Calibration-map-stable findings from the synthetic mechanism suite]{Calibration-map-stable findings from the synthetic mechanism suite. All six displayed joint-minus-sequential $\Delta F_1$ values are positive, with MAD $z$ producing the larger lift in each setting. Abs. $z$ uses absolute reference-standardized residuals, while MAD $z$ uses reference median and median-absolute-deviation scaling. Active-feature F1 evaluates recovery on the informative feature subset. In the budget follow-up rows, $m/D$ is the feature-budget ratio with the largest active-feature F1 lift. The mechanism-grid row averages combined-F1 lift with redundant proxy burden fixed at $0.5$.}
\label{tab:synthetic_mechanism_stable}
\small
\setlength{\tabcolsep}{4pt}
\renewcommand{\arraystretch}{1.18}
\TableRowsSetup
\begin{tabular}{@{}l l l r@{\hspace{1.1em}}r r@{\hspace{1.1em}}r@{}}
\TableHideRows
\toprule
\textbf{Setting} & \textbf{Slice} & \textbf{Metric} & \multicolumn{2}{c}{\textbf{Abs. $z$}} & \multicolumn{2}{c}{\textbf{MAD $z$}} \\
\cmidrule(lr){4-5}\cmidrule(l){6-7}
 & & & $m/D$ & $\Delta F_1$ & $m/D$ & $\Delta F_1$ \\
\midrule
\TableBodyRows
Shared support & budget follow-up & active-feature max & $0.27$ & $+0.219$ & $0.27$ & $+0.344$ \\
Redundant decoys & budget follow-up & active-feature max & $0.27$ & $+0.180$ & $0.27$ & $+0.216$ \\
Proxy burden $0.5$ & mechanism grid & combined-F1 mean & -- & $+0.025$ & -- & $+0.060$ \\
\bottomrule
\end{tabular}
\end{table}

\FloatBarrier

\section{Public-Benchmark Comparisons and Protocol Sensitivity}
\label{supp:creditcard_checks}

\subsection{Balanced Public-Benchmark Comparisons}

Tables~\ref{tab:joint_vs_sequential_real_creditcard}
and~\ref{tab:joint_vs_sequential_real_ibm_aml_hi_small} give the balanced
public-benchmark comparisons. They compare the feature-first sequential
baseline with all prespecified strict-feasible joint heuristics under the fixed
stage-2 budgets.
\begin{table}[!htbp]
\centering
\caption[Credit Card fraud balanced time-split stage-2 results]{Credit Card fraud balanced time-split stage-2 results. At each candidate-pool size, the best reported joint heuristic has higher mean anomaly F1@k than feature-first sequential. Entries are mean anomaly F1@k $\pm$ the larger half-width of the 95\% percentile-bootstrap interval over slices. Balanced pools make F1@k equal to precision@k. Feature-first sequential is calibration-selected over max, sum, and median aggregation. Sum aggregation is selected at every candidate-pool size. Columns give $(N,D,k,m)$, the candidate-pool width and exact review budgets. Bold marks the best reported member of the prespecified full-neighborhood joint-heuristic set at each candidate-pool size. Slices count is n=25 per candidate-pool size.}
\label{tab:joint_vs_sequential_real_creditcard}
\small
\setlength{\tabcolsep}{7pt}
\renewcommand{\arraystretch}{1.22}
\TableRowsSetup
\begin{tabular}{@{}l r r r@{}}
\TableHideRows
\toprule
 & \multicolumn{3}{c}{Stage-2 dimensions and budgets $(N,D,k,m)$} \\
\cmidrule(lr){2-4}
Method & $(20,12,3,4)$ & $(50,12,5,4)$ & $(80,12,8,4)$ \\
\midrule
\TableBodyRows
Feature-first sequential & $0.800 \pm 0.067$ & $0.664 \pm 0.072$ & $0.745 \pm 0.055$ \\
Swap-SA & $\mathbf{0.827 \pm 0.067}$ & $0.752 \pm 0.072$ & $0.790 \pm 0.055$ \\
Greedy+LS & $0.800 \pm 0.067$ & $0.728 \pm 0.072$ & $\mathbf{0.795 \pm 0.055}$ \\
Coord.-ascent & $\mathbf{0.827 \pm 0.067}$ & $\mathbf{0.760 \pm 0.064}$ & $0.790 \pm 0.055$ \\
Tabu & $\mathbf{0.827 \pm 0.067}$ & $0.752 \pm 0.072$ & $0.790 \pm 0.055$ \\
Ring-local Kawasaki & $0.347 \pm 0.080$ & $0.304 \pm 0.072$ & $0.285 \pm 0.050$ \\
\bottomrule
\end{tabular}
\end{table}

\begin{table}[!htbp]
\centering
\caption[IBM IT-AML (HI-Small) balanced stage-2 results under chronological splits]{IBM IT-AML (HI-Small) balanced stage-2 results under chronological splits. At both candidate-pool sizes, the best reported joint heuristic has higher mean anomaly F1@k than feature-first sequential, with non-overlapping displayed 95\% bootstrap intervals. Entries are mean anomaly F1@k $\pm$ the larger half-width of the 95\% percentile-bootstrap interval over slices. Balanced pools make F1@k equal to precision@k. Feature-first sequential uses max aggregation with the supervised logistic-regression scorer. Columns give $(N,D,k,m)$, the candidate-pool width and exact review budgets. Bold marks the best reported member of the prespecified full-neighborhood joint-heuristic set at each candidate-pool size. Slices count is n=25 per candidate-pool size.}
\label{tab:joint_vs_sequential_real_ibm_aml_hi_small}
\setlength{\tabcolsep}{4pt}
\renewcommand{\arraystretch}{1.08}
\small
\TableRowsSetup
\begin{tabular}{@{}l r r@{}}
\TableHideRows
\toprule
 & \multicolumn{2}{c}{Stage-2 dimensions and budgets $(N,D,k,m)$} \\
\cmidrule(lr){2-3}
Method & $(20,38,3,4)$ & $(50,38,5,4)$ \\
\midrule
\TableBodyRows
Feature-first sequential & $0.467 \pm 0.093$ & $0.120 \pm 0.056$ \\
Swap-SA & $0.680 \pm 0.120$ & $0.344 \pm 0.128$ \\
Greedy+LS & $0.627 \pm 0.133$ & $\mathbf{0.400 \pm 0.112}$ \\
Coord.-ascent & $0.653 \pm 0.120$ & $0.328 \pm 0.120$ \\
Tabu & $\mathbf{0.720 \pm 0.120}$ & $0.328 \pm 0.120$ \\
\bottomrule
\end{tabular}
\end{table}

\FloatBarrier

\subsection{Credit Card Protocol Analyses}

\paragraph{Protocol sensitivity.}
Table~\ref{tab:joint_vs_sequential_real_creditcard_protocol_stress} compares two
sensitivity analyses of the balanced Credit Card protocol. Unbalanced pools relax the
balanced-slice requirement while retaining the same fixed-$k$ review budget,
and the two methods move much closer. The balanced analysis varies the
reference/buffer/test fractions while holding reference-only label-free
calibration and test-time sequential aggregation fixed. Its joint-minus-sequential
F1 differences remain positive across the tested splits.
\begin{table}[!htbp]
\centering
\caption[Credit Card protocol sensitivity]{Credit Card protocol sensitivity. Panel (a) relaxes the balanced-pool requirement, where F1@k differs from precision@k. Panel (b) varies the reference/buffer/test split while retaining balanced candidate pools. Entries are paired joint-minus-feature-first anomaly-F1@k differences, reported as the mean $\pm$ the larger half-width of the 95\% percentile-bootstrap interval. Positive values favor joint selection. Each entry uses the best prespecified joint heuristic for that setting.}
\label{tab:joint_vs_sequential_real_creditcard_protocol_stress}
\footnotesize
\setlength{\tabcolsep}{5pt}
\renewcommand{\arraystretch}{1.18}
\textit{(a) Unbalanced candidate pools}\par\smallskip
\TableRowsSetup
\begin{tabular}{@{}lclr@{}}
\TableHideRows
\toprule
$(N,D,k,m)$ & Mean prevalence & Joint method & $\Delta F_1$ \\
\midrule
\TableBodyRows
$(20,12,3,4)$ & 5.3\% & Swap-SA & $-0.025 \pm 0.050$ \\
$(50,12,5,4)$ & 2.0\% & Swap-SA & $0.000 \pm 0.000$ \\
\bottomrule
\end{tabular}

\medskip
\textit{(b) Balanced time-split sensitivity}\par\smallskip
\TableRowsSetup
\begin{tabularx}{\linewidth}{l *{3}{>{\centering\arraybackslash}X}}
\TableHideRows
\toprule
\textbf{$(N,D,k,m)$} & \shortstack{$P_{70/10/20}$\\($n_{ref}\!\approx\!199k$, $n=10$)} & \shortstack{$P_{60/20/20}$\\($n_{ref}\!\approx\!171k$, $n=25$)} & \shortstack{$P_{40/20/40}$\\($n_{ref}\!\approx\!114k$, $n=10$)} \\
\midrule
\TableBodyRows
(20,12,3,4) & $0.067 \pm 0.133$ & $0.027 \pm 0.053$ & $0.167 \pm 0.133$ \\
(50,12,5,4) & $0.080 \pm 0.140$ & $0.096 \pm 0.064$ & $0.160 \pm 0.100$ \\
(80,12,8,4) & -- & $0.050 \pm 0.040$ & -- \\
\bottomrule
\end{tabularx}
\end{table}

\FloatBarrier

\subsection{Credit Card Neighboring-Budget Analysis}
\label{supp:creditcard_budget_diagnostics}

Across 960 Credit Card candidate pools, we evaluated neighboring sectors
\(k=1,\ldots,16\) and
\(m\in\{2,3,5,7,10,14,20,28\}\) with the same greedy local-search and
coordinate-ascent selectors used in the public-benchmark comparisons. Labels
were used only after each label-free sector recommendation to compute
retrospective F1.

In empirical-CDF/PIT reservoir pools, the strongest label-free recommendations
reached retrospective F1 \(0.155\) while choosing a sample budget \(k\) that
exceeded the labeled-positive count by about \(13.5\) on average. The best
count-aligned rule reported in the
analysis reached F1 \(0.063\). The sweep therefore measures the tradeoff between
retrospective capture and review capacity and does not validate automatic
budget discovery. The primary optimization problem remains selection within the
\((k,m)\) sector fixed by the application budgets.

\subsection{IBM IT-AML Staged Trigger-Based Evaluation}
\label{supp:ibm_trigger_menu_realism}

We evaluate an IBM IT-AML trigger-threshold setting under chronological splits
using the full engineered \texttt{graph\_causal} feature set. These
graph-derived transaction summaries define the feature representation. The
protocol mirrors a common anti-money-laundering (AML)
alert-review setting. A stage-1 trigger scores chronological transaction
windows, a fixed threshold rule retains only medium- or high-risk windows, and
stage-2 selects a constrained $(k,m)$ subset within a candidate pool drawn from
the retained window. The sequential baseline is the feature-first two-stage
baseline.

A prerequisite for the stage-2 comparison is that stage-1 raises the positive
rate in the retained candidate pools. We fit a logistic trigger on historical
non-overlapping 20,000-transaction windows and hold it fixed on the test split.
For candidate-pool sizes \(N\in\{500,1000,2000\}\), recall@\(N\) is the pooled
fraction of labeled positives in these windows retained among the top \(N\)
trigger scores. The supervised trigger reaches recall@\(N\) of
\StageOneTurnonLogregRecallRangePct{} with precision/base-rate lift
\StageOneTurnonLogregEnrichRange{}$\times$. A label-free robust-z trigger,
computed from rolling 7-day past references and the sum of the four largest one-sided
z-scores, stays near baseline on the same budgets. The supervised retained
pools therefore contain enough labeled positives for the stage-2 comparison.

Table~\ref{tab:joint_vs_sequential_real_ibm_aml_hi_small_trigger_menu_stage1_baseline}
compares two baselines inside the retained windows. Greedy+LS has positive mean
true-positive-at-\(k\) (TP@k) differences from the generic feature-first
baseline in all four menu/budget conditions. Its mean differences from the
budget-aware top-\(m\) trigger are negative in all four. The stage-2 pools draw
\(N=500\) candidates uniformly from the top \(5N\) upstream
trigger scores. The budget-aware upstream ranker is therefore stronger under
this broad queue protocol.
\begin{table}[!htbp]
\centering
\caption[IBM IT-AML trigger-menu stage-2 TP@k differences]{IBM IT-AML trigger-menu stage-2 TP@k differences. Pools use sampling $N$ candidates uniformly from the top-$(5N)$ by the trigger score ($r=5$) with $N=500$, $D=38$, and $m=4$. Medium/high menus retain 10/29 and 6/29 test windows. $\Delta_{\mathrm{seq}}$ compares the joint heuristic with generic feature-first selection, and $\Delta_{\mathrm{budget}}$ compares it with the budget-aware top-$m$ trigger. Entries are paired means $\pm$ the larger half-width of the 95\% percentile-bootstrap interval. The displayed joint heuristic is Greedy+LS in every displayed row.}
\label{tab:joint_vs_sequential_real_ibm_aml_hi_small_trigger_menu_stage1_baseline}
\small
\setlength{\tabcolsep}{6pt}
\renewcommand{\arraystretch}{1.18}
\TableRowsSetup
\begin{tabular}{@{} l r r r @{}}
\TableHideRows
\toprule
Menu & $k$ & $\Delta_{\mathrm{seq}}$ TP@k & $\Delta_{\mathrm{budget}}$ TP@k \\
\midrule
\TableBodyRows
Med & 20 & 0.48\,$\pm$\,0.72 & -0.64\,$\pm$\,0.64 \\
Med & 50 & 0.44\,$\pm$\,0.84 & -1.72\,$\pm$\,1.36 \\
High & 20 & 1.00\,$\pm$\,0.80 & -2.64\,$\pm$\,1.16 \\
High & 50 & 1.52\,$\pm$\,1.20 & -7.04\,$\pm$\,2.64 \\
\bottomrule
\end{tabular}
\end{table}

\section{CG-XY-QAOA Grouping and Sampling Results}
\label{supp:qaoa_depth}

The QAOA depth results separate three roles. First, the noiseless statevector
ladder measures whether increasing depth and coupling-grouped phase angles
improve the optimized expected score. Second, the exact-budget classical
baseline measures how far those optimized quantum states remain from strong
classical selectors on the same planted instances. Third, the readout and Heron
tables translate optimized-state quality into finite-shot and hardware-resource
terms.

\subsection{Noiseless Depth and Readout Results}
\label{supp:cgxy_depth_resource_evidence}

The primary simulator-side quantity is the optimized state's expected
feasible-sector energy, normalized to $\alpha\in[0,1]$ with
$\alpha=1$ at the exact feasible optimum and $\alpha=0$ at the
uniform-feasible reference
\(\bar C_\Omega=|\Omega|^{-1}\sum_{x\in\Omega}C(x)\). The depth sweep and
classical feasible-sector reference use the same normalization.

All optimized trajectories from the prespecified \(p=1,\ldots,8\) grid enter
these summaries. In the small-instance sweep, 53 of 54 tied trajectories and 53
of 54 fully grouped trajectories increase at every successive depth. The mean
is monotone for every perturbation-mode and signal-strength combination. All 12
tied and all 12 fully grouped trajectories in the \((14,10,3,3)\) series are
monotone.

\paragraph{Matched-resource comparison.}
The depth sweep compares tied and fully grouped parameterizations at the same
depth \(p\). It reports \(\alpha\), the grouped-minus-tied lift, and finite-shot
readout probabilities for the corresponding optimized states. A tied
depth-\(2p\) circuit provides a parameter-count-matched comparison while using
twice as many logical layers.

For $(N,D,k,m)=(14,10,3,3)$, the fully grouped cost-plus-mixer
CG-XY-QAOA mean increases monotonically across the \(p=1,\ldots,8\) ladder, and
the grouped-minus-tied mean is larger at every depth than at \(p=1\). The
lower-\(\mu\) setting yields the higher $\alpha$ at $p=8$.
On these instances, the matched exact-budget classical baseline reaches the
exact optimum at every depth. It takes the best result from Greedy+LS (greedy
local search), Coord.-ascent, Tabu, and Swap-SA under the same constrained
objective and $\alpha$ normalization. The CG-XY-QAOA ladder therefore reports
solution quality at fixed QAOA depth within the exact-budget sector.

Table~\ref{tab:qaoa_sparse_exact_14_10_readout} separates expected-energy
quality from finite-shot readout quality for the same optimized states. The
optimized distributions remain diffuse enough that exact-optimum probability
stays below saturation, even though best-of-4096 readout quality is much closer to
the optimum than the mean expected-energy score.

\begin{table}[!htbp]
\centering
\footnotesize
\setlength{\tabcolsep}{2.3pt}
\renewcommand{\arraystretch}{1.08}
\caption[Finite-shot readout metrics for the optimized (14,10,3,3) QAOA depth sweep over 12 planted...]{Finite-shot readout metrics for the optimized $(14,10,3,3)$ QAOA depth sweep over 12 planted instances. The fully grouped variant's higher mean \(\alpha\) at every depth does not translate into uniform improvement in finite-shot exact-optimum probability relative to tied XY-QAOA. Entries are means over the instances, with paired subcolumns for tied XY-QAOA and fully grouped CG-XY-QAOA. \(\alpha\) is one at the exact feasible optimum and zero at the uniform-feasible reference. \(P_\star(4096)\) is the probability of at least one exact-optimum sample in 4096 independent draws, Best-4096 is the expected best \(\alpha\), and \(P(\alpha\ge0.9)\) is a one-shot probability.}
\label{tab:qaoa_sparse_exact_14_10_readout}
\TableRowsSetup
\begin{tabular}{@{}rrrrrrrrr}
\TableHideRows
\toprule
$p$ & \multicolumn{2}{c}{$\mathbb{E}\alpha$} & \multicolumn{2}{c}{$P_\star(4096)$} & \multicolumn{2}{c}{Best-4096 $\alpha$} & \multicolumn{2}{c}{$P(\alpha\ge0.9)$} \\
\cmidrule(lr){2-3}\cmidrule(lr){4-5}\cmidrule(lr){6-7}\cmidrule(lr){8-9}
& Tied & Grouped & Tied & Grouped & Tied & Grouped & Tied & Grouped \\
\midrule
\TableBodyRows
1 & 0.183 & 0.190 & 28.1\% & 34.0\% & 0.900 & 0.910 & $<0.05\%$ & $<0.05\%$ \\
2 & 0.254 & 0.267 & 56.3\% & 45.9\% & 0.947 & 0.932 & 0.1\% & 0.1\% \\
3 & 0.305 & 0.324 & 53.9\% & 27.1\% & 0.942 & 0.903 & 0.1\% & 0.1\% \\
4 & 0.344 & 0.364 & 53.0\% & 17.7\% & 0.945 & 0.883 & 0.1\% & 0.1\% \\
5 & 0.376 & 0.397 & 31.0\% & 21.4\% & 0.908 & 0.893 & 0.1\% & 0.1\% \\
6 & 0.403 & 0.427 & 12.3\% & 30.1\% & 0.871 & 0.903 & 0.2\% & 0.4\% \\
7 & 0.427 & 0.452 & 16.4\% & 33.6\% & 0.889 & 0.914 & 0.4\% & 0.7\% \\
8 & 0.448 & 0.473 & 23.8\% & 34.8\% & 0.897 & 0.915 & 0.8\% & 0.9\% \\
\bottomrule
\end{tabular}
\end{table}

\FloatBarrier
\subsection{Grouping Controls and Fixed-Threshold Results}
\label{supp:cgxy_grouping_controls}

Numerically, the mixed-derivative identity is checked across the tested
perturbation grid, and the equal-norm descent comparison is evaluated on an
imbalanced planted instance and on a controlled collinear instance. The former
shows the predicted grouped descent advantage, while the controlled collinear
case gives equality, matching
Corollary~\mainref{cor:grouped-equal-norm-descent}.

A same-parameter randomization control further separates the grouped response
from phase-parameter count alone. The type-preserving control keeps the
sample-field, feature-field, and bilinear taxonomy fixed and randomizes
assignments within those classes. The same-size control preserves only the three
group sizes and allows arbitrary term partitions.

\begin{table}[!htbp]
\centering
\footnotesize
\setlength{\tabcolsep}{3pt}
\renewcommand{\arraystretch}{1.08}
\caption{\emph{Randomization control for grouped phases.}
The semantic split lies near the upper tail of type-preserving regroupings and
is unexceptional under unrestricted same-size regrouping. The sample-field,
feature-field, and bilinear split is compared with random three-way
decompositions over \CGXYGroupingNullCaseCount{} exactly enumerable
feasible-sector instances. Median pct. is the percentile rank of the semantic
split in the corresponding random regrouping distribution,
\(P(>\!q_{95})\) is the fraction of cases above that distribution's 95th
percentile, and Gain/null mean is the ratio of semantic gain to random-null
mean gain.}
\label{tab:cgxy_grouping_randomization_control}
\TableRowsSetup
\begin{tabular}{@{}lrrr@{}}
\TableHideRows
\toprule
Control & Median pct. & \(P(>\!q_{95})\) & Gain/null mean \\
\midrule
\TableBodyRows
Type-preserving & \CGXYGroupingNullTypePreservingMedianPercentilePct{} &
\CGXYGroupingNullTypePreservingAboveQNinetyFivePct{} &
\(\CGXYGroupingNullTypePreservingGainOverMean{}\times\) \\
Same-size & \CGXYGroupingNullSameSizeMedianPercentilePct{} &
\CGXYGroupingNullSameSizeAboveQNinetyFivePct{} &
\(\CGXYGroupingNullSameSizeGainOverMean{}\times\) \\
\bottomrule
\end{tabular}
\end{table}

Together with the separate optimized depth sweep, the randomization control
supports the grouping as a structure-aware refinement of the bipartite
Hamiltonian. Exact-budget classical references
remain the lower-energy benchmark on enumerable instances.

\begin{table}[!htbp]
\centering
\small
\setlength{\tabcolsep}{4.2pt}
\renewcommand{\arraystretch}{1.12}
\caption[Grouped-versus-tied depth comparison]{\emph{Grouped-versus-tied depth comparison.} Fully grouped CG-XY-QAOA beats tied XY-QAOA in all 264 same-depth comparisons and exceeds the tied depth-\(2p\) reference in 3 of 264 comparisons. Mean lift is the mean \(G_p-A_p\) across cases, where \(G_p\) and \(A_p\) are the optimized fully grouped and layerwise XY-QAOA gains at depth \(p\). The parameter-matched reference is layerwise XY-QAOA at depth \(2p\). The two win columns report the number of cases where \(G_p\) exceeds \(A_p\) and \(A_{2p}\). Recovery is \((G_p-A_p)/(A_{2p}-A_p)\) and is reported as the median and interquartile range [Q1, Q3] across cases.}
\label{tab:qaoa_calibration_grouped_multiangle}
\TableRowsSetup
\begin{tabular}{@{}lrrrrrr}
\TableHideRows
\toprule
Depth & Cases & \multicolumn{2}{c}{Wins} & Mean lift & Median recovery & Recovery IQR \\
\cmidrule(lr){3-4}
& & vs tied $p$ & vs tied $2p$ & & & \\
\midrule
\TableBodyRows
$p=1$ & 96 & 96/96 & 1/96 & 0.374 & 30.3\% & [17.7\%, 46.2\%] \\
$p=2$ & 96 & 96/96 & 1/96 & 0.487 & 36.4\% & [19.3\%, 61.4\%] \\
$p=3$ & 48 & 48/48 & 1/48 & 0.455 & 33.8\% & [23.6\%, 50.6\%] \\
$p=4$ & 24 & 24/24 & 0/24 & 0.370 & 29.6\% & [24.3\%, 42.2\%] \\
\bottomrule
\end{tabular}
\end{table}

Because the cost components commute and layerwise mixer angles leave mixer
support unchanged, the same-depth comparison changes parameter freedom without
adding logical gates or layers. Table~\ref{tab:qaoa_calibration_grouped_multiangle}
reports the recovery fraction \(r_p=(G_p-A_p)/(A_{2p}-A_p)\). The grouped
depth-$p$ variant has at least the same effective score whenever the added depth
in the tied $2p$ circuit preserves no more than that fraction of its noiseless
gain.

The planted-panel threshold is fixed from a prior tied-baseline hardware trace.
The hardware columns assess the preselected fully grouped schedule, whereas
same-device variant comparisons use the simulation-selected benchmark instance
in Table~\mainref{tab:creditcard_bilinear_cg_best_known_replay}.

\begin{table}[!htbp]
\centering
\footnotesize
\setlength{\tabcolsep}{2.5pt}
\renewcommand{\arraystretch}{1.1}
\caption[Planted-panel fixed-threshold results at p=3]{\emph{Planted-panel fixed-threshold results at $p=3$.} The selected fully grouped schedule produces threshold hits on all eight displayed IBM Heron R3 instances across ten predeclared 4096-shot repetitions per instance. The noiseless columns compare four parameterizations under the same fixed per-instance thresholds. Rates are $10^4p_{\rm BK}$, Hits is the pooled number of hardware threshold-hit shots, and Exact-budget mass is the mean decoded feasible-sector yield.}
\label{tab:cgxy_fully_grouped_qpu_replay_p3}
\TableRowsSetup
\begin{tabular}{@{}rrrrrrrr@{}}
\TableHideRows
\toprule
& \multicolumn{4}{c}{Noiseless $10^4p_{\rm BK}$} & \multicolumn{3}{c}{IBM Heron R3} \\
\cmidrule(lr){2-5}\cmidrule(lr){6-8}
Instance & XY-QAOA & Bilinear-CG & \shortstack{Fixed-transport\\grouped} & Fully grouped & $10^4p_{\rm BK}$ & Hits & Exact-budget mass \\
\midrule
\TableBodyRows
1 & 13.06 & 13.06 & 132.99 & 131.40 & 85.21 & 349 & 67.7\% \\
2 & 0.89 & 0.95 & 3.02 & 12.07 & 9.52 & 39 & 68.8\% \\
3 & 0.76 & 1.52 & 2.65 & 8.14 & 4.15 & 17 & 65.4\% \\
4 & 7.71 & 7.71 & 33.31 & 13.36 & 7.57 & 31 & 67.4\% \\
5 & 1.31 & 3.22 & 6.18 & 12.35 & 5.86 & 24 & 67.6\% \\
6 & 30.69 & 30.69 & 30.69 & 45.00 & 30.52 & 125 & 66.7\% \\
7 & 1.22 & 1.74 & 2.47 & 53.96 & 34.67 & 142 & 67.2\% \\
8 & 8.24 & 9.91 & 18.72 & 40.38 & 23.44 & 96 & 68.7\% \\
\bottomrule
\end{tabular}
\end{table}

\FloatBarrier
\section{Hardware Implementation, Decoding, and IBM Heron Execution}
\label{supp:hardware_detail}

\subsection{Transpilation Methodology and Decoding}
\label{supp:hardware_compilation_methodology}

\paragraph{Transpilation and resource accounting.}
For the simulated transpilation analyses, we build a logical circuit, choose
the platform coupling graph and native basis, and run Qiskit's transpiler with
the specified coupling map, initial-layout policy, and optimization level~3. The
transpilation resource tables and figures use Qiskit~2.1.2.
The square-lattice transpilation studies use generated sparse coupling maps, while the IBM
hardware analyses use the corresponding backend targets or hardware snapshots. For Heron runs with a fixed hardware patch, that patch is passed as the
initial layout. The Qiskit final layout is stored, and the same patch, shot
count, and decoding rule are held fixed inside each implementation-path comparison.

Submitted circuit depth is the standard circuit depth after parameter
binding, native-gate realization, routing, measurement insertion, and Runtime
instruction-set-architecture (ISA) validation. Runtime ISA validation confirms
that the final circuit uses only instructions, physical couplings, and parameter
values accepted by the selected target. Submitted two-qubit depth counts layers
containing operations on at least two qubits, and native two-qubit instruction
count excludes explicit \texttt{SWAP} operations. Routing \texttt{SWAP}s are
tracked separately where they affect resource accounting.

The target-based circuit-duration estimate is the longest circuit path
weighted by the instruction durations stored in the backend target. It includes
final measurement duration and excludes qubit reset, inter-shot repetition
delay, Runtime service overhead, and queue time. It is therefore a target-based
estimate rather than device-side timing telemetry. Feasibility is evaluated in
two ways. Physical-order parsing treats physical classical-bit order as logical
order. Layout-decoded parsing first normalizes Qiskit count keys to physical
little-endian bitstrings and then maps them through the returned final layout
before checking $(|\vecs|,|\vecf|)=(k,m)$.

For a symmetric QUBO matrix \(Q\) with \(E(x)=x^\top Qx\), conversion with
\(x=(I-Z)/2\) gives
\[
  h_i=-\frac{1}{2}\left(Q_{ii}+\sum_{j\ne i}Q_{ij}\right),\qquad
  J_{ij}=\frac{Q_{ij}}{2}.
\]
The circuit therefore applies \(R_Z(2\gamma h_i)\) and
\(R_{ZZ}(2\gamma J_{ij})\). For grouped phases, the same conversion is applied
to each binary cost component separately before its angle is assigned.

\paragraph{Register-preserving variable placement.}
For a fixed hardware patch, let \(E_\times\subseteq[N]\times[D]\) denote the
cross-register position pairs connected by executable hardware edges. Before
applying the hardware-edge mask, an exact mixed-integer linear program selects
within-register permutations by solving
\[
  \max_{\pi_{\mathcal S}\in\mathfrak S_N,
        \pi_{\mathcal F}\in\mathfrak S_D}
  \sum_{(u,v)\in E_\times}
  \left|W_{\pi_{\mathcal S}(u),\pi_{\mathcal F}(v)}\right|.
\]
If \(P\) is the resulting permutation matrix from circuit-logical order to
original variable order, the circuit uses \(Q_{\rm placed}=P^\mathsf{T}QP\).
Thus \(y^\mathsf{T}Q_{\rm placed}y=(Py)^\mathsf{T}Q(Py)\), and the two
cardinality constraints are unchanged because \(P\) never moves a variable
between registers. The solver reached zero mixed-integer optimality gap for
every submitted placement problem. Placement optimization uses the dense QUBO and fixed
patch support before simulator samples or hardware outcomes are available.
After layout-aware decoding, \(P\) restores original sample and feature labels
before dense or sparse scoring.

The fractional-gate implementation used for hardware execution constrains native
\(R_{ZZ}\) angles to the backend-supported interval. We use Qiskit's convention
\[
  R_{ZZ}(\theta)=\exp[-i\theta Z\otimes Z/2].
\]
For any integer \(q\),
\[
  R_{ZZ}(\theta+q\pi)=(-iZ\otimes Z)^q R_{ZZ}(\theta)
\]
up to global phase. For odd \(q\), the residual \(Z\otimes Z\) factor is
implemented by local \(R_Z(\pi)\) corrections, since
\(R_Z(\pi)\otimes R_Z(\pi)=-Z\otimes Z\) under this convention and therefore
implements \(Z\otimes Z\) up to global phase. A
negative folded angle is converted to a positive native angle by the exact echo
identity
\[
  (X\otimes I)R_{ZZ}(\theta)(X\otimes I)=R_{ZZ}(-\theta).
\]
Thus an unrestricted logical \(R_{ZZ}(\theta)\) can be represented exactly by a
native-range positive \(R_{ZZ}\) gate together with local \(R_Z(\pi)\)
corrections and, when needed, an \(X\)-echo on one endpoint.

The Block~XY mixer fusion uses Qiskit's \(\mathrm{XXPlusYY}\) convention
\[
  \mathrm{XXPlusYY}(\phi,0)
  =
  \exp[-i\phi(XX+YY)/4].
\]
Because \(XX\) and \(YY\) commute,
\[
  R_{XX}(\theta)R_{YY}(\theta)
  =
  \exp[-i\theta(XX+YY)/2]
  =
  \mathrm{XXPlusYY}(2\theta,0).
\]
The implementation applies this rewrite only to adjacent equal-angle
\(R_{XX}(\theta)R_{YY}(\theta)\) pairs on the same edge. Edge-colored scheduling then groups qubit-disjoint
\(\mathrm{XXPlusYY}\) gates into parallel transport layers.

These identities give the pre-routing resource accounting for the active
cost-and-mixer layers. Let \(E_{\rm sparse}\) be the retained sparse cost-term
set and \(E_{{\rm XY},\ell}\) the Block~XY transport edges in layer \(\ell\).
Before backend resynthesis,
\[
  G_{2Q}^{\rm cost}=p|E_{\rm sparse}|,\qquad
  G_{2Q}^{\rm mixer,unfused}=2\sum_{\ell=1}^{p}|E_{{\rm XY},\ell}|,\qquad
  G_{2Q}^{\rm mixer,fused}=\sum_{\ell=1}^{p}|E_{{\rm XY},\ell}|.
\]
The \(R_{ZZ}\) folding keeps the cost-layer two-qubit count fixed at this
logical accounting level. The costs of its local corrections are included in
every post-transpilation resource metric.

The factorial implementation study uses 13 decision-register widths at
\(p\in\{2,3\}\), giving \CompilePathMatchedPointCount{} matched width--depth
points. Every condition uses the same retained cost terms, physical patch,
initial layout, basis initialization, bilinear-CG parameterization, deterministic
angles \(\gamma=0.9\), \(\gamma_{\mathcal S\mathcal F}=2.4\), and
\(\beta=1.15\), backend snapshot, and
\CompilePathTranspilerSeedCount{} transpiler seeds. These resource-only angles
exercise native-range folding and are not selected from simulator or hardware
outcomes.

The 16 conditions cross the standard CZ-basis and fractional-gate targets,
unfused and fused Block~XY mixers, recorded and edge-colored transport orders,
and parameter binding before and after transpilation. Opt-3 CZ and Opt-3
fractional are the unfused, recorded-order, bind-before controls under Qiskit's
optimization-level~3 preset pass manager. The complete CZ and fractional
Block~XY paths use fusion, edge-colored transport, and late parameter binding.
Every fractional condition applies exact native-range \(R_{ZZ}\) folding before
Runtime ISA validation. The factors interact, so Table~\ref{tab:implementation_path_factorial}
reports paired path contrasts rather than additive factor shares.

\begin{table}[!htbp]
\centering
\footnotesize
\caption{\emph{Controlled implementation-path reductions.} Entries give the
median and observed minimum--maximum percentage reduction across
\CompilePathMatchedPointCount{} paired width--depth points. Negative values
denote increases, and the observed ranges are descriptive rather than confidence
intervals. The complete fractional path reduces each reported metric at every
point. Switching targets alone reduces circuit depth without uniformly reducing
the target-based circuit-duration estimate. Native two-qubit instruction count
falls by
\CompilePathNativeTwoQCountReductionRangePct{} under the fractional target and
is unchanged by fusion or transport ordering.}
\label{tab:implementation_path_factorial}
\setlength{\tabcolsep}{3pt}
\TableRowsSetup
\begin{tabularx}{\linewidth}{@{}>{\raggedright\arraybackslash}Xlr@{\;}rr@{\;}rr@{\;}r@{}}
\TableHideRows
\toprule
Implementation path & Reference path & \multicolumn{6}{c}{Reduction across matched points (\%)} \\
\cmidrule(l){3-8}
& & \multicolumn{2}{c}{Circuit depth} & \multicolumn{2}{c}{2Q depth} & \multicolumn{2}{c}{Duration estimate} \\
\cmidrule(lr){3-4}\cmidrule(lr){5-6}\cmidrule(l){7-8}
& & \multicolumn{1}{r}{Median} & Range & \multicolumn{1}{r}{Median} & Range & \multicolumn{1}{r}{Median} & Range \\
\midrule
\TableBodyRows
Opt-3 fractional & Opt-3 CZ & \(31.2\) & \([25.8,35.8]\) & \(0.0\) & \([0.0,8.3]\) & \(4.1\) & \([-1.2,9.1]\) \\
Complete CZ Block-XY & Opt-3 CZ & \(25.9\) & \([10.8,36.3]\) & \(23.1\) & \([0.0,36.4]\) & \(15.3\) & \([7.0,24.4]\) \\
\addlinespace[3pt]
Complete fractional Block-XY & Opt-3 fractional & \(23.5\) & \([6.6,36.4]\) & \(23.1\) & \([8.3,33.3]\) & \(17.5\) & \([8.8,25.1]\) \\
Complete fractional Block-XY & Opt-3 CZ & \(47.4\) & \([35.1,54.8]\) & \(23.1\) & \([8.3,36.4]\) & \(20.0\) & \([12.5,29.2]\) \\
\bottomrule
\end{tabularx}

\end{table}

Algorithm~\ref{alg:layout_aware_decoding} gives the layout-aware decoding map
used after routing permutes physical wires. The map restores logical sample and
feature registers before testing the strict \((k,m)\) budget window.

\begin{algorithm}[!htbp]
\caption{Layout-aware strict-feasibility decoding}
\label{alg:layout_aware_decoding}
\begin{algorithmic}[1]
  \Require Qiskit measurement count key \(c\), final layout \(L\), logical budgets \((k,m)\)
  \State \(y\gets\) physical little-endian bitstring normalized from \(c\)
  \For{logical qubit \(q=0,\ldots,N+D-1\)}
    \State \(p\gets L(q)\) \Comment{physical qubit holding logical qubit \(q\) at measurement}
    \State \(x_q\gets y_p\)
  \EndFor
  \State \(\vecs\gets (x_0,\ldots,x_{N-1})\), \(\vecf\gets (x_N,\ldots,x_{N+D-1})\)
  \State \Return decoded \((\vecs,\vecf)\) and feasibility flag \(\sum_i s_i=k,\ \sum_j f_j=m\)
\end{algorithmic}
\end{algorithm}

\begin{lemma}[Routing requires layout-aware decoding]
\label{lem:routing-feasibility}
Consider an ideal logical circuit that preserves the pair of Hamming weights
$(|\vecs|,|\vecf|)=(k,m)$, for example feasible initialization plus Block~XY
mixing. Let a transpiler insert SWAP routing and return measurement outcomes in
physical wire order. In noiseless simulation, the transpiled circuit is
logically equivalent to the original circuit up to a permutation of wires
described by the final layout. Applying the induced layout-aware decode map
restores the original logical variable order. Therefore decoded feasibility
remains $100\%$ under the ideal transpiled circuit. Physical-order parsing
measures the routed bit partition and can therefore report smaller feasibility.
\end{lemma}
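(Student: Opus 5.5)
The plan is to model the transpiled circuit as the logical circuit composed with a physical wire permutation, and then show that the decode map of Algorithm~\ref{alg:layout_aware_decoding} undoes exactly that permutation on computational-basis outcomes.

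First we formalize what routing does. Let $U$ be the ideal logical circuit on $n=N+D$ qubits. Start from a feasible initialization and apply the Block~XY mixer and the diagonal cost layers. By Proposition~\ref{prop:feasibility}, the output state $\ket{\psi}$ is supported on basis states $\ket{x}$ with $(|\vecs|,|\vecf|)=(k,m)$.

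A SWAP-routed transpilation with initial layout $L_0$ and final layout $L$ implements the following noiseless unitary:
\[
  \widetilde U = P_{L}\, U\, P_{L_0}^{\dagger},
\]
where $P_\pi$ is the qubit-permutation unitary sending logical qubit $q$ to physical wire $\pi(q)$. Ancilla or idle physical wires stay in $\ket{0}$ and are dropped. This identity holds because every inserted SWAP is itself a wire permutation and the remaining gates act on the relabeled wires. Composing these relabelings gives the stated form, and the initial state is prepared in the initial layout.

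Next we look at measurement in physical order. The physical outcome $y$ occurs with probability $|\langle y|P_L|\psi\rangle|^2 = |\langle x|\psi\rangle|^2$, where $x_q=y_{L(q)}$. The decode loop in Algorithm~\ref{alg:layout_aware_decoding} computes exactly this $x$ after normalizing Qiskit's little-endian count keys. The decoded distribution therefore equals the logical measurement distribution of $\ket{\psi}$, and so all of its mass lies in $\Omega$. This gives $100\%$ decoded feasibility.

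For the contrast statement, physical-order parsing reads $x'_q=y_q=x_{L^{-1}(q)}$. The first $N$ parsed bits then mix sample and feature variables whenever $L$ moves a variable across the register boundary. Their Hamming weights need not equal $(k,m)$. A two-qubit example suffices to show strict loss: take $N=D=1$, $k=1$, $m=0$, and a final layout that swaps the two wires. The parsed bitstring then has $(|\vecs|,|\vecf|)=(0,1)$. This establishes that physical-order parsing ``can'' report smaller feasibility.

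The main obstacle is the bookkeeping in the first step. Qiskit distinguishes the initial layout, the routing permutation, and the final layout, and it also reverses bit order in its count keys. I will fix conventions once: take $L$ to be the composite map from logical qubit to physical wire at measurement time, and fold the endianness reversal into the normalization step. This keeps the remaining argument a one-line change of variables.
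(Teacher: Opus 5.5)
Your proposal is correct and follows essentially the paper's own argument, which the paper embeds in the lemma statement itself rather than in a separate proof. That argument runs: the transpiled circuit equals the logical circuit up to the final-layout wire permutation, the decode map of Algorithm~\ref{alg:layout_aware_decoding} inverts that permutation, and Proposition~\ref{prop:feasibility} then yields $100\%$ decoded feasibility. Your identity $\widetilde U = P_L U P_{L_0}^{\dagger}$, the change of variables $x_q=y_{L(q)}$, and the explicit cross-register two-qubit example make this precise; the paper instead supports the contrast empirically with its 12-qubit routing check (41\% physical-order versus 100\% decoded mass).
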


\subsection{Sparse-Surrogate Integrity and Sparse-Objective Controls}
\label{supp:sparse_surrogate_controls}

Depth attribution with circuit barriers separates two dominant resource
sources: deterministic Dicke preparation when used, and routing plus synthesis
for dense \(N\times D\) cost interactions. The hardware runs therefore use
feasible basis-state initialization and hardware-aligned cost sparsification.
In the dense-vs-sparsified resource-scaling sweep, dense and sparsified
circuits use the same Qiskit optimization-level-3 settings on each target's
native basis. That comparison isolates the resource effect of retaining only
hardware-edge cost terms across the sparse-connectivity topologies tested here.
The Heron implementation-path study separately compares the Opt-3 CZ and
Opt-3 fractional controls with complete Block~XY paths on both targets. This
factorial comparison separates target exposure from the additional effects of
fusion, transport order, and parameter-binding order.

The resource-scaling sweep uses the joint bipartite cost model of
Eq.~\mainref{eq:hamiltonian}. A
seeded standard-normal data matrix is formed, the first \(k\) samples are nudged
by \(+2\) on the first \(\min(m,D)\) features, and the resulting objective is
converted to Ising form without cardinality penalties because the Block~XY
mixer enforces the budgets. The sweep covers five decision-register widths and
20 transpiler seeds per cell, with QAOA depths \(p=1,\ldots,4\), the IQM Emerald
square-lattice topology and IBM Heron R3 heavy-hex topology, and each target's
native basis. It also reports retained sample--feature coupling mass on
submitted 32--64-qubit fixed-angle layouts and the submitted Heron resources
used for the runs through 64 decision qubits.

For hardware-aligned coupling-grouped execution, the diagonal fields are always
executable as single-qubit rotations, while the bilinear-CG angle
$\gamma_{\mathcal S\mathcal F}$ acts only through sample--feature couplings
retained on the selected hardware patch. Their retained mass is
\[
  R_{\mathcal S\mathcal F}
  =
  \sum_{\substack{i\in[N],\,j\in[D]\\(i,N+j)\in E_{\mathrm{hw}}}}
    \left|W_{ij}\right|,
\]
reported together with its ratio to the dense cross-coupling mass. This quantity
depends on the calibrated instance and selected hardware patch. When
$R_{\mathcal S\mathcal F}$ is negligible, a coupling-grouped hardware run has
little bilinear signal to test and mainly measures feasible sparse-cost sampling
and diagonal phase controls.

For each 32--64-qubit condition, the exact sparse optimum is obtained by
enumerating selections on the endpoints of the retained cross edges and filling
the remaining budget positions by diagonal cost. The selected samples and
features are then rescored on the dense objective to compute
\(g_{\rm dense}\) in Eq.~\mainref{eq:dense_reference_gain_retained}.

The hardware-aligned Block~XY implementation has lower submitted two-qubit
count and depth than the dense-cost implementation at every tested width and
depth on both sparse-connectivity targets. The largest absolute savings occur
on the IBM Heron R3 heavy-hex target,
where dense sample--feature costs require more routing. Across the native-basis
comparisons, the median aligned-to-dense two-qubit-count ratio is below one for both
targets.
The resource controls separate two effects: retaining only hardware-edge cost
terms lowers sparse-connectivity routing pressure, and fractional scheduling
changes the implementation path for the same logical problem.

Table~\ref{tab:cgxy_fixed_angle_condition_details} lists the retained fraction
of dense sample--feature coupling mass and the strict and repaired best-energy
gaps for the 32--64-qubit Credit Card conditions. These quantities connect the
sparse Hamiltonian executed on hardware to its dense-objective rescoring and
sparse classical reference.

\subsection{IBM Heron R3 Run Protocols}
\label{supp:hardware_heron_execution}

The IBM Heron R3 study contains 169 fixed circuits in five submitted groups.
The planted panel uses eight instances, ten 4096-shot repeats per instance, and
the simulation-selected fully grouped \(p=3\) schedule. The Credit Card
comparison uses tied, bilinear-CG, and fully grouped schedules at \(p=3\) and
\(p=5\), with ten 4096-shot repeats per condition. Fixed-angle scaling uses one
8192-shot run at 32 and 56 qubits and three 8192-shot repeats at 36, 52, and 64
qubits. Warm-start and matched random-feasible controls use three 8192-shot
repeats at 36, 52, and 64 qubits.

The six Credit Card 20-qubit conditions were interleaved by repeat within one
Runtime job. The execution-calibration record passed the predeclared patch
acceptance checks.

All circuits use preselected hardware patches, register-preserving variable
placement, hardware-edge sparsification, the fractional-gate implementation,
Block~XY fusion, edge-colored transport ordering, and layout-aware decoding.
The 20-qubit conditions use exact-feasible basis initialization. The
32--64-qubit fixed-angle conditions use objective-blind feasible basis states. Warm-start
and random-feasible states and their tensor-network-selected angles are fixed
before hardware execution as described in
Section~\ref{supp:qaoa_angle_protocols}. Applying the final-layout decode map
and the recorded within-register permutation restores the logical register
assignment and original variable labels before budget checks and scoring.

\subsection{IBM Heron R3 Sampling Results}
\label{supp:three_tier_cross_validation}

At 20 qubits, BK hit rate is the primary sampling-tail metric. Supplementary
Table~\ref{tab:cgxy_fully_grouped_qpu_replay_p3} gives the probability of
sampling a decoded feasible solution at or below the predeclared threshold on
each planted instance. Threshold hits occur on
\CGXYFullyGroupedQPUHitInstances{} instances.
Table~\mainref{tab:creditcard_bilinear_cg_best_known_replay} gives the matched
Credit Card comparison. Fully grouped angles have the highest measured BK rate
at both depths, reaching \CreditTwentyQPFiveFullyGroupedBKRatePct{} at \(p=5\)
versus \CreditTwentyQPFiveTiedBKRatePct{} for tied XY-QAOA.

\paragraph{Repeat-level BK variation.} For tied XY-QAOA, bilinear-CG, and fully grouped CG-XY-QAOA, respectively, the pooled BK hit counts at $p=3$ are 19, 22, and 161 of 40{,}960 shots. Across the ten predeclared 4096-shot repetitions, the corresponding median rates [minimum, maximum] are 0.037\% [0\%, 0.098\%], 0.049\% [0\%, 0.12\%], and 0.42\% [0.22\%, 0.56\%]. At $p=5$, the pooled counts are 59, 107, and 1{,}503 of 40{,}960 shots, and the repeat-level rates are 0.15\% [0.098\%, 0.22\%], 0.24\% [0.073\%, 0.42\%], and 3.7\% [3.3\%, 4.3\%].

For the 32--64-qubit conditions, exact-budget mass measures feasible-sector
retention and best-energy gaps measure the observed low-energy tail.
Table~\mainref{tab:hardware_run_summary} shows that every fixed-angle condition
remains far above chance feasibility. At 36, 52, and 64 qubits, warm starts
close \CreditWarmStartGapClosedRangePct{} of the fixed-angle best-energy gap to
the strict-feasible dense classical reference. Matched random-feasible controls
do not close that gap, and the strict-feasible classical reference remains
lower at every width.

Table~\ref{tab:cgxy_fixed_angle_condition_details} lists retained coupling,
exact-budget mass, and strict and repaired best-energy gaps for the 32--64-qubit
conditions. Band-1 repair accepts samples within one count of each budget and
greedily flips bits to restore \((k,m)\), choosing each flip by its immediate
sparse-objective energy change. Repair of the 36-qubit warm-start
condition matches the sparse classical reference. Repair is postprocessing and
does not enter the strict exact-budget mass or BK-rate metrics.

\begin{table}[!htbp]
\centering
\footnotesize
\setlength{\tabcolsep}{3.5pt}
\renewcommand{\arraystretch}{1.12}
\caption[Condition-level Credit Card hardware results]{\emph{Condition-level Credit Card hardware results.} Warm starts reduce the dense gap relative to fixed-angle and random-feasible initialization at 36, 52, and 64 qubits. Band-1 repair matches the sparse classical reference for the 36-qubit warm-start condition. The table reports the hardware-supported fraction of dense sample--feature coupling mass and decoded exact-budget mass. For repeated conditions, exact-budget entries are repetition means [minimum, maximum]. Single-run entries show the observed fraction without brackets. Dense and band-1 gaps are reported in the corresponding QUBO objective's energy units, with lower values better. Dense gap is hardware best minus the strict-feasible dense classical reference. Band-1 gap uses the repaired energy and the sparse classical reference.}
\label{tab:cgxy_fixed_angle_condition_details}
\TableRowsSetup
\begin{tabular}{@{}rrlrrrr@{}}
\TableHideRows
\toprule
Qubits & $p$ & Condition & Retained cross mass & Exact-budget mass & Dense gap & Band-1 gap \\
\midrule
\TableBodyRows
32 & 3 & Fixed-angle & 4.6\% & 56.3\% & +6.3 & +0.2 \\
\midrule
36 & 3 & Fixed-angle & 3.6\% & 47.4\% [47.2\%, 47.5\%] & +19.9 & +1.8 \\
36 & 3 & Warm-start & 3.6\% & 49.1\% [48.7\%, 49.5\%] & +1.3 & +0.0 \\
36 & 3 & Random-feasible & 3.6\% & 50.9\% [49.9\%, 51.5\%] & +38.7 & +8.6 \\
\midrule
52 & 2 & Fixed-angle & 2.3\% & 31.2\% [30.9\%, 31.5\%] & +103.8 & +12.7 \\
52 & 2 & Warm-start & 2.3\% & 35.1\% [34.6\%, 35.4\%] & +18.0 & +4.0 \\
52 & 2 & Random-feasible & 2.3\% & 32.2\% [32.1\%, 32.2\%] & +105.4 & +9.1 \\
\midrule
56 & 2 & Fixed-angle & 2.6\% & 42.2\% & +113.4 & +3.2 \\
\midrule
64 & 2 & Fixed-angle & 2.1\% & 25.5\% [24.9\%, 25.9\%] & +84.9 & +15.2 \\
64 & 2 & Warm-start & 2.1\% & 33.5\% [33.3\%, 33.9\%] & +18.6 & +13.4 \\
64 & 2 & Random-feasible & 2.1\% & 32.3\% [31.9\%, 33.1\%] & +117.9 & +11.3 \\
\bottomrule
\end{tabular}
\end{table}

\FloatBarrier

\end{document}